\documentclass[11pt]{article}
\usepackage[margin=1in]{geometry}

\usepackage[english]{babel}
\usepackage[numbers,sort]{natbib}
\usepackage[colorinlistoftodos]{todonotes}
\renewcommand{\epsilon}{\varepsilon}

\usepackage[colorlinks=true,citecolor=blue,urlcolor=blue]{hyperref}

\usepackage{amsmath}

\usepackage{amsthm}
\usepackage{amssymb}
\usepackage{amsfonts}
\usepackage{dsfont}
\usepackage{thm-restate}
\usepackage{nicefrac}
\usepackage{footnote}
\usepackage[linesnumbered,boxed, vlined]{algorithm2e}
\usepackage{caption}
\usepackage{subcaption}
\usepackage{tikzsymbols}
\usepackage{tikz}

\tikzstyle{vertex}=[circle, draw,fill=gray!30, inner sep=0pt, minimum size=16pt]
\tikzstyle{svertex}=[circle, draw,fill=gray!30, inner sep=0pt, minimum size=10pt]
\tikzstyle{sgvertex}=[circle, draw,fill=gray!15, inner sep=0pt, minimum size=10pt]
\tikzstyle{ssgvertex}=[circle, draw,fill=gray!15, inner sep=0pt, minimum size=6pt]
\tikzstyle{sssgvertex}=[circle, draw,fill=gray!15, inner sep=0pt, minimum size=4pt]
\tikzstyle{sssagvertex}=[circle, draw,fill=gray!15, inner sep=0pt, minimum size=3.2pt]
\tikzstyle{ssssgvertex}=[circle, draw,fill=gray!15, inner sep=0pt, minimum size=2pt]

\usepackage[capitalize]{cleveref}
\usepackage{algorithm2e}
\usepackage{algorithmic}
\usepackage{enumerate}

\newtheorem{claim}{Claim}
\newcommand{\wt}[1]{\widetilde{#1}}

\DeclareMathOperator{\TV}{TV}

\newcommand{\e}{\epsilon}
\newcommand{\expect}{{\bf E}}

\theoremstyle{plain}
\newtheorem{thm}{Theorem}[section]

\newtheorem{lem}[thm]{Lemma}

\newtheorem{claim}[thm]{Claim}

\newtheorem*{cla*}{Claim}

\newtheorem{Def}[thm]{Definition}
\newtheorem{obs}[thm]{Observation}

\renewcommand{\geq}{\geqslant}
\renewcommand{\leq}{\leqslant}

\newcommand{\eps}{\varepsilon}





\begin{document}

\title{Online Matching with General Arrivals}
 \author{
   Buddhima Gamlath \\
   EPFL\\
  \texttt{buddhima.gamlath@epfl.ch}
     \and
   Michael Kapralov\footnote{Supported in part by ERC Starting Grant 759471.} \\
  EPFL\\
   \texttt{michael.kapralov@epfl.ch}
     \and
   Andreas Maggiori\\
  EPFL\\
   \texttt{andreas.maggiori@epfl.ch}
     \and
   Ola Svensson\\
  EPFL\\
   \texttt{ola.svensson@epfl.ch}
   \and 
   David Wajc\footnote{Work done in part while the author was visiting EPFL. Supported in part by NSF grants CCF-1618280, CCF-1814603, CCF-1527110, NSF CAREER award CCF-1750808, and a Sloan Research Fellowship.} \\
   CMU \\
   \texttt{dwajc@cs.cmu.edu}
 }
\date{}

\maketitle

\pagenumbering{gobble}
\begin{abstract}
  The online matching problem was introduced by Karp, Vazirani and Vazirani nearly three decades ago. In that seminal work, they studied this problem in bipartite graphs with vertices arriving only on one side, and presented optimal deterministic and randomized algorithms for this setting.
  In comparison, more general arrival models, such as edge arrivals and general vertex arrivals,  have proven more challenging and positive results are known only for various relaxations of the problem. In particular, even the basic question of whether randomization allows one to beat the trivially-optimal deterministic competitive ratio of $\nicefrac{1}{2}$ for either of these models was open. In this paper, we resolve this question for both these natural arrival models, and show the following. 
  \begin{enumerate}
      \item For edge arrivals, randomization does not help --- no randomized algorithm is better than $\nicefrac{1}{2}$ competitive.
      \item For general vertex arrivals, randomization helps --- there exists a randomized $(\nicefrac{1}{2}+\Omega(1))$-competitive online matching algorithm.
  \end{enumerate}
\end{abstract}

\newpage
\pagenumbering{arabic}

\section{Introduction}\label{sec:intro}
Matching theory has played a prominent role in the area of combinatorial optimization, with many applications \cite{lovasz2009matching,schrijver2003combinatorial}. Moreover, many fundamental techniques and concepts in combinatorial optimization can trace their origins to its study, including the primal-dual framework \cite{kuhn1955hungarian}, proofs of polytopes' integrality beyond total unimodularity \cite{edmonds1965maximum}, and even the equation of efficiency with polytime computability \cite{edmonds1965paths}.

Given the prominence of matching theory in combinatorial optimization, it comes as little surprise that the maximum matching problem was one of the first problems studied from the point of view of online algorithms and competitive analysis. 
In 1990, \citet*{karp1990optimal} introduced the online matching problem, and studied it under one-sided bipartite arrivals. For such arrivals, 
Karp et al.~noted that the trivial $\nicefrac{1}{2}$-competitive greedy algorithm (which matches any arriving vertex to an arbitrary unmatched neighbor, if one exists) is optimal among deterministic algorithms for this problem. 
More interestingly, they provided an elegant randomized online algorithm for this problem, called \textsc{ranking}, which achieves an optimal $(1-\nicefrac{1}{e})$ competitive ratio. (This bound has been re-proven many times over the years \cite{birnbaum2008line,goel2008online,devanur2013randomized,eden2018economic,feige2018tighter}.)
Online matching and many extensions of this problem under one-sided bipartite vertex arrivals were widely studied over the years, both under adversarial and stochastic arrival models.
See recent work \cite{huang2018match,huang2018online,huang2019tight,cohen2018randomized} and the excellent survey of \citet{mehta2013online} for further references on this rich literature.

Despite our increasingly better understanding of one-sided online bipartite matching and its extensions, the problem of online matching under more general arrival models, including edge arrivals and general vertex arrivals, has remained staunchly defiant, resisting attacks. In particular, the basic questions of whether the trivial $\nicefrac{1}{2}$ competitive ratio is optimal for the adversarial edge-arrival and general vertex-arrival models have remained tantalizing open questions in the online algorithms literature. In this paper, we answer both of these questions.

\subsection{Prior Work and Our Results}
Here we outline the most relevant prior work, as well as our contributions. Throughout, we say an algorithm (either randomized or fractional) has \emph{competitive ratio} $\alpha$, or equivalently is \emph{$\alpha$-competitive}, if the ratio of the algorithm's value (e.g., expected matching size, or overall value, $\sum_e x_e$) to OPT is at least $\alpha\leq 1$ for all inputs and arrival orders. As is standard in the online algorithms literature on maximization problems, we use upper bounds (on $\alpha$) to refer to hardness results, and lower bounds to positive results.

\vspace{-0.1cm}
\paragraph{Edge Arrivals.} Arguably the most natural, and the least restricted, arrival model for online matching is the edge arrival model. In this model, edges are revealed one by one, and an online matching algorithm must decide immediately and irrevocably whether to match the edge on arrival, or whether to leave both endpoints free to be possibly matched later.

On the hardness front, the problem is known to be strictly harder than the one-sided vertex arrival model of \citet{karp1990optimal}, which admits a competitive ratio of $1-\nicefrac1e \approx 0.632$.
In particular, \citet{epstein2013improved} gave an upper bound of $\frac{1}{1+\ln2} \approx 0.591$ for this problem,
recently improved by \citet{huang2019tight} to $2-\sqrt{2}\approx 0.585$. (Both bounds apply even to online algorithms with preemption; i.e., allowing edges to be removed from the matching in favor of a newly-arrived edge.)
On the positive side, as pointed out by \citet{buchbinder2018online}, the edge arrival model has proven challenging, and results beating the $\nicefrac{1}{2}$ competitive ratio were only achieved under various relaxations, including: random order edge arrival 
\cite{guruganesh2017online}, bounded number of arrival batches \cite{lee2017maximum}, on trees, either with or without preemption \cite{tirodkar2017maximum,buchbinder2018online}, and for bounded-degree graphs \cite{buchbinder2018online}.
The above papers all asked whether there exists a randomized $(\nicefrac{1}{2}+\Omega(1))$-competitive algorithm for adversarial edge arrivals (see also Open Question 17 in Mehta's survey \cite{mehta2013online}).

In this work, we answer this open question, providing it with a strong negative answer. In particular, we show that no online algorithm for \emph{fractional} matching (i.e., an algorithm which immediately and irrevocably assigns values $x_e$ to edge $e$ upon arrival such that $\vec{x}$ is in the fractional matching polytope $\mathcal{P}=\{\vec{x}\geq \vec{0}\mid \sum_{e\ni v}x_e \leq 1\,\, \forall v\in V\}$) is better than $\nicefrac{1}{2}$ competitive. As any randomized algorithm induces a fractional algorithm with the same competitive ratio, this rules out any randomized online matching algorithm which is better than deterministic algorithms.

\begin{thm}\label{thm:edge-intro}
	No fractional online algorithm is $\nicefrac{1}{2}+\Omega(1)$ competitive for online matching under adversarial edge arrivals, even in bipartite graphs.
\end{thm}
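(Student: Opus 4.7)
The approach is Yao's minimax principle. As the excerpt observes, any randomized online matching algorithm induces a fractional online algorithm with the same competitive ratio (take edge-match marginals, which lie in the matching polytope by feasibility of each sampled matching), so it suffices to build a distribution $\mathcal{D}$ over bipartite edge-arrival instances such that every \emph{deterministic} fractional online algorithm $A$ satisfies
\[
    \mathbb{E}_{I \sim \mathcal{D}}\Bigl[\textstyle\sum_e x^A_e(I)\Bigr] \;\leq\; \Bigl(\tfrac{1}{2} + o(1)\Bigr)\cdot \mathbb{E}_{I \sim \mathcal{D}}[\opt(I)].
\]

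To construct $\mathcal{D}$, I would use a coin-driven branching process (equivalent to a fixed distribution over instances). The instance is revealed in $n$ phases; each phase introduces a small gadget that forces the algorithm to split fractional mass between two \emph{symmetric} continuations, after which a fresh fair coin, independent of $A$, selects which continuation is ``real'' and gets extended in the next phase. The losing branch is sealed off by a few extra edges that make it a local dead end. The intuition: by symmetry of the two continuations at the moment of commitment, the algorithm's expected fractional mass on the surviving branch is at most half of what the matching polytope allows it to place in the gadget, whereas $\opt$ always routes through the surviving branch and gains a full unit of matching per phase.

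A concrete candidate gadget is a two-edge ``fork'': at phase $i$ a new vertex $w_i$ arrives together with edges $(w_i, a_i)$ and $(w_i, b_i)$, where $a_i, b_i$ are two candidate attachment vertices left from phase $i-1$. The polytope constraint at $w_i$ enforces $x_{w_i a_i} + x_{w_i b_i} \leq 1$; by coin symmetry the expected mass on the surviving edge is at most $1/2$, while $\opt$ deterministically matches $w_i$ via that surviving edge, contributing $1$ to $\opt$ per phase. Summing over $n$ phases and adding $o(n)$ boundary losses gives the claimed $(\tfrac12 + o(1))$-competitive upper bound, and hence the $\nicefrac12 + \Omega(1)$ impossibility.

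The main obstacle I expect is accounting for mass the algorithm places on the \emph{losing} branch. That mass is not automatically wasted: if $b_i$ is stranded after the coin picks $a_i$, the algorithm could still credit $x_{w_i b_i}$ toward its matching value. The fix is to make every losing vertex saturated in $\opt$ through its own local ``terminator'' sub-gadget, so that its contribution to $\opt$ grows in lock-step with any mass the algorithm squeezes out of the discarded edge. Two constraints have to be balanced simultaneously: the terminator must be rich enough to cap the algorithm, yet lean enough that it does not inflate $\opt$ and dilute the ratio. Bipartiteness of the whole construction then forces a careful parity assignment across phases (so that $w_i, a_i, b_i$ alternate sides consistently). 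I would close the argument with a phase-by-phase charging / coupling between the algorithm's fractional solution and an explicit integral $\opt$-matching on each realization, turning the qualitative branching picture into the quantitative $\tfrac12 + o(1)$ bound.
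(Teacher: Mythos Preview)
Your approach diverges from the paper's and, as written, has a genuine gap.

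The paper does not use a Yao-style random branching. It fixes a single deterministic bipartite graph $G_n$ (with $n$ vertices per side) and a deterministic arrival order: in round $i$, the edges of a perfect matching on the first $i$ left and first $i$ right vertices arrive. The only uncertainty is the \emph{stopping time}: the adversary may halt after any round $k$, at which point $\opt = k$. An $\alpha$-competitive fractional algorithm must therefore satisfy, for every $k$, that the mass accumulated through round $k$ is at least $\alpha k$, together with all vertex-degree constraints. These are packaged as a linear program in the edge variables; an explicit feasible dual of value $\tfrac12 + \tfrac{1}{2n+2}$ certifies the bound. The conceptual novelty is that the hardness stems from time-horizon uncertainty on a \emph{known} graph and arrival order, not from branching over vertex identities.

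Your branching construction runs into exactly the obstacle you name: fractional mass on the losing fork edge still counts toward $\sum_e x_e$. Your terminator fix is not carried through, and a direct check shows a single gadget falls well short of $\tfrac12$: with $x_{(w,a)}+x_{(w,b)}\le 1$ set before the coin, and a terminator edge $(t,b)$ arriving after the coin picks $a$ (so $x_{(t,b)}\le 1 - x_{(w,b)}$ at $b$), the algorithm can take $x_{(w,a)}=x_{(w,b)}=x_{(t,b)}=\tfrac12$ for value $\tfrac32$ against $\opt=2$, i.e., ratio $\tfrac34$. Chaining phases and exploiting depleted degree at the surviving vertex might tighten this, but you have not shown that it does, and the construction is under-specified even at the level of where the two candidates $a_{i+1}, b_{i+1}$ come from when only one of $a_i, b_i$ survives. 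It is telling that the prior edge-arrival upper bounds discussed in the introduction are also of the identity-uncertainty flavor and stall around $0.585$--$0.591$; reaching $\tfrac12$ required the paper's new prefix-hardness idea rather than a refinement of the branching template you propose.
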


This result shows that the study of relaxed variants of online matching under edge arrivals is not only justified by the difficulty of beating the trivial bound for this problem, 
but rather by its \emph{impossibility}.

\vspace{-0.1cm}
\paragraph{General Vertex Arrivals.} In the online matching problem under vertex arrivals, 
vertices are revealed one at a time, together with their edges to their previously-revealed neighbors. An online matching algorithm must decide immediately and irrevocably upon arrival of a vertex whether to match it (or keep it free for later), and if so, who to match it to.
The one-sided bipartite problem studied by \citet{karp1990optimal} is precisely this problem when all vertices of one side of a bipartite graph arrive first. As discussed above, for this one-sided arrival model, the problem is thoroughly understood (even down to lower-order error terms \cite{feige2018tighter}). 
\citet{wang2015two} proved that general vertex arrivals are strictly harder than one-sided bipartite arrivals, providing an upper bound of $0.625 < 1-\nicefrac{1}{e}$ for the more general problem, later improved by \citet{buchbinder2018online} to $\frac{2}{3+\phi^2}\approx 0.593$. 
Clearly, the general vertex arrival model is no harder than the online edge arrival model
but is it \emph{easier}? The answer is ``yes'' for \emph{fractional} algorithms, as shown by combining our \Cref{thm:edge-intro} with the $0.526$-competitive fractional online matching algorithm under general vertex arrivals of \citet{wang2015two}. For \emph{integral} online matching, however, the problem has proven challenging, and the only positive results for this problem, too, are for various relaxations, such as 
restriction to trees, either with or without preemption \cite{tirodkar2017maximum,chiplunkar2015randomized,buchbinder2018online}, for bounded-degree graphs \cite{buchbinder2018online}, or (recently) allowing vertices to be matched during some known time interval \cite{huang2018match,huang2019tight}.
	
We elaborate on the last relaxation above. In the model recently studied by \citet{huang2018match,huang2019tight} vertices have both arrival and departure times, and edges can be matched whenever both their endpoints are present. (One-sided vertex arrivals is a special case of this model with all online vertices departing immediately after arrival and offline vertices departing at $\infty$.) 
We note that any $\alpha$-competitive online matching under general vertex arrivals is $\alpha$-competitive in the less restrictive model of Huang et al.
As observed by Huang et al., for their model an optimal approach might as well be greedy; i.e., an unmatched vertex $v$ should always be matched at its departure time if possible. In particular, \citet{huang2018match,huang2019tight}, showed that the \textsc{ranking}  algorithm of Karp et al.~is optimal in this model, giving a competitive ratio of $\approx 0.567$.
For general vertex arrivals, however, \textsc{ranking} (and indeed any maximal matching algorithm) is no better than $\nicefrac{1}{2}$ competitive, as is readily shown by a path on three edges with the internal vertices arriving first. Consequently, new ideas and algorithms are needed.

The  natural open question for general vertex arrivals is whether a competitive ratio of $(\nicefrac{1}{2}+\Omega(1))$ is achievable by an \emph{integral} randomized algorithm, without any assumptions (see e.g., \cite{wang2015two}).
In this work, we answer this question in the affirmative: 

\begin{thm}\label{thm:vertex-intro}
	There exists a $\left(\nicefrac{1}{2}+\Omega(1)\right)$-competitive randomized online matching algorithm for 
	general adversarial vertex arrivals.
\end{thm}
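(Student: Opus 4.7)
The plan is to combine an online \emph{fractional} matching algorithm that beats $\nicefrac{1}{2}$ with a carefully designed online randomized rounding. The starting point is the $0.526$-competitive fractional online matching algorithm of Wang and Wong for general vertex arrivals mentioned above, which, on the arrival of each vertex $v$, produces nonnegative fractional values $x_e$ on the newly revealed edges while keeping $\vec{x}$ in the fractional matching polytope $\mathcal{P}=\{\vec{x}\ge\vec{0}\mid \sum_{e\ni v} x_e\le 1\}$. If we can round these fractional values to an integral matching \emph{online} while preserving a fraction $\beta$ of the fractional mass in expectation, we obtain an integral competitive ratio of $\beta\alpha$ with $\alpha\approx 0.526$; to strictly beat $\nicefrac{1}{2}$ we thus need $\beta>\frac{1}{2\alpha}\approx 0.95$.

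For the rounding itself, I would use the following scheme: when $v$ arrives and new fractional mass is placed on edges between $v$ and earlier neighbors, sample at most one such edge $(u,v)$ with probability proportional to the just-placed mass, and match $v$ to $u$ only if $u$ is still free in the integral matching. The goal is to prove $\Pr[e\in M]\ge\beta\cdot x_e$ for every edge. The analysis hinges on controlling the probability that $u$ is still free at the time of $v$'s arrival, using the fact that $u$'s total accumulated fractional degree is at most $1$ throughout, and coupling the sampling decisions across arrivals (possibly with an additional $\Theta(1)$ ``hedging'' probability that an arriving vertex abstains from sampling altogether) so that a negative-correlation inequality can be invoked to lower-bound the probability that $u$ is available.

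The main obstacle is that naive online rounding of a fractional matching incurs a factor-of-two loss (each endpoint of $e$ might already be matched by the time $e$ is sampled), which with $\alpha\approx 0.526$ would bring us right back to the trivial $\nicefrac{1}{2}$ bound. Thus, the most delicate step is pushing the rounding loss strictly below this barrier, either by exploiting the fine structure of the Wang--Wong algorithm (e.g., the fact that its fractional mass is spread across neighbors in a bounded way) or by designing a tailored online fractional algorithm whose output is easier to round. If the purely-fractional-then-round route proves too lossy, a backup plan is to directly maintain a distribution over integral matchings, updating it on each arrival by probabilistic augmentations along just-revealed short augmenting paths---exactly the obstruction identified by the tight three-edge-path example that forces any maximal matching strategy down to $\nicefrac{1}{2}$. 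The analysis would then maintain an invariant of the form ``every edge of the optimum offline matching on the revealed graph is present in the random integral matching with probability $\nicefrac{1}{2}+\Omega(1)$'' throughout the process.
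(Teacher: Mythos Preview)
Your high-level plan---take Wang--Wong's fractional algorithm and round it online---is exactly the paper's strategy, but the concrete execution you sketch has a genuine gap at the step you yourself flag as ``most delicate.''

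First, the hope that a negative-correlation inequality will lower-bound the probability that $u$ is free is misplaced. In the natural one-sample rounding (pick a neighbor $u$ with probability $z_u = x_{uv}/\Pr[u\text{ free}]$, match if $u$ is free), the events $F_u=\{u\text{ is free}\}$ are \emph{positively} correlated in general, not negatively. The paper spends most of its technical effort (\Cref{sec:boundvar} and the ``key lemma'') precisely on bounding how strong this positive correlation can be, via a delicate analysis of the component structure of a random directed graph of primary/secondary choices. There is no off-the-shelf negative-correlation tool that applies here.

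Second, your single-sample scheme is essentially the paper's warmup (\Cref{algo:warmup_algorithm}), which is shown to be \emph{exactly} $\nicefrac{1}{2}$-competitive---lossless only for the parameters $f=f_1,\beta=2$ that give a $\nicefrac{1}{2}$-competitive fractional solution. As soon as you push the fractional algorithm past $\nicefrac{1}{2}$, one has $\sum_u z_u>1$ and normalization eats back the gain. The paper's fix is not to find a better inequality for the one-sample scheme but to add a \emph{second} sample taken with small probability $\sqrt{\eps}$, and to show that this second chance compensates for the normalization loss for most edges. Your proposal never introduces this second sample, so it is stuck at $\nicefrac{1}{2}$.

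Third, the parameter regime you target is the hard one. You aim to keep the full $0.526$ fractional ratio and absorb at most a $0.95$ rounding loss; the paper instead dials the fractional algorithm down to $\nicefrac{1}{2}+O(\eps)$ (via $f=f_{1+2\eps}$, $\beta=2-\eps$) so that $\sum_u z_u\le 1+O(\eps)$ and $z_u\le O(\sqrt{\eps})$ for every relevant neighbor (\Cref{lem:propertiesWW}). These structural bounds are what make the two-sample rounding \emph{almost lossless}---a $(1-\eps^2)$ factor on good vertices (\Cref{thm:good_vertices})---rather than merely $0.95$-lossless. Without them, your analysis has no handle on the normalization factor or on the conditional-freeness terms in~\eqref{prob-match}.

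The backup plan of maintaining a distribution over integral matchings and augmenting along short paths is too vague to evaluate; nothing in it addresses why the correlations would cooperate.
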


\subsection{Our Techniques}
\label{sec:techniques}

\paragraph{Edge Arrivals.}
All prior upper bounds in the online literature \cite{karp1990optimal,epstein2013improved,huang2019tight,buchbinder2018online,feige2018tighter} can be rephrased as upper bounds for \emph{fractional} algorithms; i.e., algorithms which immediately and irrevocably assign each edge $e$ a value $x_e$ on arrival, so that $\vec{x}$ is contained in the fractional matching polytope, $\mathcal{P}=\{\vec{x}\geq \vec{0} \mid \sum_{e\ni v} x_e \leq 1\,\,\forall v\in V\}$. 
With the exception of \cite{buchbinder2018online}, the core difficulty of these hard instances is uncertainty about ``identity'' of vertices (in particular, which vertices will neighbor which vertices in the following arrivals). 
Our hardness instances rely on uncertainty about the ``time horizon''. In particular, the underlying graph, vertex identifiers, and even arrival order are known to the algorithm, but the number of edges of the graph to be revealed (to arrive) is uncertain. Consequently, an $\alpha$-competitive algorithm must accrue high enough value up to each arrival time to guarantee a high competitive ratio at all points in time. As we shall show, for competitive ratio  $\nicefrac{1}{2}+\Omega(1)$, this goal is at odds with the fractional matching constraints, and so such a competitive ratio is impossible.
In particular, we provide a family of hard instances and formulate their prefix-competitiveness and matching constraints as linear constraints to obtain a linear program whose objective value bounds the optimal competitive ratio. Solving the obtained LP's dual, we obtain by weak duality the claimed upper bound on the optimal competitive ratio.

\paragraph{General Vertex Arrivals.}
Our high-level approach here will be to round online a fractional online matching algorithm's output, specifically that of \citet{wang2015two}. While this approach sounds simple, there are several obstacles to overcome. First, the fractional matching polytope is not integral in general graphs, where a fractional matching may have value, $\sum_e x_e$, some $\nicefrac{3}{2}$ times larger than the optimal matching size. (For example, in a triangle graph with value $x_e=\nicefrac{1}{2}$ for each edge $e$.)
Therefore, any general rounding scheme must lose a factor of $\nicefrac{3}{2}$ on the competitive ratio compared to the fractional algorithm's value, and so to beat a competitive ratio of  $\nicefrac{1}{2}$ would require an online fractional matching with competitive ratio $> \nicefrac{3}{4} > 1-\nicefrac{1}{e}$, which is impossible. To make matters worse, even in bipartite graphs, for which the fractional matching polytope is integral and offline lossless rounding is possible  \cite{ageev2004pipage,gandhi2006dependent}, \emph{online} lossless rounding of fractional matchings is impossible, even under one-sided vertex arrivals  \cite{cohen2018randomized}. 

Despite these challenges, we show that a slightly better than $\nicefrac{1}{2}$-competitive fractional matching computed by the algorithm of \cite{wang2015two} can be rounded online without incurring too high a loss, yielding $(\nicefrac{1}{2}+\Omega(1))$-competitive randomized algorithm for online matching under general vertex arrivals.

To outline our approach, we first consider a simple method to round matchings online. When vertex $v$ arrives, we pick an edge $\{u,v\}$ with probability $z_u = x_{uv}/\Pr[u \mbox{ free when $v$ arrives}]$, and add it to our matching if $u$ is free.

If $\sum_u z_u \leq 1$, this allows us to pick at most one edge per vertex and have each edge $e=\{u,v\}$ be in our matching with the right marginal probability, $x_e$, resulting in a lossless rounding. Unfortunately, we know of no better-than-$\nicefrac12$-competitive fractional algorithm for which this rounding guarantees $\sum_u z_u \leq 1$.

However, we observe that, for the correct set of parameters, the fractional matching algorithm of
Wang and Wong~\cite{wang2015two} makes $\sum_u z_u$ close to one, while still ensuring a better-than-$\nicefrac{1}{2}$-competitive fractional solution. 
Namely, as we elaborate later in \Cref{sec:imrpoved-algorithm}, we set the parameters of their algorithm so that $\sum_u z_u \leq 1 + O(\e)$, while retaining a competitive ratio of $1/2 + O(\e)$.
Now consider the same rounding algorithm with normalized probabilities: I.e., on $v$'s 
arrival, sample a neighbor $u$ with probability $z'_u = z_u/\max\{1, \sum_u z_u\}$ and match if $u$ is free. 
As the sum of $z_u$'s is slightly above one in the worst case, this approach does not drastically reduce the 
competitive ratio. 
But the normalization factor is still too significant compared to the competitive ratio of the fractional solution, 
driving the competitive ratio of the rounding algorithm slightly below $1/2$.

To account for this minor yet significant loss, we therefore augment the simple algorithm by allowing it, 
with small probability (e.g., say $\sqrt{\e}$), to sample a second neighbor $u_2$ for each arriving vertex $v$, 
again with probabilities proportional to $z'_{u_2}$:
If the first sampled choice, $u_1$, is free, we match $v$ to $u_1$. Otherwise, if the second choice, $u_2$, 
is free, we match $v$ to $u_2$. 
What is the marginal probability that such an approach matches an incoming vertex $v$ to a given neighbor $u$? Letting $F_u$ denote the event that  $u$ is free when $v$ arrives, this probability is precisely
\begin{equation}\label{eq:03hg329hg}
  \Pr[F_u] \cdot \left(z'_{u} + z'_u\cdot \sqrt{\e} \cdot \sum_{w} z'_w \cdot (1-\Pr[F_w \mid F_u])  \right).
\end{equation}
Here the first term in the parentheses corresponds to the probability that $v$ matches to $u$ via the first choice, 
and the second term corresponds to the same happening via the second choice (which is only taken when the first choice 
fails). 

Ideally, we would like \eqref{eq:03hg329hg} to be at least $x_{uv}$ for all edges, which would imply a lossless rounding.
However, as mentioned earlier, this is difficult and in general impossible to do, even in much more restricted settings including one-sided bipartite vertex arrivals. We therefore settle for showing  that~\eqref{eq:03hg329hg} is at least $x_{uv} = \Pr[F_u]\cdot z_u$ for \emph{most} edges (weighted by $x_{uv}$). Even this goal, however, is challenging and requires a nontrivial understanding of the correlation structure of the random events $F_u$. To see this, note that for example if the $F_w$ events are perfectly positively correlated, i.e., $\Pr[F_w \mid F_u]=1$, then the possibility of picking $e$ as a second edge does not increase this edge's probability of being matched \emph{at all} compared to if we only picked a single edge per vertex. This results in $e$ being matched with probability $\Pr[F_u]\cdot z'_u = \Pr[F_u]\cdot z_u / \sum_w z_w = x_{uv} / \sum_w z_w$, which does not lead to any gain over the $\nicefrac1{2}$ competitive ratio of greedy. 
Such problems are easily shown not to arise if all $F_u$ variables are independent or negatively correlated. Unfortunately, positive correlation does arise from this process, and so we the need to control these positive correlations.

The core of our analysis is therefore dedicated to showing that even though positive correlations do arise, they are by and large rather weak. Our main technical contribution consists of developing techniques for bounding such positive correlations. The idea behind the analysis is to consider the primary choices and secondary choices of vertices as defining a graph, and showing that after a natural pruning operation that reflects the structure of dependencies, most vertices are most often part of a very small connected component in the graph. The fact that connected components are typically very small is exactly what makes positive correlations weak and results in the required lower bound on~\eqref{eq:03hg329hg} for most edges (in terms of $x$-value), which in turn yields our $\nicefrac{1}{2}+\Omega(1)$ competitive ratio.
\section{Edge Arrivals}\label{sec:edge}

In this section we prove the asymptotic optimality of the greedy algorithm for online matching under adversarial edge arrivals.
As discussed briefly in \Cref{sec:intro}, our main idea will be to provide a ``prefix hardness'' instance, where an underlying input and the arrival order is known to the online matching algorithm, but the prefix of the input to arrive (or ``termination time'') is not. Consequently, the algorithm must accrue high enough value up to each arrival time, to guarantee a high competitive ratio at all points in time. As we show, the fractional matching constraints rule out a competitive ratio of $\nicefrac{1}{2}+\Omega(1)$ even in this model where the underlying graph is known.

\begin{thm}\label{thm:hardness}
There exists an infinite family of bipartite graphs with maximum degree $n$ and edge arrival order for which any online matching algorithm is at best $\left(\frac{1}{2}+\frac{1}{2n+2}\right)$-competitive.
\end{thm}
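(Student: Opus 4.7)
The plan is to pass through LP duality, following the outline in Section~\ref{sec:techniques}. Fix $n$, and let $G_n = (V_n, E_n)$ be a bipartite graph of maximum degree $n$ together with an ordering $e_1, \ldots, e_m$ of its edges; I will choose $(G_n, e_1, \ldots, e_m)$ carefully below. Write $G_{[k]}$ for the subgraph on the first $k$ edges and $\OPT_k = \OPT(G_{[k]})$ for its maximum matching size. Any fractional online algorithm assigns each edge $e_i$ a value $x_i \geq 0$ on arrival; to be $\alpha$-competitive on every revealed prefix it must, for every $k$ and every vertex $v$, satisfy
\[
\sum_{i \leq k} x_i \;\geq\; \alpha\, \OPT_k
\qquad\text{and}\qquad
\sum_{e \ni v} x_e \;\leq\; 1.
\]
Hence the best competitive ratio achievable on $(G_n, e_1, \ldots, e_m)$ is at most the optimum of the linear program that maximises $\alpha$ subject to these constraints over the variables $(\alpha, x)$. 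It suffices to upper bound this LP optimum by $\tfrac12 + \tfrac{1}{2n+2}$.

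Introducing dual multipliers $\lambda_k \geq 0$ per prefix constraint and $\mu_v \geq 0$ per matching constraint, the dual LP is
\[
\min \sum_v \mu_v \quad \text{s.t.}\quad \sum_k \OPT_k\cdot\lambda_k \;\geq\; 1 , \quad \mu_u + \mu_v \;\geq\; \Lambda_i := \sum_{k \geq i} \lambda_k \ \ \text{for each } e_i=\{u,v\}.
\]
By weak duality, any feasible dual solution upper bounds the primal. For fixed $\lambda$ the inner minimisation over $\mu$ is exactly a fractional vertex cover of $G_n$ with edge weights $\Lambda_i$, which in bipartite graphs coincides (by K\"onig's theorem / LP duality) with the maximum $\Lambda$-weighted matching $\max_{x\in\mathcal{P}} \sum_i \Lambda_i x_i$. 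The plan is therefore to design $(G_n,\text{ordering},\lambda)$ so that $\sum_k \OPT_k \lambda_k \geq 1$ yet no matching of $G_n$ can collect more than $\tfrac{n+2}{2(n+1)} = \tfrac12 + \tfrac{1}{2n+2}$ worth of $\Lambda$-weight.

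The hard instance $G_n$ will be a bipartite graph of maximum degree $n$ whose edge ordering creates $\Theta(n)$ distinct OPT levels, with $\lambda$ concentrated on those prefixes at which $\OPT$ strictly increases. This makes $\Lambda_i$ a nonincreasing step function in $i$, so edges revealed earlier carry more dual weight. The graph is then engineered to prevent any single matching from piling up too many early edges: roughly, early edges are supported at a high-degree hub, forcing any matching that collects several of them to include correspondingly many later edges of small $\Lambda$. A careful matching/cover accounting pins the maximum $\Lambda$-weighted matching at exactly $\tfrac{n+2}{2(n+1)}$, giving the claimed bound.

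The main obstacle is the joint choice of $(G_n,\text{ordering},\lambda)$: the graph must expose enough matchings to reach high $\OPT$ at its prefixes, yet no single matching may accumulate too many high-$\Lambda$ edges. As a sanity check, for $n=2$ the ``two-pendant broom'' with edges $\{v,u_1\},\{v,u_2\},\{u_1,w_1\},\{u_2,w_2\}$ in that order already realises $\tfrac12 + \tfrac{1}{2\cdot 2+2} = \tfrac23$: the intermediate constraint $x_{\{v,u_1\}}\geq\alpha$ (forced by early stopping) combined with $x_{\{v,u_i\}}+x_{\{u_i,w_i\}}\leq 1$ and $x_{\{v,u_1\}}+x_{\{v,u_2\}}\leq 1$ pins $\alpha\leq\tfrac23$, and the dual with $\lambda_1=\lambda_3=\tfrac13$ certifies it. I expect the general construction to nest / iterate this hub gadget so that each added OPT level tightens the dual bound by just the right amount, with the final verification being the König matching/vertex-cover calculation on the $\Lambda$-weighted edges.
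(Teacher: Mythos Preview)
Your LP-duality framework is correct and matches the paper's approach exactly: formulate prefix-competitiveness plus matching constraints as a primal LP in $(\alpha,x)$, pass to the dual, and exhibit a feasible dual solution of value $\tfrac12+\tfrac1{2n+2}$. The K\"onig reformulation (inner minimisation over $\mu$ equals a maximum $\Lambda$-weighted matching) is a clean way to phrase the task, though the paper simply writes down the dual and verifies feasibility directly.

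The genuine gap is the construction itself. You never specify $G_n$ for general $n$, and the hub/broom idea you sketch does \emph{not} achieve the stated bound. Concretely, for the broom $v$--$u_1,\dots,u_n$ with pendants $u_i$--$w_i$ and your $n=2$ arrival order generalised, the binding constraints are $x_{\{v,u_1\}}\geq\alpha$ (prefix $1$), $\sum_{i\leq n}x_{\{v,u_i\}}\leq 1$ (hub), and prefix $n{+}1$ forcing $x_{\{u_1,w_1\}}\geq 2\alpha-1$, together with $x_{\{v,u_1\}}+x_{\{u_1,w_1\}}\leq 1$. These yield only $\alpha\leq\tfrac23$ for \emph{every} $n$, and $\tfrac23$ is in fact achievable on the broom for $n\geq 3$, so the construction is too weak already at $n=3$ (where you need $\tfrac58$). ``Nesting'' the gadget is left entirely unspecified, and there is no evidence it closes the gap.

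The paper's instance is quite different from a hub: $G_n$ is the $n\times n$ bipartite graph in which round $i$ reveals a perfect matching on $\{u_1,\dots,u_i\}\times\{v_1,\dots,v_i\}$, namely the edges $(u_j,v_{i-j+1})$. Every vertex has degree $n$, OPT after round $k$ is exactly $k$, and the optimum matching changes completely each round. The dual is then given in closed form: $c_k=\tfrac{2}{n(n+1)}$ for all $k$, and $\ell_j=r_j=\max\{0,\tfrac{n-2(j-1)}{n(n+1)}\}$, which one checks is feasible and has value $\tfrac12+\tfrac1{2n+2}$. You would need either this construction (or one of comparable richness) and an explicit dual certificate to complete the proof.
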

\begin{proof}
We will provide a family of graphs for which no fractional online matching algorithm has better competitive ratio. Since any randomized algorithm induces a fractional matching algorithm, this immediately implies our claim.
The $n^{th}$ graph of the family, $G_n=(U,V,E)$, consists of a bipartite graph with $|U|=|V|=n$ vertices on either side. We denote by $u_i\in U$ and $v_i\in V$ the $i^{th}$ node on the left and right side of $G_n$, respectively. Edges are revealed in $n$ discrete rounds. In round $i=1,2,\dots,n$, the edges of a perfect matching between the first $i$ left and right vertices arrive in some order. I.e., a matching of $u_1,u_2,\dots,u_i$ and $v_1,v_2,\dots,v_i$ is revealed. Specifically, edges $(u_j,v_{i-j+1})$ for all $i\geq j$ arrive. (See \Cref{fig:hard-edge-instance} for example.) Intuitively, the difficulty for an algorithm attempting to assign much value to edges of $OPT$ is that the (unique) maximum matching $OPT$ changes every round, and no edge ever re-enters $OPT$.

\newcount\mycount
\newcommand{\DrawRound}[2]{
	\begin{tikzpicture}[scale=0.66]
    \node[draw, color=white] at (1,-0.25) {\textcolor{black}{$U$}};
    \node[draw, color=white] at (3,-0.25) {\textcolor{black}{$V$}};
    
    \foreach \number in {1,...,#1}
    \node[draw,circle,inner sep=0.01cm] (u-\number) at (1,-\number) {$u_\number$};
    \foreach \number in {1,...,#1}
    \node[draw,circle,inner sep=0.01cm] (v-\number) at (3,-\number) {$v_\number$};
    
    \foreach \number in {1,...,#2}{
    	\foreach \i in {1,...,\number}{
    		\mycount=\number
    		\advance\mycount by -\i
    		\advance\mycount by 1
    		\draw [-, dashed] (u-\i) -- (v-\the\mycount);
    	}
    }
    \foreach \number in {#2}{
    	\foreach \i in {1,...,\number}{
    		\mycount=\number
    		\advance\mycount by -\i
    		\advance\mycount by 1
    		\draw [-, solid, line width=0.75pt] (u-\i) -- (v-\the\mycount);
    	}
    }
	\end{tikzpicture}
}

\newcommand{\subfigRound}[3] {
	\begin{subfigure}[t]{#3}
	\begin{center}
	\DrawRound{#1}{#2}
	\subcaption{round #2}
	\end{center}
	\end{subfigure}
}

\def\n{5}
\begin{figure}[h]
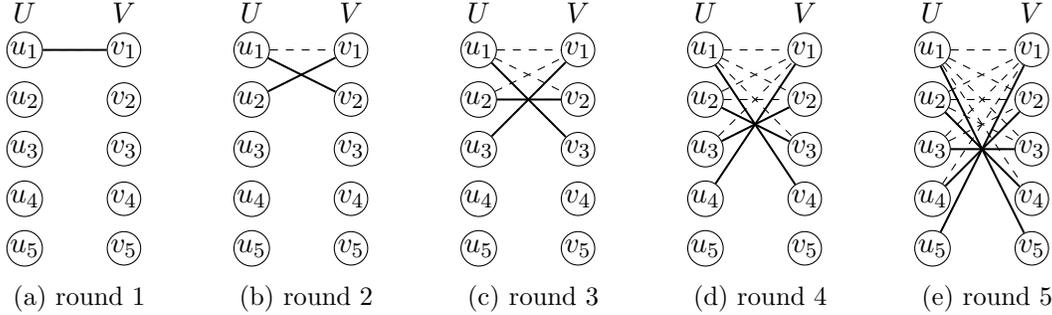

\begin{center}
\foreach \j in {1,...,\n}
{\subfigRound{\n}{\j}{0.175\textwidth}}
	\caption{$G_\n$, together with arrival order. Edges of  current (prior) round are solid (dashed).}
	\label{fig:hard-edge-instance}
\end{center}
\end{figure}
\let\n\relax

Consider some  $\alpha$-competitive fractional algorithm $\mathcal{A}$. We call the edge of a vertex $w$ in the (unique) maximum matching of the subgraph of $G_n$ following round $i$ the $i^{th}$ edge of $w$.
For $i\geq j$, denote by $x_{i,j}$ the value $\mathcal{A}$ assigns to the $i^{th}$ edge of vertex $u_j$ (and of $v_{i-j+1}$); i.e., to $(u_j,v_{i-j+1})$. 
By feasibility of the fractional matching output by $\mathcal{A}$, we immediately have that $x_{i,j}\geq 0$ for all $i,j$, as well as the following matching constraints for $u_j$ and $v_j$. (For the latter, note that the $i^{th}$ edge of $v_{i-j+1}$ is assigned value $x_{i,j}=x_{i,i-(i-j+1)+1}$ and so the $i^{th}$ edge of $v_j$ is assigned value $x_{i,i-j+1}$).
\begin{align}
    \sum_{i=j}^{n} x_{i,j}\leq 1. & \qquad  \text{($u_j$ matching constraint)}\label{match-const-left} \\
    \sum_{i=j}^{n} x_{i,i-j+1}\leq 1. & \qquad \text{($v_j$ matching constraint)}\label{match-const-right}
\end{align}

On the other hand, as $\mathcal{A}$ is $\alpha$-competitive, we have that after some $k^{th}$ round -- when the maximum matching has cardinality $k$ -- algorithm $\mathcal{A}$'s fractional matching  must have value at least $\alpha\cdot k$. (Else an adversary can stop the input after this round, leaving $\mathcal{A}$ with a worse than $\alpha$-competitive matching.) Consequently, we have the following competitiveness constraints.
\begin{equation}\label{comp-const}
\sum_{i=1}^k \sum_{j=1}^i x_{i,j} \geq \alpha \cdot k \qquad \forall k\in [n].
\end{equation}

Combining constraints \eqref{match-const-left}, \eqref{match-const-right} and \eqref{comp-const} together with the non-negativity of the $x_{i,k}$ yields the following linear program, LP($n$), whose optimal value upper bounds any fractional online matching algorithm's competitiveness on $G_n$, by the above.

\begin{figure}[h]
	\begin{center}
		\begin{tabular}{rll}
			maximize & $\alpha$ \\
			subject to: & $\sum_{i=j}^n x_{i,j}\leq 1$ & $\forall j\in [n]$ \\
			& $\sum_{i=j}^n x_{i,i-j+1}\leq 1$ & $\forall j\in [n]$ \\
			& $\sum_{i=1}^k \sum_{j=1}^i x_{i,j} \geq \alpha \cdot k$ & $\forall k\in [n]$ \\
			& $x_{i,j}\geq 0$ & $\forall i,j\in [n]$.
		\end{tabular}
	\end{center}
	\vspace{-0.5cm}
\end{figure} 

To bound the optimal value of LP($n$), we provide a feasible solution its LP dual, which we denote by Dual($n$). By weak duality, any dual feasible solution's value upper bounds the optimal value of LP($n$), which in turn upper bounds the optimal competitive ratio. Using the dual variables $\ell_j, r_j$ for the degree constraints of the $j^{th}$ left and right vertices respectively ($u_j$ and $v_j$) and dual variable $c_k$ for the competitiveness constraint of the $k^{th}$ round, we get the following dual linear program. Recall here again that $x_{i,i-j+1}$ appears in the matching constraint of $v_j$, with dual variable $r_j$, and so $x_{i,j}=x_{i,i-(i-j+1)+1}$ appears in the same constraint for $v_{i-j+1}$.)
\begin{figure}[h]
	\begin{center}
		\begin{tabular}{rll}
			minimize & $\sum_{j=1}^n \left( \ell_j+r_j \right)$ \\
			subject to: & $\sum_{k=1}^n k\cdot c_k \geq 1$  \\
			& $ \ell_j + r_{i-j+1} - \sum_{k=i}^n  c_k \geq 0$ & $\forall i\in [n], j \in [i]$ \\
			& $\ell_j, r_j, c_k \geq 0$ & $\forall j,k\in [n]$.
		\end{tabular}
	\end{center}
	\vspace{-0.3cm}
\end{figure} 

We provide the following dual  solution.
\begin{align*}
c_k &= \frac{2}{n(n+1)} \qquad \forall k \in [n]\\
\ell_j = r_j  & = \begin{cases} 
\frac{n-2(j-1)}{n(n+1)} & \mbox{if } j \leq n/2 +1 \\
0 & \mbox{if } n/2 +1 < j \leq n. \end{cases}
\end{align*}

We start by proving feasibility of this solution. 
The first constraint is satisfied with equality. For the second constraint, as $\sum_{k=i}^n  c_k = \frac{2 (n-i+1)}{n(n+1)}$ it suffices to show that 
$\ell_j + r_{i-j+1} \geq \frac{2 (n-i+1)}{n(n+1)}$ for all $i\in [n], j \in [i]$.
Note that if $j > n/2 +1$, then $\ell_j = r_j = 0 > \frac{n-2(j-1)}{n(n+1)}$. So, for all $j$ we have $\ell_j = r_j \geq \frac{n-2(j-1)}{n(n+1)}$.
Consequently, $\ell_j + r_{i-j+1} \geq \frac{n-2(j-1)}{n(n+1)} +\frac{n-2(i-j+1-1)}{n(n+1)} =  \frac{2 (n - i + 1)}{n (n+1)}$ for all $i\in [n], j \in [i]$. Non-negativity of the $\ell_j,r_j,c_k$ variables is trivial, and so we conclude that the above is a feasible dual solution.

It remains to calculate this dual feasible solution's value. We do so for $n$ even,\footnote{The case of $n$ odd is similar. As it is unnecessary to establish the result of this theorem, we omit it.} for which
\begin{align*}
        \sum_{j=1}^n (\ell_j + r_j ) = 2\cdot \sum_{j=1}^n \ell_j = 
        2\cdot \sum_{j=1}^{n/2 +1} \frac{n - 2(j-1)}{n(n+1)}  = \frac{1}{2} + \frac{1}{2n +2},
\end{align*}
completing the proof.
\end{proof}

{\textbf{Remark 1.}}  Recall that \citet{buchbinder2018online} and \citet{lee2017maximum} presented better-than-$\nicefrac{1}{2}$-competitive algorithms for bounded-degree graphs and bounded number of arrival batches. Our upper bound above shows that a deterioration of the competitive guarantees as the maximum degree and number of arrival batches increase (as in the algorithms of \cite{buchbinder2018online,lee2017maximum}) is inevitable.

\textbf{Remark 2.} Recall that the \emph{asymptotic} competitive ratio of an algorithm is the maximum $c$ such that the algorithm always guarantees value at least $ALG\geq c\cdot OPT - b$ for some fixed $b>0$. 
Our proof extends to this weaker notion of competitiveness easily, by revealing multiple copies of the hard family of \Cref{thm:hardness} and letting $x_{ik}$ denote the average of its 
counterparts over all copies.

\section{General Vertex Arrivals}\label{sec:vertex}

In this section we present a $(\nicefrac{1}{2}+\Omega(1))$-competitive randomized algorithm for online matching under general arrivals.
As discussed in the introduction, our approach will be to round (online) a \emph{fractional} online matching algorithm's output. Specifically, this will be an algorithm from the family of fractional algorithms introduced in \cite{wang2015two}. In \Cref{sec:fractional-algo} we describe this family of algorithms. To motivate our rounding approach, in \Cref{sec:warmup} we first present a simple lossless rounding method for a $\nicefrac{1}{2}$-competitive algorithm in this family. In \Cref{sec:imrpoved-algorithm} we then describe our rounding algorithm for a better-than-$\nicefrac{1}{2}$-competitive algorithm in this family. Finally, in \Cref{sec:analysis} we analyze this rounding scheme, and show that it yields a $(\nicefrac{1}{2}+\Omega(1))$-competitive algorithm.

\subsection{Finding a fractional solution}\label{sec:fractional-algo}
In this section we revisit the algorithm of \citet{wang2015two}, which beats the $\nicefrac{1}{2}$ competitiveness barrier for online fractional matching under general vertex arrivals. 
Their algorithm (technically, family of algorithms) applies the primal-dual method to compute both a fractional matching and a fractional vertex cover -- the dual of the fractional matching relaxation. The LPs defining these dual problems are as follows.

\begin{minipage}[t]{0.49\linewidth}

	\begin{center}
    \text{Primal-Matching}\par\medskip
		\begin{tabular}{rll}
			maximize & $\sum_{e \in E} x_e$ \\
			subject to: & $\sum_{u \in N(v)} x_{uv}\leq 1$ & $\forall u\in V$ \\
			            & $ x_e \geq 0 $ & $ \forall e \in E$    \\ 
		\end{tabular}
	\end{center}
    \label{fig:primal-matching}	
\end{minipage}
\begin{minipage}[t]{0.49\linewidth}
	\begin{center}
    \text{Dual-Vertex Cover}\par\medskip
		\begin{tabular}{rll}
			minimize & $\sum_{u \in V} y_u$ \\
			subject to: & $  y_u + y_v  \geq 1$ & $\forall e = \{u,v\} \in E$ \\
			            & $ y_u \geq 0 $ & $ \forall u \in V$    \\ 
		\end{tabular}
	\end{center}
	\vspace{-0.5cm}
    \label{fig:dual-vertexcover}
\end{minipage}

Before introducing the algorithm of \cite{wang2015two}, we begin by defining the fractional online vertex cover problem for vertex arrivals. 
When a vertex $v$ arrives, if $N_v(v)$ denotes the previously-arrived neighbors of $v$, then for each $u\in N_v(v)$, a new constraint $y_u+y_v\geq 1$ is revealed, which an online algorithm should satisfy by possibly increasing $y_u$ or $y_v$.
Suppose $v$ has its dual value set to $y_v = 1-\theta$. Then all of its neighbors should have their dual increased to at least $\theta$. Indeed, an algorithm may as well increase $y_u$ to $\max\{y_u,\theta\}$. 
The choice of $\theta$ therefore determines an online fractional vertex cover algorithm. 
The increase of potential due to the newly-arrived vertex $v$ is thus $1-\theta + \sum_{u\in N_v(v)} (\theta-y_u)^+$.\footnote{Here and throughout the paper, we let $x^+ := \max\{0,x\}$ for all $x\in \mathbb{R}$.} In \cite{wang2015two} $\theta$ is chosen to upper bound this term by $1-\theta+f(\theta)$ for some function $f(\cdot)$. The primal solution (fractional matching) assigns values $x_{uv}$ so as to guarantee feasibility of $\vec{x}$ and a ratio of $\beta$ between the primal and dual values of $\vec{x}$ and $\vec{y}$, implying $\frac{1}{\beta}$-competitiveness of this online  fractional matching algorithm, by feasibility of $\vec{y}$ and weak duality. The algorithm, parameterized by a function $f(\cdot)$ and parameter $\beta$ to be discussed below, is given formally in \Cref{algo:TOBVC}. In the subsequent discussion, $N_v(u)$ denotes the set of neighbors of $u$ that arrive before $v$. 

\begin{algorithm}[h]
  \DontPrintSemicolon
  \SetKwInOut{Input}{Input}
  \SetKwInOut{Output}{Output}
  \Input{A stream of vertices $v_1, v_2, \dots v_n$. At step $i$, vertex $v_i$ and $N_{v_i}(v_i)$ are revealed.}
  \Output{A fractional vertex cover solution $\vec{y}$ and a fractional matching $\vec{x}$.}
  Let $y_u \gets 0$ for all $u$, let $x_{uv} \gets 0$ for all $u, v$. \;
  \ForEach{$v$ in the stream}{ 
    {
        \begin{tabular}{rll}
			&maximize $\theta$ \\
			&subject to: &$ \theta \leq 1$ \\
			           & & $ \sum_{u \in N_v(v)} \left( \theta - y_u \right)^{+} \leq f(\theta) $
		\end{tabular}
    } \;
    \ForEach{$u \in N_v (v) $}{
        $x_{uv} \xleftarrow[]{}  
                \frac{\left( \theta - y_u \right)^{+} }{\beta}
                \left(  1 + \frac{1-\theta}{f(\theta)}  \right).$ \;
         $y_u \xleftarrow[]{} \max \{ y_u, \theta\}$.
    }
    $y_v \xleftarrow[]{}  1 - \theta$. \;
    }
    
    \caption{Online general vertex arrival fractional matching and vertex cover}
    \label{algo:TOBVC}
\end{algorithm}

\Cref{algo:TOBVC} is parameterized by a function $f$ and a constant $\beta$. The family of functions considered by \cite{wang2015two} are as follows.
\begin{Def}
	Let $f_{\kappa}(\theta) := \left( \frac{1+ \kappa}{2}  - \theta  \right)^{\frac{1 + \kappa}{2 \kappa}} 
	\left( \theta +  \frac{\kappa - 1}{2}\right)^{\frac{\kappa - 1}{2 \kappa}}$.
	We define $W := \{ f_{\kappa} \mid \kappa \geq 1 \}$.
\end{Def}

As we will see, choices of $\beta$ guaranteeing feasibility of $\vec{x}$ are related to the following quantity.
\begin{Def}
For a given $f : \left[ 0,1 \right]  \xrightarrow{}  \mathbb{R}_{+} $    let $\beta^*(f) := \max_{\theta \in \left[ 0,1 \right]} 1 +  f(1-\theta)+  \int_{\theta}^{1} \frac{1-t}{f(\theta)} \,d\theta$.
\end{Def}

For functions $f\in W$ this definition of $\beta^*(f)$ can be simplified to $\beta^*(f)=1+f(0)$, due to the observation  (see \cite[Lemmas 4,5]{wang2015two}) that all functions $f\in W$ satisfy 
\begin{equation}\label{WW-tight}
	\beta^*(f) = 1+f(1-\theta)+\int_{\theta}^1 \frac{1-\theta}{f(\theta)}\,d\theta \qquad \forall \theta\in [0,1].
\end{equation}

As mentioned above, the competitiveness of \Cref{algo:TOBVC} for appropriate choices of $f$ and $\beta$ is obtained by relating the overall primal and dual values, $\sum_e x_e$ and $\sum_v y_v$. As we show (and rely on later), one can even bound individual vertices' contributions to these sums. In particular, for any vertex $v$'s arrival time, each vertex $u$'s contribution to $\sum_e x_e$, which we refer to as its \emph{fractional degree}, $x_u := \sum_{w\in N_v(u)} x_{uw}$, can be bounded in terms of its dual value by this point, $y_u$, as follows.

\begin{restatable}{lem}{refinedwwbound}\label{refined-ww-xu-bound}
For any vertex $u,v \in V $, let $y_u$ be the potential of $u$ prior to arrival of $v$. Then the fractional degree just before $v$ arrives,  $x_u:=\sum_{w\in N_v(u)} x_{uw}$, is bounded as follows:
\begin{align*}
\frac{y_u}{\beta} \leq	x_u  \leq \frac{y_u + f(1-y_u)}{\beta}.
\end{align*}
\end{restatable}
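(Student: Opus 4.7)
The plan is to track how both $y_u$ and the contributions to $x_u$ accumulate as neighbors of $u$ arrive. Let $w_1,w_2,\ldots,w_k$ enumerate the neighbors of $u$ that arrive before $v$ in order of arrival, let $\theta_i$ denote the value of $\theta$ chosen in \Cref{algo:TOBVC} when $w_i$ arrives, and let $y_u^{(i)}$ denote the potential of $u$ just before step $i$ (so $y_u^{(1)}=0$ and $y_u^{(k+1)}=y_u$). Set $a_i := (\theta_i - y_u^{(i)})^+$, so by the update rule $y_u^{(i+1)} = y_u^{(i)} + a_i$ and in particular $\sum_i a_i = y_u$. The edge weight assigned at step $i$ is $x_{uw_i} = (a_i/\beta)\bigl(1+(1-\theta_i)/f(\theta_i)\bigr)$, and summing yields
\begin{equation*}
\beta\, x_u \;=\; \sum_i a_i + \sum_i a_i\cdot \frac{1-\theta_i}{f(\theta_i)} \;=\; y_u + \sum_i a_i\cdot \frac{1-\theta_i}{f(\theta_i)}.
\end{equation*}
Each term in the remainder sum is non-negative (since $\theta_i\le 1$ and $f\ge 0$), which already gives the lower bound $x_u \ge y_u/\beta$.

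For the upper bound I would interpret the remainder sum as a Riemann sum for $g(t) := (1-t)/f(t)$. First observe that among the indices with $a_i>0$, the $\theta_i$ are strictly increasing: whenever $a_i>0$ the algorithm sets $y_u^{(i+1)} = \theta_i$, so any later step $j>i$ with $a_j>0$ has $\theta_j > y_u^{(j)} \ge y_u^{(i+1)} = \theta_i$. Dropping the zero terms and writing $\theta_{i_1}<\theta_{i_2}<\cdots<\theta_{i_m}=y_u$ gives a partition of $[0,y_u]$ with widths $a_{i_j}=\theta_{i_j}-\theta_{i_{j-1}}$, and the remainder sum $\sum_j a_{i_j}\,g(\theta_{i_j})$ is exactly the right-endpoint Riemann sum of $g$ over this partition.

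The key analytic step is to show that $g(t)=(1-t)/f(t)$ is non-increasing on $[0,1]$, so that the right-endpoint sum is bounded above by $\int_0^{y_u} g(t)\,dt$. Differentiating, $g'(t)\le 0$ is equivalent to $t(1-t)\le f(t)f(1-t)$. For $f=f_\kappa\in W$, a direct algebraic expansion gives the clean identity
\begin{equation*}
f_\kappa(t)\,f_\kappa(1-t) \;=\; \Bigl(\tfrac{1+\kappa}{2}-t\Bigr)\Bigl(t+\tfrac{\kappa-1}{2}\Bigr) \;=\; t(1-t) + \tfrac{\kappa^2-1}{4} \;\ge\; t(1-t),
\end{equation*}
since $\kappa\ge 1$. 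This is the main obstacle of the proof, and the one place where the specific form of $W$ enters.

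Finally, I would use the identity~\eqref{WW-tight}, which applied at $\theta=y_u$ and at $\theta=0$ gives $\int_0^1 g(t)\,dt=\beta-1-f(1)$ and $\int_{y_u}^1 g(t)\,dt=\beta-1-f(1-y_u)$, so that
\begin{equation*}
\int_0^{y_u} g(t)\,dt \;=\; f(1-y_u) - f(1) \;\le\; f(1-y_u),
\end{equation*}
using $f(1)\ge 0$. Combining with the Riemann-sum bound yields $\sum_i a_i\,(1-\theta_i)/f(\theta_i) \le f(1-y_u)$, and substituting back gives $\beta x_u \le y_u + f(1-y_u)$, completing the upper bound. The whole argument is modular: monotonicity of $g$ turns the algorithm's discrete per-arrival updates into an integral that the tight identity evaluates in closed form.
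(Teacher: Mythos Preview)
Your argument overlooks the arrival of $u$ itself. The fractional degree $x_u$ includes contributions $x_{uw}$ from neighbors $w\in N_u(u)$ that arrived \emph{before} $u$; those values are assigned when $u$ arrives, using the parameter $\theta_u$ chosen then and the current dual $y_w$ of $w$---not using the $\theta_i$ at $w$'s own arrival. Likewise, $y_u$ jumps from $0$ to $y_0:=1-\theta_u$ at $u$'s arrival rather than via the $\max$-update rule. Consequently, for any $w_i$ arriving before $u$, your recurrences $y_u^{(i+1)}=y_u^{(i)}+a_i$ and $x_{uw_i}=(a_i/\beta)\bigl(1+(1-\theta_i)/f(\theta_i)\bigr)$ both fail (the actual change in $y_u$ at that step is zero, and $x_{uw_i}$ depends on $\theta_u$ and $y_{w_i}$), so neither $\sum_i a_i=y_u$ nor your expression for $\beta x_u$ holds in general. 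Both the lower and upper bounds you derive rest on these identities.

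The fix is to treat $u$'s arrival as a separate phase, as the paper does: the contribution to $\beta x_u$ at that moment is exactly $y_0+f(1-y_0)$ when $\theta_u<1$ (by tightness of $\sum_w(\theta_u-y_w)^+=f(\theta_u)$) and at most $f(1)=y_0+f(1-y_0)$ when $\theta_u=1$, and in either case at least $y_0$. Your Riemann-sum argument then applies cleanly to the neighbors arriving after $u$, with the partition running over $[y_0,y_u]$; using \eqref{WW-tight} the integral evaluates to $f(1-y_u)-f(1-y_0)$, and the two phases combine to give both bounds. With this correction your route is a nice self-contained alternative to the paper's, which for the upper bound simply cites Invariant~1 of \cite{wang2015two} and then simplifies via \eqref{WW-tight}; the identity $f_\kappa(t)f_\kappa(1-t)=t(1-t)+(\kappa^2-1)/4$ you derive to establish monotonicity of $(1-t)/f(t)$ is a clean direct substitute for that external invariant.
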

Broadly, the lower bound on $x_u$ is obtained by lower bounding the increase $x_u$ by the increase to $y_u/\beta$ after each vertex arrival, while the upper bound follows from a simplification of a bound given in \cite[Invariant 1]{wang2015two} (implying feasibility of the primal solution), which we simplify using \eqref{WW-tight}. See \Cref{app:fractional-algo} for a full proof.

Another observation we will need regarding the functions $f\in W$ is that they are decreasing.
\begin{obs}\label{f-non-increasing}
	Every function $f\in W$ is non-increasing in its argument in the range $[0,1]$.
\end{obs}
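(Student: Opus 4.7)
The plan is to rewrite $f_\kappa$ in a cleaner form, show it is strictly positive on $[0,1]$, and then examine its logarithmic derivative. Set
\[
a := \frac{1+\kappa}{2\kappa}, \quad b := \frac{\kappa-1}{2\kappa}, \quad A := \frac{1+\kappa}{2}, \quad B := \frac{\kappa-1}{2},
\]
so that $f_\kappa(\theta) = (A-\theta)^{a}(\theta+B)^{b}$. Note the following easy identities that drive everything: $a+b=1$, $bA = aB = (\kappa^2-1)/(4\kappa)$, and for $\kappa\geq 1$ we have $A\geq 1$, $B\geq 0$. Thus on $\theta\in[0,1]$ both $A-\theta\geq (\kappa-1)/2\geq 0$ and $\theta+B\geq 0$, which already shows $f_\kappa\geq 0$.

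Handle $\kappa = 1$ first as a degenerate case: there $b=0$ and $f_1(\theta)=1-\theta$, which is plainly non-increasing. For $\kappa>1$, both factors are strictly positive on $[0,1]$, so $f_\kappa(\theta)>0$ and we may take logarithms. Computing
\[
\frac{f_\kappa'(\theta)}{f_\kappa(\theta)} \;=\; \frac{-a}{A-\theta} + \frac{b}{\theta+B}.
\]
Using $bA=aB$ one sees this expression vanishes at $\theta=0$: indeed $-a/A + b/B = (-aB+bA)/(AB) = 0$. Differentiating once more gives
\[
\frac{d}{d\theta}\!\left(\frac{-a}{A-\theta} + \frac{b}{\theta+B}\right) = \frac{-a}{(A-\theta)^2} - \frac{b}{(\theta+B)^2} \;<\; 0,
\]
since $a>0$, $b\geq 0$ and both squared terms are strictly positive on $[0,1]$. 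Hence the logarithmic derivative is strictly decreasing on $[0,1]$. Combined with its vanishing at $\theta=0$, this forces $(\log f_\kappa)'(\theta) \leq 0$ for all $\theta\in[0,1]$, and therefore $f_\kappa'(\theta)\leq 0$ there, proving that $f_\kappa$ is non-increasing on $[0,1]$, as required.

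The only subtlety is the boundary checks: ensuring the base quantities $A-\theta$ and $\theta+B$ stay nonnegative on $[0,1]$ (so the power functions are well-defined and positive for $\kappa>1$), and separately disposing of the edge case $\kappa=1$ where $b=0$ and $B=0$ make the formulas degenerate. Everything else reduces to the single clean observation that $bA=aB$, which makes the critical point sit exactly at $\theta=0$.
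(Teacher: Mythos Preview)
Your proof is correct and takes a genuinely different route from the paper's. The paper does not compute directly from the formula for $f_\kappa$; instead it differentiates the integral identity~\eqref{WW-tight} (established by Wang and Wong) to obtain the functional equation $f'(z) = -z/f(1-z)$, and then concludes $f'(z)\leq 0$ from positivity of $f$. Your argument is fully self-contained: you work directly with the explicit form of $f_\kappa$, and the key step is the pleasant identity $bA = aB$, which pins the unique critical point of $\log f_\kappa$ at $\theta=0$; concavity of $\log f_\kappa$ then forces the derivative to stay nonpositive on $[0,1]$. The paper's route is shorter once~\eqref{WW-tight} is available and yields the exact expression for $f'$, which is reused elsewhere; your route is more elementary, avoids relying on~\eqref{WW-tight}, and makes the monotonicity transparent from the shape of the function itself.
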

\begin{proof}
	As observed in \cite{wang2015two}, differentiating \eqref{WW-tight} with respect to $z$ yields $- f'(1-z) - \frac{1-z}{f(z)} = 0$, from which we obtain $f(z)\cdot f'(1-z) = z-1$. Replacing $z$ by $1-z$, we get $f(1-z)\cdot f'(z) = -z$, or $f'(z) = -\frac{z}{f(1-z)}$. As $f(z)$ is positive for all $z\in [0,1]$, we have that $f'(z)<0$ for all $z\in [0,1]$.
\end{proof}

The next lemma of \cite{wang2015two} characterizes the achievable competitiveness of \Cref{algo:TOBVC}.

\begin{lem}[\cite{wang2015two}]\label{ww-approx}
\cref{algo:TOBVC} with function $f \in W$ and $\beta \geq \beta^{*}(f) = 1+f(0)$ is $\frac{1}{\beta}$ competitive.
\end{lem}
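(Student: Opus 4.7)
The proof is a standard primal--dual competitive analysis that exploits the specific form of the primal update in \Cref{algo:TOBVC}. The plan is to (i) observe that the maintained $\vec y$ is always a feasible fractional vertex cover; (ii) show that the maintained $\vec x$ is a feasible fractional matching; (iii) show that every vertex arrival increases the dual objective by exactly $\beta$ times the increase in the primal objective; and (iv) conclude by weak LP duality that at every prefix $\sum_e x_e \geq \frac{1}{\beta}\,\mathrm{OPT}$.

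Step (i) is immediate from the update rule: when $v$ arrives with $y_v \gets 1-\theta$, every previously arrived neighbor $u$ has its potential raised to at least $\theta$, so $y_u + y_v \geq \theta + (1-\theta) = 1$. For step (iii), fix an arrival of $v$ and let $\theta$ be the value chosen by the inner optimization. The change in dual objective equals $\Delta y = (1-\theta) + \sum_{u \in N_v(v)}(\theta - y_u)^+$, while the change in primal objective equals $\Delta x = \tfrac{1}{\beta}\bigl(1 + \tfrac{1-\theta}{f(\theta)}\bigr)\sum_{u \in N_v(v)}(\theta - y_u)^+$. Because $\theta$ is optimal for the inner maximization, one of its constraints must be tight: either (a) $\theta = 1$, in which case $(1-\theta)=0$ and substitution immediately gives $\Delta y = \beta\,\Delta x$, or (b) $\theta < 1$ and $\sum_{u}(\theta - y_u)^+ = f(\theta)$, in which case the $(1-\theta)$ term in $\Delta y$ is absorbed by the factor $(1-\theta)/f(\theta)$ in the primal update, again giving $\Delta y = \beta\,\Delta x$. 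Summing over all arrivals (up to any prefix) yields $\sum_v y_v = \beta \sum_e x_e$.

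The main technical step is primal feasibility (step (ii)), which is where the hypothesis $\beta \geq \beta^\ast(f) = 1+f(0)$ is used. By \Cref{refined-ww-xu-bound}, at any point in time every vertex $u$ satisfies $\sum_{w \in N(u)} x_{uw} \leq (y_u + f(1-y_u))/\beta$. Instantiating the identity \eqref{WW-tight} at $\theta := y_u \in [0,1]$ and using non-negativity of the integral term (together with $y_u\leq 1$) gives
\[
\beta \;\geq\; \beta^\ast(f) \;\geq\; 1 + f(1-y_u) \;\geq\; y_u + f(1-y_u),
\]
so $\sum_{w \in N(u)} x_{uw} \leq 1$, i.e., $\vec x$ is a valid fractional matching. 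Combining steps (i)--(iii) with weak LP duality applied to the feasible vertex cover $\vec y$ then yields $\sum_e x_e = \tfrac{1}{\beta}\sum_v y_v \geq \tfrac{1}{\beta}\,\mathrm{OPT}$ at every prefix, which is exactly the claimed $\tfrac{1}{\beta}$-competitive guarantee.

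The one non-routine point in this plan is the exact equality $\Delta y = \beta\,\Delta x$ in step (iii): it depends crucially on the scaling factor $1 + \tfrac{1-\theta}{f(\theta)}$ in the primal update, which is engineered precisely to absorb the $(1-\theta)$ contribution to $\Delta y$ coming from $y_v$. Handling this requires separately analyzing the two regimes in which the inner LP's optimum can sit (upper bound $\theta=1$ active versus nonlinear constraint tight), and this is the step where the most care is needed; everything else is essentially bookkeeping given \Cref{refined-ww-xu-bound} and \eqref{WW-tight}.
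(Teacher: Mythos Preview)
Your proof is correct and follows the standard primal--dual template. Note, however, that the paper does not actually prove \Cref{ww-approx}: it is quoted as a result of \cite{wang2015two}, so there is no ``paper's own proof'' to compare against. That said, your argument is exactly the kind of proof one would expect in \cite{wang2015two}: dual feasibility is immediate, the exact ratio $\Delta y=\beta\,\Delta x$ follows from the case split on whether the nonlinear constraint is tight, and primal feasibility is obtained from the degree bound. Your use of \Cref{refined-ww-xu-bound} for the last step is legitimate within this paper's logical structure (its proof in the appendix relies only on \cite[Invariant~1]{wang2015two} and \eqref{WW-tight}, not on \Cref{ww-approx}), and the chain $\beta\geq\beta^\ast(f)=1+f(1-y_u)+\int_{y_u}^1\tfrac{1-t}{f(t)}\,dt\geq 1+f(1-y_u)\geq y_u+f(1-y_u)$ is exactly the right way to close the loop.
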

\citet{wang2015two} showed that taking $\kappa \approx 1.1997$ and $\beta = \beta^*(f_\kappa)$, \Cref{algo:TOBVC} is $\approx 0.526$ competitive. In later sections we show how to round the output of \Cref{algo:TOBVC} with $f_\kappa$ with $\kappa=1+ 2 \epsilon$ for some small constant $\epsilon$ and $\beta=2-\epsilon$ to obtain a $(\nicefrac{1}{2}+\Omega(1))$-competitive algorithm. But first, as a warm up, we show how to round this algorithm with $\kappa=1$ and $\beta=\beta^*(f_1)=2$.

\subsection{Warmup: a \texorpdfstring{$\nicefrac{1}{2}$}{(1/2)}-competitive randomized algorithm}\label{sec:warmup}

In this section we will round the $\nicefrac{1}{2}$-competitive fractional algorithm obtained by running \cref{algo:TOBVC} with function $f(\theta) = f_1 (\theta) = 1-\theta$ and $\beta = \beta^{*} (f) = 2$. We will devise a lossless rounding of this fractional matching algorithm, by including each edge $e$ in the final matching with a probability equal to the fractional value $x_e$ assigned to it by \cref{algo:TOBVC}. 
Note that if $v$ arrives after $u$, then if $F_u$ denotes the event that $u$ is free when $v$ arrives, then edge $\{u,v\}$ is matched by an online algorithm with probability $\Pr[\{u,v\}\in M] = \Pr[\{u,v\}\in M \mid F_u] \cdot \Pr[F_u]$. Therefore, to match each edge $\{u,v\}$ with probability $x_{uv}$, we need $\Pr[\{u,v\} \in M \mid F_u] = x_{uv} / \Pr[F_u]$.
That is, we must match $\{u,v\}$ with probability $z_u = x_{uv}/\Pr[F_u]$ conditioned on $u$ being free. The simplest way of doing so (if possible) is to pick an edge $\{u,v\}$ with the above probability $z_u$ always, and to match it only if $u$ is free.
\Cref{algo:warmup_algorithm} below does just this, achieving a lossless rounding of this fractional algorithm.
As before, $N_v(u)$ denotes the set of neighbors of $u$ that arrive before $v$. 

\begin{algorithm}[h]
  \DontPrintSemicolon
  \SetKwInOut{Input}{Input}
  \SetKwInOut{Output}{Output}
  \Input{A stream of vertices $v_1, v_2, \dots , v_n$. At step $i$, vertex $v_i$ and $N_{v_i}(v_i)$ are revealed.}
  \Output{A matching $M$.}
  Let $y_u \gets 0$ for all $u$, let $x_{uv} \gets 0$ for all $u, v$. \;
  Let $M \gets \emptyset$. \;
  \ForEach{$v$ in the stream}{ 
    Update $y_u$'s and $x_{uv}$'s using~\Cref{algo:TOBVC} with $\beta = 2$ and $f = f_{1}$. \;
    \ForEach{$u \in N_v (v) $}{
      $z_u \gets \frac{x_{u v}}{\Pr[\text{$u$ is free when $v$ arrives}]}.$ \tcp*{$z_u$ is $x_{uv} / (1 - y_u)$ as shown later} \label{line:zu}
        }
    Sample (at most) one neighbor $u\in N_v(v)$ according to $z_u$. \label{line:sample}\;
    \If{a free neighbor $u$ is sampled}{
        Add $\{u,v\}$ to $M$.
    }
  }    
    \caption{Online vertex arrival warmup randomized fractional matching}
    \label{algo:warmup_algorithm}
\end{algorithm}

\Cref{algo:warmup_algorithm} is well defined if for each vertex $v$'s arrival, $z$ is a probability distribution; i.e., $\sum_{u\in N_v(v)} z_u\leq 1$. The following lemma asserts precisely that. Moreover, it asserts that \Cref{algo:warmup_algorithm} matches each edge with the desired probability.

\begin{lem}\label{lem:z-distribution}
\Cref{algo:warmup_algorithm} is well defined, since for every vertex $v$ on arrival, $z$ is a valid probability distribution. Moreover, for each $v$ and $u\in N_v(v)$, it matches edge $\{u,v\}$ with probability $x_e$.
\end{lem}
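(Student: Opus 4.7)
My plan is to establish both claims simultaneously by induction over the vertex arrivals, maintaining the following inductive invariant: at the moment of any vertex $v$'s arrival, each previously-arrived vertex $u$ satisfies $\Pr[u \text{ is matched}] = x_u$, where $x_u = \sum_{w \in N_v(u)} x_{uw}$ denotes $u$'s fractional degree at that time. The base case (no arrivals) is trivial.

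The first step in the inductive step is to derive a clean closed form for the fractional quantities produced by the warmup instantiation of \Cref{algo:TOBVC}. Substituting $\beta = 2$ and $f(\theta) = 1 - \theta$ into the algorithm's assignment of $x_{uv}$ gives $x_{uv} = (\theta - y_u)^{+}$ exactly, while the subsequent update $y_u \leftarrow \max\{y_u, \theta\}$ raises $y_u$ by exactly the same $(\theta - y_u)^{+}$, matching the increase in $x_u$. Telescoping over all prior arrivals therefore yields the pointwise identity $x_u = y_u$, so by the inductive hypothesis $\Pr[F_u] = 1 - x_u = 1 - y_u$. In particular, on Line~\ref{line:zu} we obtain $z_u = x_{uv}/(1 - y_u)$ (with the convention $0/0 = 0$, which is the only degenerate case since $y_u = 1$ forces $x_{uv} = 0$).

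To see that $z$ is a valid distribution over $N_v(v)$, note that for every $u$ with $x_{uv} > 0$ we must have $y_u < \theta$, and hence $1 - y_u \geq 1 - \theta > 0$. Combining this with the constraint $\sum_{u \in N_v(v)} (\theta - y_u)^{+} \leq f(\theta) = 1 - \theta$ that defines $\theta$ in \Cref{algo:TOBVC}, we get
\[
\sum_{u \in N_v(v)} z_u \;=\; \sum_{u \,:\, x_{uv} > 0} \frac{x_{uv}}{1 - y_u} \;\leq\; \frac{1}{1 - \theta}\sum_{u \in N_v(v)} x_{uv} \;\leq\; \frac{1 - \theta}{1 - \theta} \;=\; 1.
\]

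Finally, the random bits used to sample on Line~\ref{line:sample} are fresh and independent of the algorithm's random state prior to $v$'s arrival, so the events ``$u$ is sampled'' and ``$u$ is free when $v$ arrives'' are independent. Consequently $\Pr[\{u,v\}\in M] = z_u\cdot\Pr[F_u] = x_{uv}$, which proves the second claim and restores the invariant for the next arrival: each $u \in N_v(v)$'s matched probability grows by exactly $x_{uv}$, matching its new fractional degree, and for $v$ itself $\Pr[v \text{ matched}] = \sum_u x_{uv} = x_v$ because at most one edge is sampled. The only delicate point is the telescoping identity $x_u = y_u$, which is special to the choice $f = f_1$ and $\beta = 2$; its breakdown for stronger fractional algorithms is exactly what motivates the two-sample scheme of \Cref{sec:imrpoved-algorithm}.
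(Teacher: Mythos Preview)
Your approach is the same as the paper's: induction on arrivals, maintaining that each vertex is matched with probability equal to its current fractional degree, and deriving $\Pr[F_u]=1-y_u$ from the identity $x_u=y_u$.

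There is one real gap in your telescoping argument for $x_u=y_u$. You correctly observe that whenever a \emph{later} vertex $w$ arrives with $u\in N_w(w)$, both $x_u$ and $y_u$ increase by $(\theta-y_u)^+$. But that only accounts for the updates $y_u\leftarrow\max\{y_u,\theta\}$. When $u$ itself arrives, Algorithm~\ref{algo:TOBVC} sets $y_u\leftarrow 1-\theta$ directly (the last line of the loop), while $u$'s fractional degree at that moment is $\sum_{w\in N_u(u)} x_{uw}=\sum_{w\in N_u(u)}(\theta-y_w)^+$. For your telescope to start correctly you need these two quantities to be equal, i.e., you need the defining constraint $\sum_{w\in N_u(u)}(\theta-y_w)^+\leq f(\theta)=1-\theta$ to hold with \emph{equality}. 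This is true---$\theta$ is chosen maximally, so if $\theta<1$ the constraint is tight, and if $\theta=1$ both sides vanish---but it is not implied by anything you wrote and is exactly the step the paper singles out as equation~\eqref{eq:theta-constraint}. Without it, your identity $x_u=y_u$ (and hence your formula $z_u=x_{uv}/(1-y_u)$ and the restoration of the invariant for $v$ itself) is unsupported. One sentence making this tightness explicit closes the argument. A minor related point: your displayed bound $\sum z_u\leq (1-\theta)/(1-\theta)$ silently assumes $\theta<1$; when $\theta=1$ every $x_{uv}=0$ and the sum is trivially zero, so just note that case separately.
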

\begin{proof}
	We prove both claims in tandem for each $v$, by induction on the number of arrivals. 
	For the base case ($v$ is the first arrival), the set $N_v(v)$ is empty and thus both claims are trivial. Consider the arrival of a later vertex $v$.
	By the inductive hypothesis we have that each vertex $u\in N_v(v)$ is previously matched with probability $\sum_{w\in N_v(u)} x_{wu}$.
	But by our choice of $f(\theta)=f_1(\theta)=1-\theta$ and $\beta=2$, if $w$ arrives after $u$, then $y_u$ and $\theta$ at arrival of $w$ satisfy $x_{uw} = \frac{(\theta - y_u)^{+}}{\beta}\cdot \left(1+\frac{1-\theta}{f(\theta)}\right) = (\theta - y_u)^{+}$. 
	That is, $x_{uw}$ is precisely the increase in $y_u$ following arrival of $w$.
	On the other hand, when $u$ arrived we have that its dual value $y_u$ increased by $1-\theta = \sum_{v'\in N_u(u)} (\theta - y_{v'})^+ = \sum_{v'\in N_u(u)} x_{uv'}$. 
	To see this last step, we recall first that by definition of \Cref{algo:TOBVC} and our choice of $f(\theta)=1-\theta$, the value $\theta$ on arrival of $v$ is chosen to be the largest $\theta\leq 1$ satisfying  
	\begin{align}\label{eq:theta-constraint}
		\sum_{\forall u \in N_v(v)} (\theta - y_u)^{+}\leq 1-\theta.
	\end{align}
	But the inequality \eqref{eq:theta-constraint} is an equality whether or not $\theta=1$ (if $\theta=1$, both sides are zero).
	We conclude that $y_u = \sum_{v'\in N_v(u)} x_{uv'}$ just prior to arrival of $v$. 
	But then, by the inductive hypothesis, this implies that $\Pr[\mbox{$u$ free when $v$ arrives}] = 1-y_u$ (yielding an easily-computable formula for $z_u$).
	Consequently, by \eqref{eq:theta-constraint} we have that when $v$ arrives $z$ is a probability distribution, as
	\begin{align*}
	\sum_{u\in N_v(v)} z_u & = \sum_{u\in N_v(v)} \frac{(\theta - y_u)^{+}}{1-y_u} \leq \sum_{u\in N_v(v):\, y_u\leq \theta} \frac{(\theta - y_u)^{+}}{1-\theta} = \sum_{u\in N_v(v)} \frac{(\theta - y_u)^{+}}{1-\theta} \leq 1.
	\end{align*}
	
	Finally, for $u$ to be matched to a latter-arriving neighbor $v$, it must be picked and free when $v$ arrives, and so $\{u,v\}$ is indeed matched with probability
	\begin{align*}
	\Pr[\{u,v\}\in M] & = \frac{x_{uv}}{\Pr[\mbox{$u$ is free when $v$ arrives}]} \cdot \Pr[\mbox{$u$ is free when $v$ arrives}] = x_{uv}.\qedhere
	\end{align*}
\end{proof}

In the next section we present an algorithm which allows to round better-than-$\nicefrac{1}{2}$-competitive algorithms derived from \Cref{algo:TOBVC}.

\subsection{An improved algorithm}\label{sec:imrpoved-algorithm}

In this section, we build on \Cref{algo:warmup_algorithm} and show how to improve it to get a 
$(1/2 + \Omega(1))$ competitive ratio.

There are two concerns when modifying \Cref{algo:warmup_algorithm} to work for a general
function from the family $W$.
The first is how to compute the probability that a vertex $u$ is free when vertex $v$ arrives, 
in \Cref{line:zu}.
In the simpler version, we inductively showed that this probability is simply $1 - y_u$, where 
$y_u$ is the dual value of $u$ as of $v$'s arrival (see the proof of \Cref{lem:z-distribution}). 
With a general function $f$, this probability is no longer given by a simple formula. 
Nevertheless, it is easily fixable: We can either use Monte Carlo sampling to estimate the 
probability of $u$ being free at $v$'s arrival to a given inverse polynomial accuracy, 
or we can in fact exactly compute these probabilities by maintaining their marginal values 
as the algorithm progresses. 
In what follows, we therefore assume that our algorithm can compute these probabilities 
exactly.

The second and more important issue is with the sampling step in \Cref{line:sample}.
In the simpler algorithm, this step is well-defined as the sampling probabilities indeed form a 
valid distribution: 
I.e., $\sum_{u \in N_v(v)}z_u \leq 1$ for all vertices $v$.
However, with a general function $f$, this sum can exceed one, rendering the sampling step in 
\Cref{line:sample} impossible.
Intuitively, we can normalize the probabilities to make it a proper distribution, but by doing so,
we end up losing some amount from the approximation guarantee.
We hope to recover this loss using a second sampling step, as we mentioned in \Cref{sec:techniques}
and elaborate below.

Suppose that, instead of $\beta = 2$ and $f = f_1$ (i.e., the function $f(\theta) = 1 - \theta$), 
we use $f = f_{1 + 2 \e}$ and $\beta = 2 - \e$ to define $x_{uv}$ and $y_u$ values.
As we show later in this section, for an $\e$ sufficiently small, we then have $\sum_{u \in N_v(v)} 
z_u \leq 1 + O(\e)$, implying that the normalization factor is at most $1 + O(\e)$.
However, since the approximation factor of the fractional solution is only $1/2 + O(\e)$ for such 
a solution, (i.e., $\sum_{\{u,v\} \in E} x_{uv} \geq (1/\beta) \cdot \sum_{u \in V} y_u$), the 
loss due to normalization is too significant to ignore.

Now suppose that we allow arriving vertices to sample a second edge with a small 
(i.e., $\sqrt{\e}$) probability and match that second edge if the endpoint of the first sampled 
edge is already matched.
Consider the arrival of a fixed vertex $v$ such that $\sum_{u \in N_v(v)}z_u > 1$, and let 
$z'_u$ denote the normalized $z_u$ values. Further let $F_w$ denote the event that vertex $w$ is free (i.e, unmatched) at the arrival of $v$. 
Then the probability that $v$ matches $u$ for some $u \in N_v(v)$ using either of the two sampled
edges is
\begin{align}
    \Pr[F_u] \cdot \left(z'_{u} + z'_u \sqrt{\e} \cdot 
    \sum_{w\in N_v(v)} z'_w \cdot (1-\Pr[F_w \mid F_u]) \right), \label{prob-match}
\end{align}
which is the same expression from \eqref{eq:03hg329hg} from \Cref{sec:techniques}, restated here for quick reference. 
Recall that the first term inside the parentheses accounts for the probability that $v$ matches 
$u$ via the first sampled edges, and the second term accounts for the probability that the same 
happens via the second sampled edge. 
Note that the second sampled edge is used only when the first one
is incident to an already matched vertex and the other endpoint of the second edge is free. 
Hence we have the summation of conditional probabilities in the second term, 
where the events are conditioned on the other endpoint, $u$, being free.
If the probability given in \eqref{prob-match} is $x_{uv}$ for all $\{u,v\} \in E$, we would have 
the same guarantee as the fractional solution  $x_{uv}$, and the rounding would be lossless. 
This seems unlikely, yet we can show that the quantity in \eqref{prob-match} is at least  
$(1 - \e^2) \cdot x_{uv}$ for most (not by number, but by the total fractional value of $x_{uv}$'s) of 
the edges in the graph, showing that our rounding is \emph{almost} lossless. 
We postpone further discussion of the analysis to Section~\ref{sec:analysis} where we highlight the main ideas 
before proceeding with the formal proof.

\begin{algorithm}[h]
	\DontPrintSemicolon
	\SetKwInOut{Input}{Input}
	\SetKwInOut{Output}{Output}
	\Input{A stream of vertices $v_1, v_2, \dots, v_n$. At step $i$, vertex $v_i$ and $N_{v_i}(v)$ are revealed.}
	\Output{A matching $M$.}  
	Let $y_u \gets 0$ for all $u$, let $x_{uv} \gets 0$ for all $u, v$. \;
	Let $M \gets \emptyset$. \;
	\ForEach{$v$ in the stream}{ 
		Update $y_u$'s and $x_{uv}$'s using~\Cref{algo:TOBVC} with $\beta = 2 - \epsilon$ and $f = f_{1 + 2\epsilon}$. \label{line:fractional-to-round} \;
		\ForEach{$u \in N_v(v)$}{
          \tcc{Compute $\Pr[\text{$u$ is free when $v$ arrives}]$ as explained in \Cref{sec:imrpoved-algorithm}}
          $z_u \gets \frac{x_{uv}}{\Pr[\text{$u$ is free when $v$ arrives}]}$. \label{line:sample-improved-alg} 
		}
		\ForEach{$u \in N_v(v)$}{
			$z'_u \gets z_u / \max\left\{1,\sum_{u \in N_v(v)} z_u \right\}$.\label{line:normalize}
		}		
		Pick (at most) one $u_1 \in N_v(v)$ with probability $z'_{u_1}$. \label{line:first-pick} \;
		\If{$\sum_{u\in N_v(v)} z_u > 1$}{
			With probability $\sqrt{ \epsilon}$, pick (at most) one $u_2 \in N_v(v)$ with probability $z'_{u_2}$. \label{line:second-pick} \;
            \tcc{Probability of dropping edge $\{u, v\}$  can be computed using \eqref{prob-match}.}
			Drop $u_2$ with minimal probability ensuring $\{u_2,v\}$ is matched with probability at most $x_{u_2 v}$.\label{line:second-pick-drop} 
		}		
		\If{a free neighbor $u_1$ is sampled}{Add $\left\{ u_1, v \right\}$ to $M$.
		}
        \ElseIf{a free neighbor $u_2$ is sampled}{Add $\left\{ u_2, v \right\}$ to $M$.} 		
	\caption{A randomized online matching algorithm under general vertex arrivals.}
	\label{algo:wwrounding}  
}
\end{algorithm}

Our improved algorithm is outlined in~\cref{algo:wwrounding}. Up until \Cref{line:sample-improved-alg}, it is 
similar to \Cref{algo:warmup_algorithm} except that it uses $\beta = 2 - \e$ and $f = f_{1 + 2\e}$ where
we choose $\e > 0$ to be any constant small enough such that the results in the analysis hold.
In \Cref{line:normalize}, if the
sum of $z_u$'s exceeds one we normalize the $z_u$ to obtain a valid probability distribution $z'_u$.
In \Cref{line:first-pick}, we sample the first edge incident to an arriving vertex $v$.
In \Cref{line:second-pick}, we sample a second edge incident to the same vertex with probability $\sqrt{\e}$ 
if we had to scale down $z_u$'s in \Cref{line:normalize}.
Then in \Cref{line:second-pick-drop}, we drop the sampled second edge with the minimal probability
to ensure that no edge $\{u, v\}$ is matched with probability more than $x_{uv}$. 
Since \eqref{prob-match} gives the exact probability of $\{u, v\}$ being matched, this probability of 
dropping an edge $\{u, v\}$ can be computed by the algorithm.
However, to compute this, we need the conditional probabilities $\Pr[F_w \mid F_u]$, which again
can be estimated using Monte Carlo sampling\footnote{It is also possible to compute them exactly if we 
allow the algorithm to take exponential time.}.
In the subsequent lines, we match $v$ to a chosen free neighbor (if any) among its chosen neighbors, 
prioritizing its first choice.

For the purpose of analysis we view \Cref{algo:wwrounding} as constructing a greedy matching on a directed acyclic graph (DAG) $H_\tau$ defined in the following two definitions.

\begin{Def}[Non-adaptive selection graph $G_\tau$]\label{def:g-tau}
Let $\tau$ denote the random choices made by the vertices of $G$. Let $G_\tau$ be the DAG defined by all the arcs $(v, u_1)$, $(v, u_2)$ for all vertices $v \in V$. 
We call the arcs $(v, u_1)$ \emph{primary} arcs, and the arcs $(v, u_2)$ the \emph{secondary} arcs. 
\end{Def}

\begin{Def}[Pruned selection graph $H_\tau$]\label{def:h-tau}
Now construct $H_\tau$ from $G_\tau$ by removing all arcs $(v, u)$ (primary or secondary) such that there exists 
a \emph{primary} arc $(v', u)$ with $v'$ arriving before $v$. We further remove a secondary arc $(v,u)$ if there is a primary arc $(v,u)$; 
i.e., if a vertex $u$ has at least one incoming primary arc, remove all incoming primary arcs 
that came after the first primary arc and all secondary arcs that came after or from the same vertex as the first primary arc. 
\end{Def}

It is easy to see that the matching constructed by \Cref{algo:wwrounding} is a greedy matching constructed on $H_\tau$ based on order of arrival and prioritizing primary arcs. The following lemma shows that the set of matched vertices obtained by this greedy matching does not change much for any change in the random choices of a single vertex $v$, which will prove useful later on. 
It can be proven rather directly by an inductive argument showing the size of the symmetric difference in matched vertices in $G_\tau$ and $G_{\tau'}$ does not increase after each arrival besides the arrival of $v$, whose arrival clearly increases this symmetric difference by at most two. See \Cref{app:imrpoved-algorithm} for details.

\begin{restatable}{lem}{freevertices}\label{lem:freevertices}
  Let $G_{\tau}$ and $G_{\tau'}$ be two realizations of the random digraph where all the vertices in 
  the two graphs make the same choices except for \emph{one} vertex $v$. 
  Then the number of vertices that have different \emph{matched status} (free/matched) in the matchings computed in $H_{\tau}$ and $H_{\tau'}$
  at any point of time is at most two.
\end{restatable}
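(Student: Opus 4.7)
The plan is to proceed by induction on the arrival time $t$, maintaining the invariant that the symmetric difference $D_t$ between the sets of matched vertices produced by the greedy matchings on $H_\tau$ and $H_{\tau'}$ satisfies $|D_t|\leq 2$. Since the algorithm's behavior on each arrival depends only on that vertex's (identical in both realizations) random choices and the current matched statuses of its neighbors, all the information needed at each step is summarized by $D_t$.

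For the base case, note that before $v$ arrives, all random choices in both realizations coincide, the two greedy matchings are identical, and so $D_t=\emptyset$. When $v$ itself arrives, only $v$'s choices differ; since a single arrival adds at most one edge to the matching, $v$ matches some $u$ in $\tau$ and some $u'$ (possibly none, possibly $u=u'$) in $\tau'$, where $u, u'$ must have been free in both realizations just before $v$ arrived (as their statuses coincided by the inductive hypothesis). Consequently at most two vertices ($v$ and the differing match partner) enter $D_t$, giving $|D_t|\leq 2$.

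For the inductive step, consider an arrival $w\neq v$, and let $D$ denote the symmetric difference just before $w$'s arrival, with $|D|\leq 2$. Since $w\neq v$, its primary choice $u_1$ and (if sampled) secondary choice $u_2$ are the same in $\tau$ and $\tau'$. We case-analyze on how many of $u_1,u_2$ lie in $D$. If neither is in $D$, then $u_1,u_2$ have identical statuses in the two realizations and $w$ performs the same greedy action, so $D$ is unchanged. If exactly one of $u_1,u_2$ is in $D$, one checks that either $w$ ends up making the same matching decision in both realizations, or else one element of $D$ is removed (because it becomes matched in both) and exactly one new vertex (either $w$ or the remaining choice) enters $D$; in either subcase $|D|$ does not increase. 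The remaining situation is $D=\{u_1,u_2\}$, with $|D|=2$.

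The main obstacle is this last case. We enumerate the four joint status assignments of $(u_1,u_2)$, where each is free in one realization and matched in the other. In two of the four sub-cases (those where $u_1$ is free in exactly one realization while $u_2$ is free in exactly the opposite one), $w$ matches $u_1$ in one realization and $u_2$ in the other, so both $u_1$ and $u_2$ become matched in both realizations, $w$ does too, and the new $D$ is empty. In the other two sub-cases (where $u_1$ and $u_2$ are free in the \emph{same} realization), $w$ matches $u_1$ in that realization and nothing in the other, so $u_1$ is removed from $D$ and $w$ is added, while $u_2$ remains in $D$; hence $|D|$ stays at $2$. In all sub-cases the invariant $|D|\leq 2$ is preserved, completing the induction and proving the lemma.
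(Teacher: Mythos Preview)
Your proof is correct and takes essentially the same approach as the paper: induction on arrivals, maintaining $|D|\leq 2$ for the symmetric difference $D$ of matched vertices, with the same base case and a case analysis for each subsequent arrival (yours organized by how many of $u_1,u_2$ lie in $D$, the paper's organized by the \emph{outcome} of the arrival --- same match, matched in exactly one realization, or matched to distinct neighbors). One minor wording slip in your base case: when $v$ matches distinct neighbors $u\neq u'$ in the two realizations, $v$ itself is matched in both and hence does not enter $D$; the two new elements of $D$ are $u$ and $u'$, but the bound $|D|\leq 2$ is unaffected.
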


\subsection{Analysis}\label{sec:analysis}
\newcommand{\EventEQ}[1]{\ensuremath{\mathrm{EQ}_{#1}}}
\newcommand{\EventIN}[1]{\ensuremath{\mathrm{IN}_{#1}}}

In this section, we analyze the competitive ratio of~\Cref{algo:wwrounding}. We start with an outline of the analysis where we highlight the main ideas.

\subsubsection{High-Level Description of Analysis}
As described in~\Cref{sec:imrpoved-algorithm}, the main difference compared to the simpler $\nicefrac{1}{2}$-competitive algorithm is the change of the construction of the fractional solution, which in turn makes the rounding more complex. In particular, we may have at the arrival of a vertex $v$ that $\sum_{u\in N_v(v)} z_u > 1$.  The majority of the analysis is therefore devoted to such ``problematic'' vertices since otherwise, if $\sum_{u\in N_v(v)} z_u \leq 1$, the rounding is lossless due to the same reasons as described in the simpler setting of~\Cref{sec:warmup}. We now outline the main ideas in analyzing a vertex $v$ with $\sum_{u\in N_v(v)} z_u > 1$. Let $F_w$ be the event that vertex $w$ is free (i.e., unmatched) at the arrival of $v$. Then, as described in~\Cref{sec:imrpoved-algorithm}, the probability that we select edge $\{u,v\}$ in our matching is the minimum of $x_{uv}$  (because of the pruning in \Cref{line:second-pick-drop}), and
\begin{align*}
\Pr[F_u] \cdot \left(z'_{u} + z'_u \sqrt{\varepsilon} \cdot \sum_{w\in N_v(v)} z'_w \cdot (1-\Pr[F_w \mid F_u])  \right).
\end{align*}
 By definition, $\Pr[F_u] \cdot z_u = x_{uv}$, and the expression inside the parentheses is at least $z_u$ (implying $\Pr[\{u,v\}\in M]= x_{uv})$ if
\begin{align}
1 + \sqrt{\varepsilon} \cdot \sum_{w\in N_v(v)} z'_w \cdot (1-\Pr[F_w \mid F_u])  \geq \frac{z_u}{z'_{u}}.
\label{eq:enough}
\end{align}
To analyze this inequality, we first use the structure of the selected function
$f= f_{1+2\e}$ and the selection of $\beta = 2-\e$ to show that if $\sum_{u\in
N_v(v)} z_u > 1$ then several structural properties hold (see~\Cref{lem:propertiesWW} and~\Cref{cor:ww} in~\Cref{sec:propertiesWW}). In particular, there are absolute constants $0<c<1$ and $C>1$ (both independent of $\e$) such that 
\begin{enumerate}
  \item $\sum_{u\in
    N_v(v)} z_u \leq 1+ C\e$;
  \item $z_u \leq C\sqrt{\e}$ for every $u\in N_v(v)$; and
  \item $c \leq \Pr[F_w]  \leq 1-c$ for every $w\in N_{v}(v)$.
\end{enumerate}
The first property implies that the right-hand-side of~\eqref{eq:enough} is at most $1 + C\e$; and the second property implies that $v$ has at least $\Omega(1/\sqrt{\e})$ neighbors and that each neighbor $u$ satisfies $z'_u \leq z_u \leq C\sqrt{\e}$. 

For simplicity of notation, we assume further in the high-level  overview  that $v$ has exactly $1/\sqrt{\e}$ neighbors and each $u\in N_v(v)$ satisfies $z'_u = \sqrt{\e}$. Inequality~\eqref{eq:enough} would then be implied by
\begin{align}
 \sum_{w\in N_v(v)}   (1-\Pr[F_w \mid F_u])  \geq  C\,.
 \label{eq:enough_simple}
\end{align}
To get an intuition why we would expect the above inequality to hold, it is instructive to consider the unconditional version:
\begin{align*}
  \sum_{w\in N_v(v)}   (1-\Pr[F_w ]) \geq c |N_v(v)| = c/\sqrt{\eps} \gg C \,,
\end{align*}
where the first inequality is from the fact that  $\Pr[F_w] \leq 1-c$ for any neighbor $w\in N_v(v)$. The large slack  in the last inequality, obtained by selecting $\e>0$ to be a sufficiently small constant, is used to bound the impact of conditioning on the event $F_u$. Indeed, due to the large slack, we  have that~\eqref{eq:enough_simple} is satisfied if the quantity  $\sum_{w \in N_v(v)}\Pr[F_w|F_u]$ is not too  far away from the same summation with unconditional 
probabilities, i.e., $\sum_{w \in N_v(v)} \Pr[F_w]$.  Specifically, it is sufficient to show
\begin{align}
  \sum_{w \in N_v(v)}\left(\Pr[F_w|F_u] - \Pr[F_w]\right) \leq c/\sqrt{\eps} - C \,.
  \label{eq:simpleFsum}
\end{align}
We do so by bounding the correlation between the events $F_u$ and $F_w$ in a highly non-trivial manner, 
which constitutes the heart of our analysis.
The main challenges are that  events $F_u$  and $F_w$ can be positively correlated and that, by conditioning on $F_u$, the primary and secondary choices of different vertices are no longer independent. 

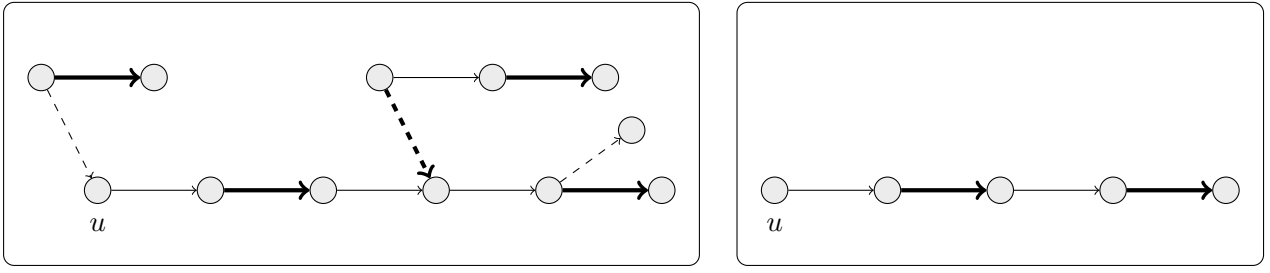
\begin{figure}[t]
  \centering
  \begin{tikzpicture}

         \draw[rounded corners] (-1.25, -1.0) rectangle (8.0, 2.5);
    \node[sgvertex] (u) at (0,0) {};
    \node[sgvertex] (u1) at (1.5,0) {};
    \node[sgvertex] (u2) at (3,0) {};
    \node[sgvertex] (u3) at (4.5,0) {};
    \node[sgvertex] (u4) at (6,0) {};
    \node[sgvertex] (u5) at (7.5,0) {};

    \node[sgvertex] (a2) at (3.75,1.5) {};
    \node[sgvertex] (a3) at (5.25,1.5) {};
    \node[sgvertex] (a4) at (6.75,1.5) {};
    \node[sgvertex] (b4) at (7.10,0.8) {};
    
    \node[sgvertex] (l1) at (-0.75,1.5) {};
    \node[sgvertex] (l2) at (0.75,1.5) {};

    \draw (u) edge[-> ] (u1)
          (u1) edge[->, ultra thick] (u2)
          (u2) edge[->] (u3)
          (u3) edge[->] (u4)
          (u4) edge[->, ultra thick] (u5) edge[->, dashed] (b4)
          (a2) edge[->, dashed, ultra thick] (u3)
          (a2) edge[->] (a3)
          (a3) edge[->, ultra thick] (a4)
          (l1) edge[->, dashed] (u) edge[->, ultra thick] (l2);

    \node[below =0.06cm of u] {$u$};
          
       \begin{scope}[xshift=9cm]
         \draw[rounded corners] (-0.5, -1.0) rectangle (6.5, 2.5);
    \node[sgvertex] (u) at (0,0) {};
    \node[sgvertex] (u1) at (1.5,0) {};
    \node[sgvertex] (u2) at (3,0) {};
    \node[sgvertex] (u3) at (4.5,0) {};
    \node[sgvertex] (u4) at (6,0) {};
    \draw (u) edge[->] (u1)
          (u1) edge[->, ultra thick] (u2)
          (u2) edge[->] (u3)
          (u3) edge[->, ultra thick] (u4);
    \node[below =0.06cm of u] {$u$};
      \end{scope}

  \end{tikzpicture}
  \caption{Two examples of the component of $H_\tau$ containing $u$. Vertices are depicted from right to left in the arrival order. Primary and secondary arcs are solid and dashed, respectively. The edges that take part in the matching are thick.}
  \label{fig:component}
\end{figure}
We overcome the last difficulty by replacing the conditioning on $F_u$ by a conditioning on the component in $H_\tau$ (at the time of $v$'s arrival) that includes $u$. As explained in~\Cref{sec:imrpoved-algorithm}, the matching output by our algorithm is equivalent to the greedy matching constructed in $H_\tau$ and so the component containing $u$ (at the time of $v$'s arrival) determines $F_u$. But how can this component look like, assuming the event $F_u$? First, $u$ cannot have any incoming primary arc since then $u$ would be matched (and so the event $F_u$ would be false). However, $u$ could have incoming secondary arcs, assuming that the tails of those arcs are matched using their primary arcs. Furthermore, $u$ can have an outgoing primary and possibly a secondary arc if the selected neighbors are already matched. These neighbors can in turn have incoming secondary arcs, at most one incoming primary arc (due to the pruning in the definition of $H_\tau$),  and outgoing primary and secondary arcs; and so on. In~\Cref{fig:component}, we give two examples  of the possible structure, when conditioning on $F_u$,  of $u$'s component in $H_\tau$  (at the time of  $v$'s arrival).
The left example contains secondary arcs, whereas the component on the right is arguably simpler and only contains primary arcs. 

An important step in our proof is to prove that, for most vertices $u$, the component is of the simple form depicted to the right  with probability almost one. That is, it is a path $P$ consisting of primary arcs, referred to as a primary path (see~\Cref{def:primarypath}) that further satisfies:
\begin{enumerate}[(i)]
  \item it has length $O(\ln(1/\e))$; and
  \item the total $z$-value of the arcs in the blocking set of $P$ is $O(\ln(1/\e))$. The blocking set is defined in~\Cref{def:blocking}. Informally, it contains those arcs that if appearing as primary arcs in $G_\tau$ would cause  arcs of $P$ to be pruned  (or blocked) from $H_\tau$. 
\end{enumerate}
Let $\mathcal{P}$ be the primary paths of above type that appear with positive probability as $u$'s component in $H_\tau$. Further let $\EventEQ{P}$ be the event that $u$'s component equals $P$. Then we show (for most vertices) that $\sum_{P \in \mathcal{P}} \Pr[\EventEQ{P} \mid F_u]$ is almost one. For simplicity, let us assume here that the sum is equal to one. Then by the law of total probability and since $\sum_{P \in \mathcal{P}} \Pr[\EventEQ{P} \mid F_u] =1$, 
\begin{align*}
  \sum_{w \in N_v(v)}\left(\Pr[F_w \mid F_u] - \Pr[F_w]\right) & = \sum_{P\in \mathcal{P}} \Pr[\EventEQ{P} \mid F_u]\left( \sum_{w \in N_v(v)}\left(\Pr[F_w \mid F_u, \EventEQ{P}] - \Pr[F_w]\right) \right) \\
  & = \sum_{P\in \mathcal{P}} \Pr[\EventEQ{P} \mid F_u]\left( \sum_{w \in N_v(v)}\left(\Pr[F_w \mid \EventEQ{P}] - \Pr[F_w]\right) \right),
\end{align*}
where the last equality is because the component $P$ determines $F_u$. The proof is then completed by analyzing the term inside the parentheses for each  primary path $P\in \mathcal{P}$ separately. 
As we prove in~\Cref{lm:conditional-product}, the independence of primary and secondary arc choices of vertices is maintained after conditioning on $\EventEQ{P}$.\footnote{To be precise, conditioning on a primary path $P$ with a so-called termination certificate $T$, see~\Cref{def:primarypath}. In the overview, we omit this detail and consider the event $\EventEQ{P,T}$ (instead of $\EventEQ{P}$) in the formal proof.} Furthermore, we show  that there is a bijection between the outcomes of the  unconditional and the conditional distributions, so that the expected number of vertices that make different choices under this pairing can be upper bounded  by roughly the length of the path plus the $z$-value of the edges in the blocking set. So, for a path $P$ as above, we have that the expected number of vertices that make different choices in the paired outcomes is $O(\ln(1/\e))$  which, by~\Cref{lem:freevertices}, implies that the expected number of vertices that change matched status is also upper bounded by $O(\ln(1/\e))$.  In other words,
we have for every $P\in \mathcal{P}$ that
\begin{align*}
  \sum_{w \in N_v(v)}\left(\Pr[F_w|\EventEQ{P}] - \Pr[F_w]\right) \leq \sum_{w \in V}\left(\Pr[F_w|\EventEQ{P}] - \Pr[F_w]\right) = O(\ln(1/\e)),
\end{align*}
which implies~\eqref{eq:simpleFsum} for a small enough choice of $\eps$. This completes the overview of the main steps in the analysis. The main difference in the formal proof is that not all vertices satisfy that their component is a short primary path  with probability close to $1$. To that end, we define the notion of \emph{good} vertices in~\Cref{sec:boundvar}, which are the vertices that are very unlikely 
to have long directed paths of primary arcs rooted at them. These are exactly the vertices $v$ for which we can perform the above analysis for most neighbors $u$ (in the proof of the ``key lemma'') implying that the rounding is almost lossless for $v$. Then, in~\Cref{sec:bad-is-ok}, we show using a rather simple charging scheme that most of the vertices in the graph are good. Finally, in~\Cref{sec:competitive}, we put everything together and prove~\Cref{thm:vertex-intro}.

\subsubsection{Useful Properties of \texorpdfstring{$W$}{W} Functions and \Cref{algo:wwrounding}}
\label{sec:propertiesWW}
For the choice of $f=f_{1+2\epsilon}$ as we choose, we have $f(\theta) = \left(1+ \varepsilon - \theta\right) \cdot \left(\frac{\theta+\varepsilon}{1+\varepsilon-\theta}
\right)^{\frac{\varepsilon}{1+2\varepsilon}}$. In \Cref{app:WW-functions} we give a more manageable upper bound for $f(\theta)$ which holds for sufficiently small $\epsilon$. Based on this simple upper bound on $f$ and some basic calculus, we obtain the following useful structural properties for the conditional probabilities, $z_u$, of \Cref{algo:wwrounding}. See \Cref{app:WW-functions}.

\begin{restatable}{lem}{WWproperties}(Basic bounds on conditional probabilities $z_u$)\label{lem:propertiesWW}
There exist absolute constants $c\in (0, 1)$  and $C>1/c>1$ and $\epsilon_0\in (0, 1)$ such that for every $\epsilon\in (0, \epsilon_0)$ the following holds: for every vertex $v\in V$, if $y_u$ is the dual variable of a neighbor $u\in N_v(v)$ before $v$'s arrival and $\theta$ is the value chosen by \Cref{algo:TOBVC} on $v$'s arrival, then for $z_u$ as defined in~\Cref{algo:wwrounding}, we have:
	
	\begin{description}
		\item[(1)] If $\theta\not\in (c, 1-c)$, then $\sum_{u\in N_v(v)} z_u\leq 1$,
		\item[(2)] If $\theta\in [0, 1]$, then $\sum_{u\in N_v(v)} z_u           \leq 1+C\e$,
		\item[(3)] If $\sum_{u\in N_v(v)} z_u>1$, then 
		$z_u\leq C\sqrt{\e}$ for every $u\in N_v(v)$,
		\item[(4)] If $\sum_{u\in N_v(v)} z_u>1$, then for every $u\in N_v(v)$ such that $z_u>0$, one has $y_u\in [c/2, 1-c/2]$, and
		\item[(5)] For all $u \in N_v(v)$, one has $z_u \leq 1/2 + O(\sqrt{\e})$.
	\end{description}
\end{restatable}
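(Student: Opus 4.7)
The approach is to derive all five parts of the lemma from a single pair of bounds---one on $\sum_u z_u$ and one on each individual $z_u$---obtained by combining the explicit formula for $x_{uv}$ in \Cref{algo:TOBVC} with a lower bound on $\Pr[F_u]$ supplied by \Cref{refined-ww-xu-bound}. The pruning step in Line \ref{line:second-pick-drop} of \Cref{algo:wwrounding} matches every edge $\{u,w\}$ with probability at most $x_{uw}$, so a union bound gives $\Pr[F_u] \ge 1 - x_u$; \Cref{refined-ww-xu-bound} then yields $\Pr[F_u] \ge (\beta - y_u - f(1-y_u))/\beta$. Substituting the formula $x_{uv} = (\theta - y_u)^+(f(\theta) + 1 - \theta)/(\beta f(\theta))$ and writing $h(y) := \beta - y - f(1-y)$, which is strictly decreasing (since $h'(y) = -1 + f'(1-y) \le -1$ by \Cref{f-non-increasing}), gives, for every $u$ with $z_u > 0$ (which in particular has $y_u \le \theta$),
\[
z_u \;\le\; \frac{(\theta - y_u)^+ (f(\theta) + 1 - \theta)}{f(\theta)\, h(\theta)} \quad \text{and}\quad \sum_u z_u \;\le\; \frac{f(\theta) + 1 - \theta}{\beta - \theta - f(1-\theta)},
\]
the latter using the constraint $\sum_u (\theta - y_u)^+ \le f(\theta)$ enforced by the choice of $\theta$ in \Cref{algo:TOBVC}. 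I call this the \emph{master inequality}.

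The next step is to rewrite the denominator using the integral identity~\eqref{WW-tight}, obtaining $\beta - \theta - f(1-\theta) = (\beta - \beta^*(f)) + (1-\theta) + \int_\theta^1 (1-t)/f(t)\,dt$. With $\beta = 2-\eps$ and $\beta^*(f) = 1 + f(0)$, the explicit formula $f(0) = (1+\eps)^{(1+\eps)/(1+2\eps)}\eps^{\eps/(1+2\eps)}$ together with a Taylor expansion gives $\beta - \beta^*(f) = \Theta(\eps\ln(1/\eps))$, a crucial positive slack for small $\eps$. Combining this with the explicit upper bound $\hat f$ on $f_{1+2\eps}$ proven in \Cref{app:WW-functions} (which controls $f(\theta) + 1 - \theta$ by $2(1-\theta)$ plus an $O(\eps)$ correction in the bulk of $[0,1]$), direct comparison of the numerator and denominator of the master inequality yields Parts~(1) and~(2): Part~(2) follows globally, while Part~(1) follows in the two extremal regimes $\theta \in [0,c]$ and $\theta \in [1-c,1]$ by noting that $\int_\theta^1 (1-t)/f(t)\,dt$ dominates $f(\theta)$ in the first regime and the slack $\beta - \beta^*(f)$ together with the rapid decay $f(\theta) \to \eps$ carries the second.

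Parts~(3)--(5) use the individual form of the master inequality together with monotonicity of $h$. Part~(4) is by contrapositive: if some $u$ with $z_u>0$ satisfied $y_u < c/2$, then $h(y_u) \gg h(\theta)$ for $\theta \in (c, 1-c)$ (the range forced by Part~(1) when $\sum_u z_u > 1$), shrinking that term enough to tighten the sum below $1$; a symmetric argument rules out $y_u > 1-c/2$. Part~(5) follows by combining the individual bound with $(\theta-y_u)^+\le f(\theta)$ to obtain $z_u \le (f(\theta)+1-\theta)/h(\theta)\le 1+O(\eps)$ globally, and then sharpening to $1/2 + O(\sqrt{\eps})$ by substituting the explicit form of $f_{1+2\eps}$ near the maximizer of the ratio ($\theta\approx 1/2$, where $f(\theta)=1/2+\eps$), which evaluates to $1/2$ plus lower order. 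Finally, Part~(3) combines Part~(4) with the observation that $\sum_u z_u > 1$ pins $\theta$ near the maximizer of the ratio; the tight constraint $\sum_u (\theta - y_u)^+ = f(\theta)$ and a second-order expansion around this maximizer then force each individual $(\theta - y_u)^+$ to be $O(\sqrt{\eps})$, hence each $z_u$ to be $O(\sqrt{\eps})$.

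The main technical obstacle is the delicate positive slack of size $\Theta(\eps\ln(1/\eps))$ in $\beta - \beta^*(f)$: the coarse approximation $f_{1+2\eps}(\theta) \approx 1 - \theta$ loses this slack entirely, and even Part~(2) fails without the more refined $\hat f$ from \Cref{app:WW-functions}. The secondary obstacle is Part~(3), where one must go beyond the first-order master inequality and use the tightness of the matching-constraint equality $\sum_u (\theta-y_u)^+ = f(\theta)$ together with the specific concavity properties of $f_{1+2\eps}$ near $\theta=1/2$ to force the individual deviations $\theta - y_u$ to scale like $\sqrt{\eps}$ rather than merely $O(1)$.
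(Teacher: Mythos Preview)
Your derivation of the master inequality matches the paper exactly, and your treatment of Parts~(1) and~(2) via the integral identity \eqref{WW-tight} is a legitimate alternative to the paper's $\gamma(\theta,\eps)$-based case analysis. Your contrapositive argument for Part~(4) is also different from the paper (which deduces~(4) from~(3)), but it can be made to work with more care: if some $u^*$ has $y_{u^*}<c/2$ while $\theta\in(c,1-c)$, then $\alpha_{u^*}:=(\theta-y_{u^*})/f(\theta)\ge\Omega(c)$ carries a constant fraction of the budget $\sum_u\alpha_u\le 1$, while its individual ratio $(f(\theta)+1-\theta)/h(y_{u^*})$ is bounded by $1-\Omega(c)$; the weighted combination then falls below~$1$.

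There are, however, two genuine gaps.

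\textbf{Part~(5).} Your sharpening step is wrong. The bound you derive, $z_u\le(f(\theta)+1-\theta)/h(\theta)$, evaluates at $\theta=1/2$ to $(1+\eps)/(1-2\eps)\approx 1$, not $1/2$. The paper instead tightens the \emph{numerator}: from $(\theta-y_u)^+\le f(\theta)$ one gets $(\theta-y_u)^+\bigl(1+\tfrac{1-\theta}{f(\theta)}\bigr)\le(\theta-y_u)+(1-\theta)=1-y_u$, whence $z_u\le(1-y_u)/h(y_u)$. Since $h(y_u)=\beta-y_u-f(1-y_u)\ge 2(1-y_u)-O(\sqrt{\eps})$ for $y_u$ bounded away from $1$ (using the Taylor bound on $f$), this gives $z_u\le 1/2+O(\sqrt{\eps})$; a separate short calculation handles $y_u$ close to~$1$.

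\textbf{Part~(3).} Your description does not capture the mechanism. The assumption $\sum_u z_u>1$ does \emph{not} pin $\theta$ near any maximizer; it only forces $\theta\in(c,1-c)$. And the equality $\sum_u(\theta-y_u)^+=f(\theta)$ by itself places no upper bound on individual terms (one $u$ could absorb the entire budget). The paper's argument is a second-order expansion \emph{in the variables} $\alpha_u=(\theta-y_u)^+/f(\theta)$: using $\tfrac{1}{a+b}=\tfrac{1}{a}-\tfrac{b}{a(a+b)}$ with $a=2-4\eps-2\theta$ and $b=2f(\theta)\alpha_u$, one obtains
\[
\sum_{u} z_u \;\le\; 1+O(\eps)\;-\;\Omega(1)\cdot\sum_{u}\alpha_u^2,
\]
where the $\Omega(1)$ constant is uniform over $\theta\in(c,1-c)$. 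The hypothesis $\sum_u z_u>1$ then forces $\sum_u\alpha_u^2=O(\eps)$, hence every $\alpha_u=O(\sqrt{\eps})$ and every $z_u=O(\sqrt{\eps})$. Without this quadratic-penalty step, nothing prevents a single $\alpha_u$ of constant size, so your outline for~(3) does not close.
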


The following corollary will be critical to our analysis:
\begin{restatable}{cor}{probfree}\label{cor:ww}
There exist absolute constants $c>0$ and $\epsilon_0>0$ such that for all $\epsilon\in (0,\eps)$, on arrival of any vertex $v\in V$, if $z$ as defined in~\Cref{algo:wwrounding} satisfies $\sum_{u\in N_v(v)} z_u > 1$, then for every $u\in N_v(v)$ we have 
$$
c\leq \Pr[\mbox{$u$ is free when $v$ arrives}] \leq 1-c.
$$
\end{restatable}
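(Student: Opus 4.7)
The plan is to bound $\Pr[F_u] := \Pr[\text{$u$ is free when $v$ arrives}]$ both above and below by relating it to $u$'s fractional degree $x_u = \sum_{w \in N_v(u)} x_{uw}$ just before $v$'s arrival, and then using \Cref{refined-ww-xu-bound} to translate these into bounds in terms of $y_u$, which \Cref{lem:propertiesWW}(4) controls. It suffices to treat $u \in N_v(v)$ with $z_u > 0$ (the only vertices influencing the sampling at $v$), so \Cref{lem:propertiesWW}(4) gives $y_u \in [c/2,\, 1 - c/2]$ where $c$ is the constant from that lemma.

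For the \emph{upper} bound $\Pr[F_u] \leq 1 - c'$, I would lower-bound the probability that $u$ gets matched before $v$ using \emph{only} the primary picks in \Cref{algo:wwrounding}. At each earlier neighbor $w$'s arrival, the event that $w$'s primary pick equals $u$ \emph{and} $u$ is free at that moment has probability exactly $z'_u(w) \cdot \Pr[\text{$u$ free at $w$}]$, by independence of the fresh randomness used for $w$'s primary pick from the prior history. By \Cref{lem:propertiesWW}(2) we have $z'_u(w) \geq z_u(w)/(1 + C\varepsilon)$, and by definition of $z_u(w)$ we have $z_u(w) \cdot \Pr[\text{$u$ free at $w$}] = x_{uw}$, so this probability is at least $x_{uw}/(1 + C\varepsilon)$. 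Summing over $w \in N_v(u)$ (the events are disjoint since $u$ is matched at most once) and using the lower bound $x_u \geq y_u/\beta \geq c/(2\beta)$ from \Cref{refined-ww-xu-bound} and \Cref{lem:propertiesWW}(4), I get $\Pr[F_u] \leq 1 - c/(2\beta(1 + C\varepsilon)) \leq 1 - c/8$ for $\varepsilon$ small enough, since $\beta = 2 - \varepsilon$.

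For the \emph{lower} bound $\Pr[F_u] \geq c'$, I would instead upper-bound the probability that $u$ is matched before $v$. The dropping step in Line~\ref{line:second-pick-drop} of \Cref{algo:wwrounding} ensures $\Pr[\{u, w\} \in M] \leq x_{uw}$ for every prior neighbor $w$, so a union bound gives $\Pr[\text{$u$ matched before $v$}] \leq x_u$. Applying \Cref{refined-ww-xu-bound} in the opposite direction together with the monotonicity of $y \mapsto y + f(1 - y)$ in $y$ (easily checked from the identity $f'(z) = -z/f(1-z)$ derived in the proof of \Cref{f-non-increasing}) gives $x_u \leq \bigl((1 - c/2) + f(c/2)\bigr)/\beta$. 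As $\varepsilon \to 0$, the function $f = f_{1 + 2\varepsilon}$ converges pointwise to $f_1(\theta) = 1 - \theta$, so $f(c/2) = 1 - c/2 + o(1)$, yielding $x_u \leq (2 - c + o(1))/(2 - \varepsilon) = 1 - c/2 + O(\varepsilon)$ and hence $\Pr[F_u] \geq c/2 - O(\varepsilon) \geq c/4$ for sufficiently small $\varepsilon$.

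Setting $c' := c/8$ and shrinking $\varepsilon_0$ so that all the $O(\varepsilon)$ slack in both estimates is at most $c/8$ then delivers both inequalities simultaneously. The only quantitative step that is not immediate from the structural lemmas already in hand is the estimate $f_{1 + 2\varepsilon}(c/2) = 1 - c/2 + o(1)$; this is precisely the kind of elementary bound on functions in the family $W$ supplied by Appendix~A, and I expect it to be the sole place where real (but routine) calculation is required.
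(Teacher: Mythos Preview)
Your proof is correct and follows essentially the same two-sided strategy as the paper: lower-bound the matching probability of $u$ via primary picks (using \Cref{lem:propertiesWW}(2), \Cref{refined-ww-xu-bound}, and \Cref{lem:propertiesWW}(4)) for the upper bound on $\Pr[F_u]$, and upper-bound it by $x_u$ via the dropping step for the lower bound. The only difference is in the last step of the lower bound: you bound $y_u + f(1-y_u)$ by monotonicity in $y_u$ and then estimate $f(c/2)=1-c/2+o(1)$, whereas the paper avoids that limiting calculation by using directly that $f$ is decreasing and $\beta > \beta^*(f) = 1+f(0)$, giving $y_u + f(1-y_u) \le (1-c/2) + f(0) < \beta - c/2$ and hence $\Pr[F_u] \ge 1 - (\beta - c/2)/\beta \ge c/5$.
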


\begin{proof}
	By~\Cref{lem:propertiesWW}, {\bf (1)} and {\bf (4)} we have that if $\sum_{u\in N_v(v)} z_u> 1$, then $\theta\in (c, 1-c)$ ($c$ is the constant from~\Cref{lem:propertiesWW}), and for every $u\in N_v(v)$ one has 
	\begin{equation}\label{eq:924hg9gfd}
	y_u\in [c/2, 1-c/2].
	\end{equation} 
	On the other hand, by~\Cref{refined-ww-xu-bound} one has 
	\begin{equation}\label{eq:23g24g43g43g}
	\frac{y_u}{\beta} \leq	x_u  \leq \frac{y_u + f(1-y_u)}{\beta},
	\end{equation}
	where $x_u$ is the fractional degree of $u$ when $v$ arrives.
	
	We now note that by~\Cref{lem:propertiesWW}, {\bf (2)}, we have that \Cref{algo:wwrounding} matches every vertex $u$ with probability at least $x_u/(1+C\e)$ (due to choices of primary arcs), and thus
	\begin{equation*}
	\begin{split}
	\Pr[\mbox{$u$ is free when $v$ arrives}]&\leq 1-\frac{x_u}{1+C\e}\\
	&\leq 1-\frac{y_u}{\beta(1+C\e)}\text{~~~~~~~(by~\eqref{eq:23g24g43g43g})}\\
	&\leq 1-\frac{c/2}{2(1+C\e)}\text{~~~~~~~(by~\eqref{eq:924hg9gfd} and the setting $\beta=2-\e\leq 2$)}\\
	&\leq 1 - c/5,
	\end{split}
	\end{equation*}
	as long as $\e$ is sufficiently small.
	
	For the other bound we will use two facts. The first is that the since $f(y)$ is monotone decreasing by \Cref{f-non-increasing} and since we picked $\beta>\beta^*(f) = 1+f(0)$, we have that for any $y\leq 1-c/2\leq 1$, 
	\begin{equation}\label{eq:5r435trdgtr4w}
	y+f(1-y)\leq 1-c/2+f(0) < \beta - c/2.
	\end{equation}
	Then, using the fact that by \Cref{line:second-pick-drop}, \Cref{algo:wwrounding} matches every vertex $u$ with probability at most $x_u$, we obtain the second bound, as follows.
	\begin{align*}
	\Pr[\mbox{$u$ is free when $v$ arrives}]&\geq 1-x_u && \\
	&\geq 1-\frac{y_u + f(1-y_u)}{\beta} && \mbox{(by~\eqref{eq:23g24g43g43g})}\\
	&\geq 1-\frac{\beta-c/2}{\beta} && \mbox{(by~\eqref{eq:924hg9gfd} and \eqref{eq:5r435trdgtr4w})}\\
	& \geq c/5. && \mbox{($\beta=2-\e<2.5$)}
	\end{align*}
	
	Choosing $c/5$ as the constant in the statement of the lemma, we obtain the result.
\end{proof}

Finally, for our analysis we will rely on the competitive ratio 
of the fractional solution maintained in \Cref{line:fractional-to-round} being $1/\beta$.
This follows by \Cref{ww-approx} and the fact that for our choices of $\beta=2-\epsilon$ and $f=f_{1+2\epsilon}$ we have that $\beta\geq \beta^*(f)$. See \Cref{app:WW-functions} for a proof of this fact.

\begin{restatable}{fact}{betafine}\label{beta-fine}
	For all sufficiently small $\epsilon>0$, we have that $2-\epsilon \geq \beta^*(f_{1+2\epsilon})$.
\end{restatable}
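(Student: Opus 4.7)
The plan is to use the identity $\beta^*(f_\kappa) = 1 + f_\kappa(0)$, valid for every $f_\kappa \in W$ and noted in the paragraph following the definition of $W$ (cf.\ equation~\eqref{WW-tight}). This immediately reduces the claim $2-\epsilon \geq \beta^*(f_{1+2\epsilon})$ to showing that $f_{1+2\epsilon}(0) \leq 1-\epsilon$ for all sufficiently small $\epsilon > 0$. Substituting $\kappa = 1+2\epsilon$ into the definition of $f_\kappa$ at $\theta = 0$ gives $\tfrac{1+\kappa}{2} = 1+\epsilon$, $\tfrac{\kappa-1}{2} = \epsilon$, $\tfrac{1+\kappa}{2\kappa} = \tfrac{1+\epsilon}{1+2\epsilon}$, and $\tfrac{\kappa-1}{2\kappa} = \tfrac{\epsilon}{1+2\epsilon}$, so
\[
f_{1+2\epsilon}(0) \;=\; (1+\epsilon)^{\frac{1+\epsilon}{1+2\epsilon}} \cdot \epsilon^{\frac{\epsilon}{1+2\epsilon}}.
\]

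Next I would take logarithms and recast the desired inequality as $g(\epsilon) \leq 0$, where
\[
g(\epsilon) \;:=\; \frac{1+\epsilon}{1+2\epsilon}\ln(1+\epsilon) \;+\; \frac{\epsilon}{1+2\epsilon}\ln\epsilon \;-\; \ln(1-\epsilon).
\]
Taylor-expanding the two analytic pieces about $\epsilon = 0$, $\tfrac{1+\epsilon}{1+2\epsilon}\ln(1+\epsilon) = \epsilon + O(\epsilon^2)$ and $-\ln(1-\epsilon) = \epsilon + O(\epsilon^2)$, which gives
\[
g(\epsilon) \;=\; 2\epsilon \;+\; \frac{\epsilon\ln\epsilon}{1+2\epsilon} \;+\; O(\epsilon^2).
\]
Dividing by $\epsilon > 0$, the inequality $g(\epsilon) \leq 0$ is equivalent to $\frac{\ln\epsilon}{1+2\epsilon} \leq -2 + O(\epsilon)$, which plainly holds for all sufficiently small $\epsilon > 0$ since the left-hand side diverges to $-\infty$ as $\epsilon \to 0^+$.

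There is no real obstacle here; the statement is essentially a calculus verification. The only mildly subtle point is that the factor $\epsilon^{\epsilon/(1+2\epsilon)}$ is not captured by an ordinary polynomial Taylor expansion in $\epsilon$, since its logarithm contributes the transcendental term $\epsilon\ln\epsilon$. However, this term works in our favor: $\epsilon\ln\epsilon$ dominates $\epsilon$ in absolute value as $\epsilon \to 0^+$, so it easily overwhelms the $+2\epsilon$ arising from the other two logarithmic terms and yields the required inequality with room to spare.
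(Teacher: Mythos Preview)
Your proof is correct and follows essentially the same approach as the paper. Both arguments use the identity $\beta^*(f)=1+f(0)$ and then show $f_{1+2\epsilon}(0)\leq 1-\epsilon$ via the dominant $\epsilon\ln\epsilon$ term; the paper simply quotes its Taylor-expansion lemma for $f$ (\Cref{lm:taylor}) at $\theta=0$ to get $f(0)\leq 1+\epsilon\ln\!\big(\tfrac{\epsilon}{1+\epsilon}\big)+1.01\epsilon$ and observes $\ln\!\big(\tfrac{\epsilon}{1+\epsilon}\big)\leq -2.01$ for small $\epsilon$, whereas you expand $\ln f_{1+2\epsilon}(0)-\ln(1-\epsilon)$ directly---the underlying calculation is the same.
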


\subsubsection{Structural Properties of \texorpdfstring{$G_\tau$}{G-tau} and \texorpdfstring{$H_\tau$}{H-tau}}
\label{sec:structural}

In our analysis later, we focus on maximal primary paths (directed paths made of primary arcs) in $H_\tau$, 
in the sense that the last vertex along the primary path has no outgoing primary arc in $H_\tau$. 
The following definition captures termination certificates of such primary paths.
\begin{Def}[Certified Primary Path]
	A tuple $(P, T)$ is a certified primary path in $H_\tau$ if $P$ is a directed path of primary arcs in $H_\tau$ and either
	\begin{enumerate}
	  \item[{\bf (a)}] the last vertex of $P$ does not have an outgoing primary arc in $G_\tau$ and $T=\emptyset$, or 
	  \item[{\bf (b)}] the last vertex $u$ of $P$ has an outgoing primary arc $(u, w)$ in $G_\tau$ and $T=(u', w)$ is a primary arc in $H_\tau$ such that $u'$ precedes $u$ in the arrival order. 
	\end{enumerate}
  \label{def:primarypath}
\end{Def}
To elaborate, a certified primary path $(P,T)$ is made of a (directed) path $P$ of primary arcs in $H_\tau$ and $T$ is a certificate of $P$'s termination in $H_\tau$ that ensures the last vertex $u$ in $P$ has no outgoing primary arc in $H_\tau$, either due to $u$ not picking a primary arc with $T=\emptyset$, or due to the picked primary arc $(u,w)$ being blocked by another primary arc $T=(u',w)$ which appears in $H_\tau$.

As described, $G_\tau$ and $H_\tau$ differ in arcs $(u,w)$ that are \emph{blocked} by previous primary arcs to their target vertex $w$. We generally define sets of arcs which can block an edge, or a path, or a certified path from appearing in $H_\tau$ as in the following definition:
\begin{Def}[Blocking sets]  \label{def:blocking}
	For an arc $(u,w)$, define its \emph{blocking set}
	\begin{align*}
	B(u,w) := \{(u', w) \mid \mbox{$\{u',w\}$ is an edge and $u'$ arrived before $u$}\}
	\end{align*}
	to be those arcs, the appearance of any of which as primary arc in $G_\tau$ blocks $(u,v)$ from being in $H_\tau$. In other words, an arc $(u,v)$ is in $H_\tau$ as primary or secondary arc if and only if $(u,v)$ is in $G_\tau$ and none of the arcs in its blocking set $B(u,v)$ is in $G_\tau$ as a primary arc. 
	
	The blocking set of a path $P$ is simply the union of its arcs' blocking sets, 
	\begin{align*}
	B(P) := \bigcup_{(u,v) \in P} B(u,v)\,.
	\end{align*}	

	The blocking set of a certified primary path $(P,T)$ is the union of blocking sets of $P$ and $T$,
	\begin{align*}
	B(P, T) := B(P \cup T).
	\end{align*}
\end{Def}
The probability of an edge, or path, or certified primary path appearing in $H_\tau$ is governed in part by the probability of arcs in their blocking sets appearing as primary arcs in $G_\tau$. As an arc $(v,u)$ is picked as primary arc by when $v$ arrives with probability roughly $z_u$ (more precisely, $z'_u \in[z_{vu}/(1+C\epsilon), z_{vu}]$, by \Cref{lem:propertiesWW}), it will be convenient to denote by $z(v,u)$ and $z'(v,u)$ the values $z_u$ and $z'_u$ when $v$ arrives, and by $z(S)=\sum_{s\in S} z(s)$ and $z'(S)=\sum_{s\in S} z(s)$ the sum of $z$- and $z'$-values of arcs in a set of arcs $S$.

\textbf{Product distributions.} 
Note that by definition the distribution over primary and secondary arc choices of vertices are product distributions (they are independent). As such, their joint distribution is defined by their marginals.
Let $p_w$ and $s_w$ denote the distribution on primary and secondary arc choices of $w$, respectively. That is, for every $u\in N_w(w)$, $p_w(u)$ is the marginal probability that $w$ 
selects $(w, u)$ as its primary arc, and $s_w(u)$ is the marginal probability that $w$ selects $(w,u)$ as its secondary arc. 
Given our target bound \eqref{eq:enough}, it would be useful to show that conditioning on $F_u$ preserves the independence of these arc choices. Unfortunately, conditioning on $F_u$ does not preserve this independence. We will therefore refine our conditioning later on the existence of primary paths in $H_\tau$, which as we show below maintains independence of the arc choices.

\begin{lem}\label{lm:conditional-product}
For a certified primary path $(P,T)$ let $\EventEQ{(P,T)}$ be the event that the path $P$ equals a maximal connected component in $H_\tau$ and the termination of $P$ is certified by $T$.
Then the conditional distributions of primary and secondary choices conditioned on $\EventEQ{(P,T)}$ are product distributions; i.e., these conditional choices are independent. Moreover, if we let $\tilde{p}_w$ and $\tilde{s}_w$ denote the conditional distribution on primary and secondary choices of $w$, respectively, then 
    \begin{align*}
      \TV(p_w, \tilde{p}_w) \leq z(R(w)) \qquad \mbox{and} \qquad \TV(s_w, \tilde{s}_w) \leq z(R(w)),
    \end{align*}
	where $R(w) \subseteq \{w\}\times N_w(w)$ is the set of arcs leaving $w$ whose existence as primary arcs in $G_\tau$ is ruled out by conditioning on $\EventEQ{(P,T)}$, and the union of these $R(w)$, denoted by $R(P,T)$, satisfies
	\begin{equation}\label{eq:ruled-out-sets-ub}
	R(P,T):= \bigcup_w R(w) \subseteq B(P,T) \cup \{(w,r) \mid r \mbox{ is root of } P\} \cup \bigcup_{w\in P\cup\{w:\, T=(w,w')\}} \{w\}\times N_w(w).
	\end{equation}
\end{lem}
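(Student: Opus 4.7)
The plan is to show that $\EventEQ{(P,T)}$ factors as an intersection $\bigcap_w E_w$ of single-vertex events, each depending only on $(p_w, s_w)$. Because the unconditional joint distribution of $\{(p_w, s_w)\}_w$ is by construction a product of independent per-vertex distributions, any such coordinate-wise factorization of the conditioning event immediately yields that the conditional distribution is again a product, with each conditional marginal $(\tilde p_w, \tilde s_w)$ being $(p_w, s_w)$ restricted to $E_w$ and renormalized.

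First I would unpack what $\EventEQ{(P,T)}$ demands of the random choices. Three types of constraints arise: \textbf{(i)} forced primary choices on vertices in $P \cup \{u'\}$ (namely $p_{v_i} = v_{i+1}$ for every internal $v_i \in P$, the appropriate forced choice for $v_k$ depending on whether case (a) or (b) applies, and $p_{u'} = w'$ in case (b)); \textbf{(ii)} forbidden primary choices for each external vertex $w$, namely arcs in $B(P,T) \cap (\{w\} \times N_w(w))$ (so that no arc of $P \cup T$ is preempted) together with the arc $(w,r)$ where $r$ is the root of $P$ (so that the component is not extended via an unblocked incoming primary to $r$); and \textbf{(iii)} the requirement that no outgoing primary or secondary arc from a vertex of $P$ survives in $H_\tau$ and points outside $P$. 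Conditions (i) and (ii) are already single-vertex constraints and contribute exactly the forbidden sets featured in the claimed upper bound on $R(P,T)$.

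The main obstacle is condition (iii). The primary part of (iii) is already subsumed by (i), since each $v_i$'s primary is forced inside $P$ and $v_k$'s primary (in case (b)) is blocked by $T$. For the secondary part, a case analysis on the target $x$ of $v_i$'s secondary is used: if $x = v_{i+1}$, the arc is removed automatically because $p_{v_i} = v_{i+1}$; if $x = v_j$ for some $j > i+1$, then the forced primary $(v_{j-1}, v_j)$ with $v_{j-1}$ arriving before $v_i$ blocks it; if $x = w'$ in case (b), the forced primary $p_{u'} = w'$ with $u' \prec v_i$ blocks it; and if $x$ is any other external vertex, the pruning must be enforced by a local restriction folded into $E_{v_i}$. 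Crucially, any additional forbidden arcs introduced this way are of the form $(v_i, x)$ and hence lie in $\{v_i\} \times N_{v_i}(v_i)$, which is already included in the ``$\{w\} \times N_w(w)$'' term of the claimed upper bound for $w \in P$; so the bound on $R(P,T)$ is not worsened.

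Once the factorization $\EventEQ{(P,T)} = \bigcap_w E_w$ is established, the TV bounds are routine. Restricting a probability distribution to the complement of a set $S$ and renormalizing changes it in total variation by at most the original mass of $S$; here the mass of the forbidden arcs in $R(w)$ under $p_w$ is at most $\sum_{(w,u) \in R(w)} z'_u \le z(R(w))$, using $z'_u \le z_u$ from \Cref{line:normalize} of \Cref{algo:wwrounding}, which gives $\TV(p_w, \tilde p_w) \le z(R(w))$. The identical argument yields $\TV(s_w, \tilde s_w) \le z(R(w))$. Finally, the inclusion $R(P,T) \subseteq B(P,T) \cup \{(w,r) \mid r \text{ root of } P\} \cup \bigcup_{w \in P \cup \{w : T = (w, w')\}} \{w\} \times N_w(w)$ follows by taking the union of the $R(w)$'s over $w$ and grouping arcs by which of the three constraint types (i)--(iii) generated them.
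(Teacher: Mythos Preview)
Your overall strategy---factor $\EventEQ{(P,T)}$ as an intersection of per-vertex events $E_w$, read off the ruled-out arc sets $R(w)$, and deduce both the product structure and the TV bounds---is exactly the paper's approach, and your enumeration of constraint types (i)--(iii) together with the derivation of the inclusion \eqref{eq:ruled-out-sets-ub} matches the paper's case split essentially line for line.

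There is, however, a genuine gap in your handling of condition (iii), Case 4, and the paper's own proof glosses over the same point. When $v_i\in P$ picks a secondary arc $(v_i,x)$ with $x$ external (not in $P$ and not equal to $w'$ when $T=(u',w')$), the requirement that this arc be pruned in $H_\tau$ is \emph{not} a constraint on $v_i$'s choice alone: the arc $(v_i,x)$ is pruned if and only if some vertex $w\notin P$ arriving before $v_i$ chose $x$ as its primary, and none of the other constraints you have listed force or forbid any such $w$ from doing so. Thus the clause ``$v_i$ does not pick $x$ as secondary, \textsc{or} some earlier $w$ picks $x$ as primary'' genuinely couples $v_i$'s secondary choice with other vertices' primary choices; it does not factor, and you cannot simply ``fold it into $E_{v_i}$'' as a deterministic local restriction without changing the event being conditioned on. Consequently, the conditional law given $\EventEQ{(P,T)}$ is not literally a product over vertices in general, and your argument (like the paper's one-sentence justification at the end of its proof) does not establish the independence claim as stated. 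Any complete argument would need either to reinterpret the event so that it does decompose coordinatewise, or to argue that the residual coupling is negligible for the downstream application in Claim~\ref{claim:bound-for-EQ}.
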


\begin{proof}
	We first bound the total variation distance between the conditional and unconditional distributions. For primary choices, conditioning on $\EventEQ{(P,T)}$ rules out the following sets of primary arc choices.
	For vertex $w\notin P$ arriving before the root $r$ of $P$ this conditioning rules out $w$ picking any edge in $B(P,T)$ as primary arc.
	For vertices $w\notin P$ with $w$ arriving after the root $r$ of $P$ this conditioning rules out picking arcs $(w,r)$. 
	Finally, this conditioning rules out 
	some subset of arcs leaving vertices in $P\cup \{w:\, T=(w,w')\}$.
	Taking the union over these supersets of $R(w)$, we obtain \eqref{eq:ruled-out-sets-ub}.
	Now, the probability of each ruled out primary choice $(w,u)\in R(w)$ is  zero under $\tilde{p}_w$ and $z'(w,u)$ under $p_w$, and all other primary choices have their probability increase, with a total increase of $\sum_{(w,u)\in R(w)} z'(w,u)$, from which we conclude that $$\TV(p_w,\tilde{p}_w) = \frac{1}{2}\sum_{u\in N_w(w)} |p_w(u) - \tilde{p}_w(u)| = z'(R(w)) \leq z(R(w)).$$
	The proof for secondary arcs is nearly identical, the only differences being that the sets of ruled out secondary arcs can be smaller (specifically, secondary arcs to $w'$ such that $T=(u,w')$ are not ruled out by this conditioning), and the probability of any arc $(w,u)$ being picked as secondary arc of $w$ is at most $\sqrt{\epsilon}\cdot z'(w,u) \leq z(w,u)$.
	
	Finally, we note that primary and secondary choices for different vertices are independent. Therefore, conditioning on each vertex $w$ not picking a primary arc in its ruled out set $R(w)$ still yields a product distribution, and similarly for the distributions over secondary choices.
\end{proof}

It is easy to show that a particular certified primary path $(P,T)$ with high value of $z(B(P,T))$ is unlikely to appear in $H_\tau$, due to the high likelihood of arcs in its breaking set being picked as primary arcs. The following lemma asserts that the probability of a vertex $u$ being the root of \emph{any} primary certified path $(P,T)$ with high $z(B(P,T))$ value is low.
\begin{lem}\label{lem:fat}
	For any $k\geq 0$ and any vertex $u$, we have the following
	\[
	\Pr[\mbox{$H_\tau$ contains any certified primary path $(P, T)$ with $P$ rooted at $u$ and $z(B(P, T)) \geq k$}] \leq e^{-k/2},
	\]
	\[
	\Pr[\mbox{$H_\tau$ contains any primary path $P$ rooted at $u$ with $z(B(P)) \geq k$}] \leq e^{-k/2}.
	\]
\end{lem}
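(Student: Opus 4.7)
I would induct on the arrival rank of $u$. The key structural observation is that in $H_\tau$ every vertex has primary out-degree at most one (pruning only removes arcs), so the maximal primary path $P^*_u$ rooted at $u$ is uniquely determined by $\tau$, and hence so is the certified primary path $(P^*_u,T^*_u)$. Both probabilities in the lemma therefore reduce to tail bounds on the single random variables $z(B(P^*_u))$ and $z(B(P^*_u,T^*_u))$.

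Two facts drive the induction. First, for any arc $(u,w)$,
\begin{align*}
\Pr[(u,w)\in H_\tau\text{ as primary}]=z'(u,w)\cdot q_{u,w}\leq z'(u,w)\cdot e^{-z(B(u,w))/2},
\end{align*}
where $q_{u,w}:=\prod_{v<u,\,\{v,w\}\in E}(1-z'(v,w))\leq e^{-z'(B(u,w))}$, and the last step uses $z'\geq z/(1+C\eps)\geq z/2$ for small $\eps$ by Lemma~\ref{lem:propertiesWW}. Second, conditioning on ``$P_u=w$ and $(u,w)$ is unblocked'' leaves the joint distribution of primary/secondary choices at $w$ and at vertices arriving no later than $w$ unchanged: such conditioning only restricts choices at $u$ and at vertices arriving strictly between $w$ and $u$, and those choices cannot affect the maximal primary path or its termination certificate rooted at $w$.

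For the primary-path version, in the inductive step for $k>0$ I decompose on the first arc of $P^*_u$ and apply the inductive hypothesis at the earlier-arrived vertex $w$ (legitimate by Fact~2):
\begin{align*}
\Pr[z(B(P^*_u))\geq k]&\leq\sum_w\Pr[(u,w)\in H_\tau]\cdot\Pr[z(B(P^*_w))\geq (k-z(B(u,w)))^+]\\
&\leq\sum_w z'(u,w)\,e^{-z(B(u,w))/2}\,e^{-(k-z(B(u,w)))^+/2}\\
&\leq\sum_w z'(u,w)\,e^{-k/2}\leq e^{-k/2},
\end{align*}
using Fact~1 and $\sum_w z'(u,w)\leq 1$.

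For the certified version I argue inside the same induction. Condition on $P_u=w$ and set $v^*:=\min\{v:z(B(v,w))\geq k\}$ (possibly undefined). When $v^*\leq u$, equivalently $z(B(u,w))\geq k$, let $A$ be the event ``$(u,w)$ is blocked with earliest blocker $u'\geq v^*$'' (so $z(B(u',w))\geq k$) and $B$ the event ``$(u,w)$ is unblocked.'' These disjoint events satisfy $A\cup B=\{\text{no }v<v^*\text{ has }P_v=w\}$, so
\begin{align*}
\Pr[A\cup B\mid P_u=w]\leq\prod_{v<v^*,\,\{v,w\}\in E}(1-z'(v,w))\leq e^{-z'(B(v^*,w))}\leq e^{-k/2}.
\end{align*}
On $A$ the target event $\{z(B(P^*_u,T^*_u))\geq k\}$ holds deterministically, and on $B$ its conditional probability is at most $1$, hence $\Pr[z(B(P^*_u,T^*_u))\geq k\mid P_u=w]\leq\Pr[A]+\Pr[B]\leq e^{-k/2}$. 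When instead $z(B(u,w))<k$ the event $A$ is empty and only $B$ contributes; the inductive hypothesis at $w$ then gives $q_{u,w}\cdot e^{-(k-z(B(u,w)))/2}\leq e^{-k/2}$. Either way the conditional bound is $e^{-k/2}$, and summing over $w$ closes the induction. The main obstacle is the combining step in the certified case: bounding the ``blocked-with-heavy-terminator'' and ``unblocked'' branches separately yields only $2e^{-k/2}$, whereas recognizing that their union coincides with ``no $v<v^*$ picks $w$'' recovers the stated exponent $\tfrac12$ cleanly.
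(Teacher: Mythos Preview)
Your proof is correct and takes a genuinely different route from the paper's.

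The paper argues in one shot via the notion of a \emph{$k$-minimal prefix}: for a certified path $(P,T)$ it extends $P$ by the ``phantom'' last arc $(w,w'')$ when $T=(w',w'')$, observes that any certified path with $z(B(P,T))\geq k$ forces a unique $k$-minimal prefix $P'$ of this extension to appear in $G_\tau$, and then shows (using independence of choices outside $P'$) that the probability none of a carefully trimmed subset $B^*(P')\subseteq B(P')$ of total $z$-value $\geq k$ appears as primary arcs is at most $e^{-k/2}$. A union over all $k$-minimal prefixes (of which at most one is realized) finishes.

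Your argument instead inducts on arrival rank and peels off the first arc. For the primary-path version this is a clean one-line recursion. For the certified version your key observation is that, conditionally on $P_u=w$, the event ``$(u,w)$ unblocked'' and the event ``$(u,w)$ blocked with earliest blocker $u'\geq v^*$'' are disjoint with union exactly ``no $v<v^*$ picks $w$,'' whose probability is already $\leq e^{-k/2}$; and that outside this union the target event cannot hold. This neatly avoids the factor $2$ that a naive split would incur, and it sidesteps the paper's somewhat delicate construction of $B^*(P')$. The cost is the inductive bookkeeping and the explicit verification of Fact~2 (that the relevant conditioning does not touch choices at $w$ or earlier), which you handle correctly: $(P^*_w,T^*_w)$ is indeed measurable with respect to primary choices of $w$ and earlier vertices, while the conditioning only constrains $u$ and vertices strictly between $w$ and $u$.

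One cosmetic point: in your displayed chain for the primary-path version, the first inequality is in fact an equality (when $k-z(B(u,w))<0$ both sides equal $1$), so the $()^+$ is harmless notation rather than a genuine relaxation.
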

\begin{proof}
  We first prove the bound for certified primary paths.
  For a certified primary path $(P,T)$ where the last vertex of $P$ is $w$, define $P^\ast$ as follows:
  	\begin{align*}
	P^\ast = 	\begin{cases} P & \text{if $T = \emptyset$} \\
	  \text{$P \cup \{(w,w'')\}$} & \text{if $T = (w',w'')$}.
	\end{cases}
\end{align*}

Observe that $z(B(P^\ast)) \geq k$ whenever $z(B(P,T)) \geq k$. This is trivial when $T = \emptyset$.	
To see this for the case
 $T = (w',w'')$, let $w$ be the last vertex of $P$, and note that $B(w', w'') \subseteq B(w, w'')$, 
as $w$ arrives after $w'$.
Also note that for $(P,T)$ to be in $H_\tau$, we have that $P^\ast$ must be in $G_\tau$.
 
We say a directed primary path $P' = u \to u_1 \to \cdots \to u_{\ell - 1} \to u_\ell$ is \emph{$k$-minimal} if $z(B(P')) \geq k$ 
and $z(B(P' \setminus \{(u_{\ell - 1}, u_\ell) \})) < k$. 
For such a path $P'$, define $B^\ast(P')$ as follows: Initially set $B^\ast(P') = B(P \setminus \{(u_{\ell - 1}, u_\ell )\})$.
Then from $B(u_{\ell - 1}, u_\ell)$, the breaking set of the last arc of $P'$, add arcs to $B^\ast(P')$ in reverse order of their sources' arrival until $z(B^\ast(P')) \geq k$.

Consider a certified primary path $(P,T)$ with $P$ rooted at $u$. If a $k$-minimal path rooted at $u$ which is not a prefix of $P^\ast$ is contained in $G_\tau$, then $(P,T)$ does not appear in $G_\tau$, and therefore it does not appear in $H_\tau$. On the other hand, 
if $z(B(P,T))\geq k$ then for $(P,T)$ to appear in $H_\tau$, we must have that the (unique) $k$-minimal prefix $P'$ of $P^\ast$ must appear in $G_\tau$, and that none of the edges of $B^\ast(P')$ appear in $G_\tau$. Moreover, for any certified primary path with $z(B(P,T))$, conditioning on the existence of $P'$ in $G_\tau$ does not affect random choices of vertices with outgoing arcs in $B^\ast(P')$, as these vertices are not in $P'$. Since by \Cref{lem:propertiesWW} each arc $(w,w')$ appears in $G_\tau$ with probability $z'(v,u) \geq z(v,u)/(1+C\epsilon)\geq z(v,u)/2$, we conclude that for any $k$-minimal primary path $P'$ rooted at $u$, we have
	\begin{align*}
	  &\Pr[H_\tau \text{ contains any certified primary path $(P,T)$ with $z(B(P,T)) \geq k$} \mid \text{$P'$ is in $G_\tau$}  ]\\ 
	\leq &\Pr[\text{No edge in $B^\ast(P')$ is in $G_\tau$} \mid  \text{$P'$ is in $G_\tau$}] \\
	= &\prod_{w \notin P'} (1 - \Pr[\text{Some primary edge in $B^\ast(P') \cap (\{w\} \times N_w(w))$ \text{ is in } $G_\tau$}]) \\
	\leq &\prod_{w \notin P'} \exp\left(\text{$-\textstyle\sum_{(w,w') \in B(P,T) \times N_w(w)} z(w,w')/2$}\right) \\
	\leq &\exp(-z(B^\ast(P'))/2) \leq e^{-k/2}.
	\end{align*}

	Taking total probability $\mathcal{P}_u$, the set of all $k$-minimal primary paths $P'$ rooted at $u$, we get that indeed, since $u$ is the root of at most one $k$-minimal primary path in any realization of $G_\tau$, 
	\begin{align*}
	  & \Pr[H_\tau \mbox{ contains a certified primary path $(P, T)$ rooted at $u$ with $z(B(P, T)) \geq k$}] \\
	 \leq & \sum_{P' \in \mathcal{P}_u} \underbrace{\Pr[\text{$H_\tau$ contains a $(P,T)$ with $z(B(P,T))\geq k$} \mid \text{$P'$ is in $G_\tau$}]}_{\leq \,\, e^{-k/2}} \cdot \Pr[\text{$P'$ is in $G_\tau$}]
	\leq e^{-k/2}.
	\end{align*}

	The proof for primary path is essentially the same as the above, taking $P^*=P$.
\end{proof}

\subsubsection{Analyzing Good Vertices}
\label{sec:boundvar}
Consider the set of vertices that are unlikely to be roots of long directed paths of primary arcs in $H_\tau$. 
In this section, we show that~\Cref{algo:wwrounding} achieves almost lossless rounding for such vertices, and hence we
call them \emph{good} vertices.
We start with a formal definition:
\begin{Def}[Good vertices]
We say that a vertex $v$ is \emph{good} if
\begin{align*}
\Pr_{\tau}[\mbox{$H_\tau$ has a primary path rooted at $v$ of length at least $2000 \cdot \ln(1/\varepsilon)$}]& \leq \varepsilon^6.
\end{align*}
Otherwise, we say $v$ is \emph{bad}.
\end{Def}

As the main result of this section, for good vertices, we prove the following:
\begin{thm}
	\label{thm:good_vertices}
	Let $v$ be a good vertex. Then 
	\begin{align*}
	\Pr[\mbox{$v$ is matched on arrival}] \geq (1-\eps^2) \cdot \sum_{u\in N_v(v)} x_{uv}.
	\end{align*}
\end{thm}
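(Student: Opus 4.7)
The plan is to split into two cases depending on whether $\sum_{u\in N_v(v)} z_u$ exceeds one when $v$ arrives. If $\sum_u z_u \leq 1$, then the normalization in Line~\ref{line:normalize} is trivial ($z'_u=z_u$) and the second-pick branch in Line~\ref{line:second-pick} is never taken, so the analysis from the warmup in \Cref{sec:warmup} carries over verbatim: edge $\{u,v\}$ is added to $M$ with probability exactly $\Pr[F_u]\cdot z_u = x_{uv}$, the cap in Line~\ref{line:second-pick-drop} is inactive, and summing over $u \in N_v(v)$ yields $\Pr[v\text{ matched}] = \sum_u x_{uv}$. This case contributes no loss, so the $(1-\eps^2)$ factor in the statement is easily absorbed.

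The hard case is $S := \sum_u z_u > 1$. By \Cref{lem:propertiesWW} and \Cref{cor:ww} we obtain $S\leq 1+C\eps$, $z_u\leq C\sqrt{\eps}$ for all $u\in N_v(v)$, and $\Pr[F_w]\in[c,1-c]$ for all such $w$. Computing the probability that $\{u,v\}$ is added to $M$ via either the primary or the secondary pick (as in~\eqref{prob-match}) gives $\Pr[F_u]\cdot z'_u\cdot\bigl(1+\sqrt{\eps}\sum_{w} z'_w(1-\Pr[F_w\mid F_u])\bigr)$, capped at $x_{uv}$. Since $x_{uv}=\Pr[F_u]\,z_u=\Pr[F_u]\,z'_u\,S$, dividing through reduces the target $\Pr[\{u,v\}\in M]\geq (1-\eps^2)x_{uv}$ to the correlation inequality $\sum_w z'_w\bigl(1-\Pr[F_w\mid F_u]\bigr)\geq \Omega(\sqrt{\eps})$. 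Splitting the left-hand side as $\sum_w z'_w(1-\Pr[F_w]) - \sum_w z'_w(\Pr[F_w\mid F_u]-\Pr[F_w])$ and lower-bounding the first term by $c$ (using $\sum_w z'_w=1$ and $\Pr[F_w]\leq 1-c$), it suffices to upper bound the conditional-vs-unconditional gap by $c-\Omega(\sqrt{\eps})$ for most $u$, in the $x_{uv}$-weighted sense.

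To bound this gap, I would follow the roadmap of \Cref{sec:analysis}: condition on the component $(P,T)$ containing $u$ in $H_\tau|_{<v}$. Call $(P,T)$ \emph{nice} if it is a certified primary path rooted at $u$ with $|P|\leq L:=2000\ln(1/\eps)$ and $z(B(P,T))\leq L$, and $u$ has no incoming secondary arc. For nice $(P,T)$, \Cref{lm:conditional-product} tells us that the distribution over vertex choices conditioned on $\EventEQ{(P,T)}$ remains a product distribution whose per-vertex TV distance to the unconditional one is bounded by $z(R(w))$, summing to $z(R(P,T))=O(L)$. Coupling the two distributions optimally and applying \Cref{lem:freevertices} once for each vertex whose choice changes, the expected symmetric difference in matched status is $O(L)$. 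Hence, using $z'_w\leq z_w\leq C\sqrt{\eps}$,
\[
\sum_w z'_w\bigl|\Pr[F_w\mid\EventEQ{(P,T)}]-\Pr[F_w]\bigr|\leq \max_w z'_w\cdot O(L)=O(\sqrt{\eps}\ln(1/\eps)),
\]
which for small $\eps$ is strictly less than $c$, establishing the needed inequality whenever $u$'s component is nice.

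The last step is to absorb the contribution of non-nice $u$ into the $\eps^2$ slack. There are three failure modes: (a) the primary path rooted at $u$ has length $\geq L$; (b) $z(B(P,T))\geq L$; (c) $u$ has an incoming secondary arc. For (a), goodness of $v$ together with independence of $v$'s primary choice from the past gives $\sum_u z'_u\Pr[\text{long primary path from }u]\leq \eps^6$, which implies an $x_{uv}$-weighted bound of $O(\eps^6/c)$. For (b), \Cref{lem:fat} directly provides $e^{-L/2}=\eps^{1000}$, utterly negligible. For (c), the probability a fixed later-arriving vertex $w$ placed a secondary arc into $u$ is at most $\sqrt{\eps}\,z'_u$, and summing these contributions using the $z$-degree bound on incoming edges shows that incoming-secondary-arc events contribute at most $O(\sqrt{\eps})\cdot x_{uv}$ in aggregate, well within the $\eps^2$ budget after accounting for the large slack from step three. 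Assembling these pieces and summing over $u\in N_v(v)$ gives $\Pr[v\text{ matched}]\geq(1-\eps^2)\sum_u x_{uv}$. The main obstacle, as anticipated in \Cref{sec:techniques}, is the careful execution of the coupling argument in step three: one must apply \Cref{lem:freevertices} sequentially to multiple deviating vertices while ensuring the bound $O(z(R(P,T)))$ on expected symmetric difference remains valid, and one must correctly bound the error introduced by (c) so that conditioning on a \emph{primary} path component (rather than a general component with incoming secondaries) is justified for most $u$.
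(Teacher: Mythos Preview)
Your overall strategy is exactly the paper's: split on whether $\sum_u z_u>1$; for the hard case, reduce to the correlation inequality $\sum_w z'_w(\Pr[F_w\mid F_u]-\Pr[F_w])\le c-\Omega(\sqrt{\eps})$; and bound this gap by conditioning on the component of $u$ in $H_\tau$ being a short certified primary path with small blocking set, then coupling via \Cref{lm:conditional-product} and \Cref{lem:freevertices}. That is precisely the content of the paper's key \Cref{lem:variance} and its two claims.

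The one place your sketch is genuinely off is failure mode (c). You define ``nice'' to require only that $u$ has no incoming secondary arc, and you bound (c) by summing secondary-arc probabilities into $u$. But for $\EventEQ{(P,T)}$ to hold (so that \Cref{lm:conditional-product} applies), you need that \emph{no vertex along $P$} has an incoming or outgoing secondary arc---otherwise the component of $u$ is strictly larger than $P$ and the product-distribution structure is lost. The paper handles this inside the key lemma (its Claim~3.14, inequality~(\ref{eq:notequal})) by conditioning first on $\EventIN{(P,T)}$ and then bounding $\Pr[\overline{\EventEQ{(P,T)}}\mid \EventIN{(P,T)}]$ via a union bound over all $|P|\le 2000\ln(1/\eps)$ vertices of $P$ for outgoing secondaries, and over $z(B(v,u))+z(B(P))=O(\ln(1/\eps))$ worth of potential incoming secondaries, yielding $O(\sqrt{\eps}\ln(1/\eps))\le \eps^{1/3}/7$. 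Your bound ``$O(\sqrt\eps)\cdot x_{uv}$'' does not fit in the $\eps^2$ budget, and your hedge ``after accounting for the large slack from step three'' is the right instinct but needs to be made precise: the contribution of (b) and (c) is absorbed into the $\eps^{1/3}$-type bound on the conditional gap for each fixed $u$, not into the final $\eps^2$ loss in the theorem statement. Once that bookkeeping is fixed, your argument and the paper's coincide.
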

\paragraph{Notational conventions.} Throughout this section, we fix $v$ and let $z, z'$ be as in \Cref{algo:wwrounding}. Moreover, for simplicity of notation, we suppose that the stream of vertices ends just before $v$'s arrival and so quantities, such as  $G_\tau$ and  $H_\tau$, refer to their values when $v$ arrives. For a vertex $u$, we let $F_u$ denote the event that $u$ is free (i.e., unmatched) when $v$ arrives. In other words, $F_u$ is the event that $u$ is free in the stream that ends just before $v$'s arrival. 

To prove the theorem, first note that it is immediate if $\sum_{u\in N_v(v)} z_u \leq 1$: in that case, we have $z' = z$ and so the probability to match $v$ by a primary edge, by definition of $z_u$, is simply
\begin{align*}
\sum_{u\in N_v(v)} z_u \cdot \Pr[F_u] = \sum_{u\in N_v(v)} x_{uv}.
\end{align*}

From now on we therefore assume $\sum_{u\in N_v(v)} z_u >1$, which implies 
\begin{enumerate}[(I)]
\item $\sum_{u\in N_v(v)} z'_u = 1$,\label{en:d0}
\end{enumerate} 
and moreover, by~\Cref{lem:propertiesWW} and~\Cref{cor:ww}, for every $u\in N_v(v)$:
\begin{enumerate}[(I)]
    \addtocounter{enumi}{1}
	\item $z_u \leq C \sqrt{\eps}$, \label{en:d1}
	\item $z_u \leq (1+C\eps) \cdot z'_u$, and \label{en:d2}
	\item $c \leq \Pr[F_u] \leq 1- c$\,, \label{en:d3}
\end{enumerate}
where $c$ is the constant of \Cref{cor:ww} and $C$ is the constant of \Cref{lem:propertiesWW}. 

We now state the key technical lemma in the proof of~\cref{thm:good_vertices}: 
\begin{lem}	\label{lem:variance}
	Consider a neighbor $u\in N_v(v)$ such that 
	\begin{align}
	\Pr_{\tau}[\mbox{$H_\tau$  has a primary path rooted at $u$ of length at least $2000 \cdot \ln(1/\varepsilon)$} \mid F_u]\leq \eps^2\,.
	\label{eq:key_lemma}
	\end{align}
	Then,
	\begin{align}
	\sum_{w\in N_v(v)} z'_w \cdot \Pr[F_w \mid F_u] - \sum_{w\in N_v(v)} z'_w  \cdot \Pr[F_w]    \leq \eps^{1/3}\,.
	\label{eq:variance}
	\end{align}
\end{lem}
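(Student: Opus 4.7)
The plan is to decompose the conditioning on $F_u$ via the connected component of $u$ in $H_\tau$, taking advantage of \Cref{lm:conditional-product}: while conditioning on $F_u$ alone breaks independence of primary/secondary arc choices, conditioning on a specific certified primary path $(P,T)$ equaling $u$'s component preserves the product structure with total TV shift at most $2z(R(P,T))$. Set $L := 2000\ln(1/\eps)$ and $K := C_1\ln(1/\eps)$ for a sufficiently large absolute constant $C_1$, let $\mathcal{P}$ denote the set of certified primary paths $(P,T)$ rooted at $u$ with $|P|\leq L$ and $z(B(P,T))\leq K$, and let $\mathcal{G} := \bigsqcup_{(P,T)\in\mathcal{P}}\EventEQ{(P,T)} \subseteq F_u$. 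Since $\sum_w z'_w \leq 1$, bounding the $\mathcal{G}$ and $\neg\mathcal{G}$ contributions separately gives
\[
  \sum_w z'_w\Pr[F_w\mid F_u] - \sum_w z'_w\Pr[F_w] \;\leq\; \max_{(P,T)\in\mathcal{P}}\sum_w z'_w\bigl(\Pr[F_w\mid\EventEQ{(P,T)}] - \Pr[F_w]\bigr) + \Pr[\neg\mathcal{G}\mid F_u],
\]
so it suffices to bound both terms by $\eps^{1/3}/2$.

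For the first term, fix $(P,T)\in\mathcal{P}$ and use the optimal per-coordinate coupling of the conditional and unconditional primary/secondary choice distributions. By \Cref{lm:conditional-product} together with \eqref{eq:ruled-out-sets-ub},
\[
  z(R(P,T)) \;\leq\; z(B(P,T)) + \sum_{w}z(w,u) + \sum_{w\in P\cup\{w':\,T=(w',w'')\}}\sum_{w''\in N_w(w)} z(w,w'') \;\leq\; K + \tfrac{1}{c} + O(L) \;=\; O(\ln(1/\eps)),
\]
where the middle term is bounded using \Cref{cor:ww} (giving $\Pr[u \text{ free when } w]\geq \Pr[F_u]\geq c$, hence $\sum_w z(w,u)\leq \sum_w x_{wu}/c\leq 1/c$), and each inner sum in the last term is at most $1+C\eps$ by property~\eqref{en:d2}. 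Under the coupling, the expected number of vertices whose primary or secondary choice differs between the executions is at most $2z(R(P,T))$, and by \Cref{lem:freevertices} each such difference flips at most two matched statuses. Consequently $\sum_w|\Pr[F_w\mid\EventEQ{(P,T)}] - \Pr[F_w]|\leq 4z(R(P,T)) = O(\ln(1/\eps))$, and multiplying by $z'_w\leq z_u\leq C\sqrt{\eps}$ from property~\eqref{en:d1} yields the desired $O(\sqrt{\eps}\ln(1/\eps))\leq\eps^{1/3}/2$ bound for $\eps$ sufficiently small.

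For the second term, decompose $\neg\mathcal{G}\cap F_u$ into three sub-events: (a) the primary path $P^*$ rooted at $u$ in $H_\tau$ has $|P^*|>L$; (b) $|P^*|\leq L$ but its natural termination certificate $T^*$ satisfies $z(B(P^*,T^*))>K$; (c) $|P^*|\leq L$ and $z(B(P^*,T^*))\leq K$ but $u$'s component strictly contains $P^*$, i.e., some secondary arc in $H_\tau$ is incident to $P^*\cup\{u\}$. Case (a) is exactly the hypothesis \eqref{eq:key_lemma}, giving $\Pr[(a)\mid F_u]\leq\eps^2$. For case (b), \Cref{lem:fat} gives $\Pr[(b)]\leq e^{-K/2}$, so $\Pr[(b)\mid F_u]\leq e^{-K/2}/\Pr[F_u]\leq e^{-K/2}/c \leq \eps^2$ for $C_1$ large (using \Cref{cor:ww}). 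For case (c), each of the $\leq L+1$ vertices on $P^*\cup\{u\}$ contributes at most $O(\sqrt{\eps})$ expected incident secondary arcs (each secondary arc appears with probability $\sqrt{\eps}$, and the per-vertex expected in-degree is bounded analogously to the middle term above), so a union bound gives $\Pr[(c)\mid F_u]\leq O(\sqrt{\eps}\ln(1/\eps)) \leq \eps^{1/3}/2$.

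The most delicate step is case (c): rigorously bounding the probability of secondary-arc contamination of $u$'s component conditional on $F_u$, since the conditioning itself induces correlations among arc choices. I would handle this via an auxiliary application of the independence-preserving property of \Cref{lm:conditional-product} -- isolating the primary path $P^*$ first, so that after this conditioning the secondary choices of $w \notin P^*$ remain independent and each still appear with probability $\sqrt{\eps}$ -- and then union-bounding over the $O(L)$ possible incidences along $P^*$.
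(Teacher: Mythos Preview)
Your approach mirrors the paper's exactly: decompose $F_u$ over certified primary paths $(P,T)$, bound the contribution of short paths with small blocking sets via the TV/coupling argument from \Cref{lm:conditional-product} combined with \Cref{lem:freevertices}, and handle the residual probability via the hypothesis, \Cref{lem:fat}, and a secondary-arc union bound. Two minor slips to fix: the inclusion $\mathcal{G}\subseteq F_u$ is false as stated (whether $u$ is free given $\EventEQ{(P,T)}$ depends on the parity of $|P|$, though this is harmless since paths implying $\neg F_u$ carry zero weight under $\Pr[\cdot\mid F_u]$), and in case~(c) the per-vertex incoming $z$-mass is \emph{not} bounded ``analogously to the middle term'' for vertices $q\neq u$ on $P^*$ (you have no lower bound on $\Pr[F_q]$)---you should instead bound the \emph{total} incoming secondary $z$-mass over $P^*$ by $z(B(v,u))+z(B(P^*))=O(\ln(1/\eps))$, exactly as the paper does.
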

Note that the above lemma bounds the quantity $\sum_{w\in N_v(v)} z'_w \cdot \Pr[F_w \mid F_u]$, which will allow us to show that~\eqref{eq:enough} holds and thus the edge $\{u,v\}$ is picked in the matching with probability very close to $x_{uv}$. Before giving the proof of the lemma, we give the formal argument why the lemma implies the theorem.

\begin{proof}[Proof of~\cref{thm:good_vertices}]
	Define $S$ to be the neighbors $u$ in $N_{v}(v)$ satisfying
	\begin{align*}
	\Pr_{\tau}[\mbox{$H_\tau$ has a primary path rooted at $u$ of length at least $2000 \cdot \ln(1/\varepsilon)$} \mid F_u] > \eps^2\,.
	\end{align*}
	In other words, $S$ is the set of neighbors of $v$ that violate~\eqref{eq:key_lemma}.
	As $v$ is good, we have 
	\begin{align*}
	\eps^6 &\geq \Pr_{\tau}[\mbox{$H_\tau$ has a primary path rooted at $v$ of length at least $2000 \cdot \ln(1/\varepsilon)$}] \\
	&  \geq \sum_{u\in N_v(v)} z_u' \cdot \Pr[F_u] \cdot \Pr_{\tau} [\mbox{$H_\tau$ has a primary path rooted at $u$ of length at least $2000 \cdot \ln(1/\varepsilon)-1$} \mid F_u] \\ 
	&  \geq \sum_{u\in N_v(v)} z_u' \cdot \Pr[F_u] \cdot \Pr_{\tau} [\mbox{$H_\tau$ has a primary path rooted at $u$ of length at least $2000 \cdot  \ln(1/\varepsilon)$} \mid F_u] \\
	& \geq \sum_{u\in S} z_u' \cdot \Pr[F_u] \cdot \eps^2.
	\end{align*}
	The second inequality holds because $v$ selects the primary arc $(u,v)$ with probability $z'_u$ and, conditioned on $F_u$, $u$ cannot already have an incoming primary arc, which implies that $(u,v)$ is present in $H_\tau$. The last inequality follows from the choice of $S$.
	
	By Property~\eqref{en:d2}, $z_u \leq (1+C\eps) \cdot z'_u$ and so by rewriting we get
	\begin{align*}
	\sum_{u\in S} x_{uv} = \sum_{u\in S} z_u \cdot \Pr[F_u] \leq  (1+C\eps) \cdot \sum_{u\in S} z'_u \cdot \Pr[F_u] \leq (1+C\eps) \cdot \eps^4 \leq \eps^3.
	\end{align*}
	
	In other words, the contribution of the neighbors of $v$ in $S$ to $\sum_{u\in N_v(v)} x_{uv}$ is insignificant compared to the contribution of all neighbors, 
	\begin{align}
	\sum_{u\in N_v(v)} x_{uv}  = \sum_{u\in N_v(v)} z_u  \cdot \Pr[F_u] \geq c, 
	\label{eq:total_large}
	\end{align}
	where the inequality follows by the assumption $\sum_{u\in N_v(v)} z_u \geq 1$ and $\Pr[F_u] \geq c$ by Property~\eqref{en:d3}.
	
	We proceed to analyze a neighbor $u\in N_v(v) \setminus S$.  Recall that it is enough to verify~\eqref{eq:enough} to  conclude that edge $\{u,v\}$ is picked in the matching with probability $x_{uv}$. We have that
	\begin{align*}
	& ~~~~ 1 + \sqrt{\varepsilon} \sum_{w\in N_v(v)} z'_w \cdot (1-\Pr[F_w \mid F_u]) \\
	& \geq 1 + \sqrt{\varepsilon} \sum_{w\in N_v(v)} z'_w \cdot (1-\Pr[F_w]) -  \sqrt{\eps}\cdot\eps^{1/3} & \mbox{(by~\cref{lem:variance})}\\
	& \geq 1 + \sqrt{\varepsilon} \sum_{w\in N_v(v)} z'_w \cdot c -  \sqrt{\eps}\cdot\eps^{1/3} & \mbox{($\Pr[F_w] \leq 1- c$ by~\eqref{en:d3})}\\	
	& = 1 + \sqrt{\varepsilon} c - \sqrt{\eps}\cdot\eps^{1/3}  & \left(\sum_{w\in N_v(v)} z'_w = 1 \text{ by \eqref{en:d0}}\right)\\
	& \geq 1 + C \eps  & \left(\mbox{for $\eps$ small enough}\right)\\
	& \geq z_u/z'_u. & \left(\mbox{by \eqref{en:d2}}\right)
	\end{align*}
	Therefore, by definition of $S$ and \Cref{lem:variance}, we thus have that for every $u\in N_{v}(v)\setminus S$, the edge $\{u,v\}$ is taken in the matching with probability $x_{uv}$. Thus, the probability that $v$ is matched on arrival is, as claimed, at least
	\begin{align*}
	\sum_{u \in N_v(v) \setminus S} x_{uv}
	= \sum_{u\in N_v(v)} x_{uv} - \sum_{u\in S} x_{uv}
	\geq \sum_{u\in N_v(v)} x_{uv} - \eps^3  
	\geq (1-\eps^2) \sum_{u\in N_v(v)} x_{uv}\,,
	\end{align*}
	where the last inequality holds because  we have $\sum_{u\in N_v(v)} x_{uv} \geq c$, as calculated in~\eqref{eq:total_large}. 
\end{proof}

\subsubsection{Proof of the Key Lemma}
It remains to prove the key lemma, \Cref{lem:variance}, which we do here.

\begin{proof}[Proof of \Cref{lem:variance}] For a certified primary path $(P,T)$ let $\EventEQ{(P,T)}$ be the event as defined in~\cref{lm:conditional-product}, and let $\EventIN{(P,T)}$ be the event that $P$ is a maximal \emph{primary} path in $H_\tau$ and the termination of $P$ is certified by $T$. Further, let 
	\begin{align*}
	\mathcal{C} = \{(P,T) : \mbox{$(P,T)$ is a certified primary path rooted at $u$ with $\Pr[ \EventIN{(P,T)} ] > 0$}\}
	\end{align*}
	be the set of certified primary paths rooted at $u$  that have a nonzero probability of being maximal in $H_\tau$.  Then, by the law of total probability and since $\sum_{(P,T)\in \mathcal{C}} \Pr[\EventIN{(P,T)} \mid F_u] = 1$ 
	(since conditioning on $F_u$ implies in particular that $u$ has no incoming primary arc),
	we can rewrite the expression to bound, $\sum_{w\in N_v(v)} z'_w \cdot \Pr[F_w \mid F_u] - \sum_{w\in N_v(v)} z'_w \cdot \Pr[F_w]$, as
	\begin{align}
	\label{eq:large_variance_expression}
	\sum_{(P,T)\in \mathcal{C}}\Pr[\EventIN{(P,T)} \mid F_u] \left(\sum_{w\in N_v(v)} z'_w \cdot \Pr[F_w \mid F_u, \EventIN{(P,T)}] - \sum_{w\in N_v(v)} z'_w \cdot \Pr[F_w] \right).
	\end{align}
	We analyze this expression in two steps. First, in the next claim, we show that we can focus on the case when the certified path $(P,T)$ is very structured and equals the component of $u$ in $H_\tau$. We then analyze the sum in that structured case.
	
	\begin{claim}\label{claim:focus-on-EQ} Let  
		$\mathcal{P} \subseteq \mathcal{C}$ contain those certified primary paths $(P,T)$ of $\mathcal{C}$ that satisfy:  $P$ has length less than $2000 \cdot  \ln(1/\eps)$ and  $z(B(P,T)) \leq 2\ln(1/\eps)$. Then, we have
		\begin{align*}
		\eqref{eq:large_variance_expression} \leq 
		\sum_{(P,T)\in \mathcal{P}}\Pr[\EventEQ{(P,T)} \mid F_u] \left(\sum_{w\in N_v(v)} z'_w \cdot \Pr[F_w \mid \EventEQ{(P,T)}] - \sum_{w\in N_v(v)} z'_w \cdot \Pr[F_w] \right) + \eps^{1/3}/2.
		\end{align*}
	\end{claim}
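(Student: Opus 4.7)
The plan is to split the sum in~\eqref{eq:large_variance_expression} into contributions from $(P,T)\in\mathcal{P}$ and from $(P,T)\in\mathcal{C}\setminus\mathcal{P}$, and to show each contributes an additive loss of at most $\eps^{1/3}/4$ relative to the target.

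For the $\mathcal{C}\setminus\mathcal{P}$ part, I would first observe that the inner bracket lies in $[-1,1]$ since $\sum_w z'_w = 1$ by~\eqref{en:d0}, so the contribution is bounded by $\sum_{(P,T)\in\mathcal{C}\setminus\mathcal{P}}\Pr[\EventIN{(P,T)}\mid F_u]$. The events $\EventIN{(P,T)}$ for distinct certified primary paths rooted at $u$ are pairwise disjoint (each realization $\tau$ determines a unique maximal primary path from $u$ together with a unique termination certificate), so this sum equals the conditional probability that the realized certified primary path rooted at $u$ either has length $\geq 2000\ln(1/\eps)$ or satisfies $z(B(P,T))>2\ln(1/\eps)$. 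The first event is bounded by $\eps^2$ via the assumption~\eqref{eq:key_lemma}, while the second is bounded via~\Cref{lem:fat} by $e^{-\ln(1/\eps)}=\eps$ unconditionally, hence by $\eps/c$ after dividing by $\Pr[F_u]\geq c$ from~\eqref{en:d3}. For $\eps$ sufficiently small, the total is at most $\eps^{1/3}/4$.

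For the $\mathcal{P}$ part, the plan is a direct algebraic decomposition using the partition $\EventIN{(P,T)}=\EventEQ{(P,T)}\sqcup A_{(P,T)}$, where $A_{(P,T)}$ is the event that $P$ is the maximal primary path rooted at $u$ with certificate $T$ while $u$'s component in $H_\tau$ strictly contains $P$. Tracing the alternating matching along $P$ shows that $\EventEQ{(P,T)}$ implies $F_u$ when $|P|$ is even and is disjoint from $F_u$ when $|P|$ is odd; in either case this yields the identity $\Pr[F_w,\EventEQ{(P,T)}\mid F_u]=\Pr[\EventEQ{(P,T)}\mid F_u]\cdot\Pr[F_w\mid\EventEQ{(P,T)}]$. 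Substituting this into the $\mathcal{P}$-indexed part of~\eqref{eq:large_variance_expression}, the $\EventEQ$ contributions exactly reproduce the target, and the residual is at most $\sum_{(P,T)\in\mathcal{P}}\Pr[A_{(P,T)}\mid F_u]$ by the $[-1,1]$ bound on the inner bracket. Because the $A_{(P,T)}$ events are pairwise disjoint and each places a secondary arc into $u$'s component, this sum is at most the conditional probability, given $F_u$, that the maximal primary path rooted at $u$ lies in $\mathcal{P}$ \emph{and} $u$'s component in $H_\tau$ contains a secondary arc.

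The main obstacle is bounding this last probability by $\eps^{1/3}/4$. Since membership in $\mathcal{P}$ forces the maximal primary path $P_0$ rooted at $u$ to have fewer than $2000\ln(1/\eps)$ arcs and $P_0$ is already maximal among primary paths, extending $u$'s component beyond the vertex set of $P_0$ requires a secondary arc incident to it. Exploiting the $\sqrt{\eps}$ factor from the secondary-pick step of~\Cref{algo:wwrounding}, a union bound over outgoing secondary arcs from $P_0$'s vertices contributes at most $2000\sqrt{\eps}\ln(1/\eps)$. For incoming secondary arcs, I would use that for any target $b$ one has $\sum_a z(a,b)\leq 1/c$ (from $\sum_a x_{ab}\leq 1$ by fractional-matching feasibility and $\Pr[b\text{ free at }a\text{'s arrival}]\geq c$ by~\eqref{en:d3}); summing over $P_0$'s vertices gives at most $2000\sqrt{\eps}\ln(1/\eps)/c$. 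Dividing by $\Pr[F_u]\geq c$ to pass to conditional probability still leaves $O(\sqrt{\eps}\ln(1/\eps))$, which is bounded by $\eps^{1/3}/4$ for $\eps$ small enough, completing the proof.
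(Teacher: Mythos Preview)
Your overall architecture---split $\mathcal{C}$ into $\mathcal{P}$ and $\mathcal{C}\setminus\mathcal{P}$, bound the latter via~\eqref{eq:key_lemma} and~\Cref{lem:fat}, then for $(P,T)\in\mathcal{P}$ decompose $\EventIN{(P,T)}=\EventEQ{(P,T)}\sqcup A_{(P,T)}$ and control the $A$ part through the $\sqrt{\eps}$ probability of secondary arcs---is exactly the paper's strategy. The parity observation you use to drop $F_u$ from the conditioning on $\EventEQ{(P,T)}$ is also what the paper relies on (phrased there as ``the component $P$ determines $F_u$'').

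There is, however, a genuine gap in your bound on incoming secondary arcs. You claim $\sum_a z(a,b)\leq 1/c$ for every vertex $b$, citing Property~\eqref{en:d3}. This fails on two counts. First, Property~\eqref{en:d3} asserts $\Pr[u\text{ free when }v\text{ arrives}]\geq c$ only for $u\in N_v(v)$ and only at $v$'s arrival; it says nothing about $\Pr[b\text{ free when }a\text{ arrives}]$ for an arbitrary earlier vertex $a$ and an arbitrary $b$ on the path $P$. Second, the unrestricted sum $\sum_a z(a,b)$ can in fact diverge: if $b$'s fractional degree climbs toward $1$ through many small increments $x_{a_ib}$, then $\sum_i x_{a_ib}/(1-x_b^{(a_i)})$ behaves like $-\ln(1-x_b)$, which is unbounded.

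The repair is short but essential: a vertex $a$ picks a secondary arc only when $\sum_{u\in N_a(a)} z_u>1$ at its arrival (\Cref{line:second-pick}), and precisely in that case~\Cref{cor:ww} (not Property~\eqref{en:d3}) gives $\Pr[b\text{ free when }a\text{ arrives}]\geq c$ for every $b\in N_a(a)$ with $z(a,b)>0$. Restricting the sum to such $a$ yields $\sum_a z'(a,b)\leq \sum_a x_{ab}/c\leq 1/c$, and your union bound then goes through. The paper takes a different route for this step: it observes that, conditioned on $\EventIN{(P,T)}$, any incoming secondary arc to $u_j$ that survives the pruning of $H_\tau$ must come from a vertex arriving before $u_{j-1}$, i.e., must lie in $B(u_{j-1},u_j)\subseteq B(P)$; together with $z(B(P,T))\leq 2\ln(1/\eps)$ and the separate bound $z(B(v,u))\leq 2\ln(1/c)$ for arcs into the root, this gives the same $O(\sqrt{\eps}\ln(1/\eps))$ estimate without ever needing to control $\sum_a z(a,b)$.
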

	\begin{proof}
		Define the following subsets of certified primary paths rooted at  $u$:
		\begin{align*}
		\mathcal{C}_1 &= \{(P,T)\in \mathcal{C} \mid \mbox{$P$ is of length at least $2000 \cdot \ln(1/\eps)$}\} \\
		\mathcal{C}_2 &= \{(P,T)\in \mathcal{C} \setminus \mathcal{C}_1 \mid \mbox{$z(B(P,T)) > 2\ln(1/\eps)$}\}
		\end{align*}
		Note that $\mathcal{P} = \mathcal{C} \setminus \left( \mathcal{C}_1 \cup \mathcal{C}_2 \right)$.
		Since $u$ satisfies~\eqref{eq:key_lemma}, we have that
		\begin{align*}
		\sum_{(P,T)\in \mathcal{C}_1} \Pr[\EventIN{(P,T)} \mid F_u] \leq \eps^2 \leq \eps^{1/3}/6.
		\end{align*}
		On the other hand, by~\cref{lem:fat} and $\Pr[F_u] \geq c$ (by Property~\eqref{en:d3}), we have that
		\begin{align*}
		\sum_{(P,T)\in \mathcal{C}_2} \Pr[\EventIN{(P,T)} \mid F_u] \leq c^{-1}\cdot \sum_{(P,T)\in \mathcal{C}_2} \Pr[\EventIN{(P,T)}] \leq c^{-1} \cdot \eps \leq \eps^{1/3}/6.
		\end{align*}
		In other words, almost all probability mass lies in those outcomes where one of the certified paths $(P,T)\in \mathcal{P}$ is in $H_\tau$. It remains to prove that, in those cases, we almost always have that the component of $u$ in $H_\tau$ equals the path $P$ (whose termination is certified by $T$). Specifically, let $\overline{\EventEQ{(P,T)}}$ denote the complement of $\EventEQ{(P,T)}$. We show
		\begin{align}
		\label{eq:notequal}
		\Pr\left[\overline{\EventEQ{(P,T)}}\mid \EventIN{(P,T)}\right] \leq \eps^{1/3}/7\,.
		\end{align}
		To see this, note that by the definition of the event $\EventIN{(P,T)}$, if we restrict ourselves to primary edges then the component of $u$ in $H_\tau$ equals $P$. We thus have that for the event $\overline{\EventEQ{(P,T)}}$ to be true at least one of the vertices in $P$ must have an incoming or outgoing \emph{secondary} edge. Hence  the expression
		$\Pr\left[\overline{\EventEQ{(P,T)}}\mid \EventIN{(P,T)}\right]$ can be upper bounded by
		\begin{align}
		\Pr[\mbox{a vertex in $P$ has an incoming or outgoing secondary arc in $G_\tau$} \mid \EventIN{(P,T)}]
		\label{eq:secondary}
		\end{align}
		Note that event $\EventIN{(P,T)}$ is determined solely by choices of primary arcs. By independence of these choices and choices of secondary arcs, conditioning on $\EventIN{(P,T)}$ does not affect the distribution of secondary arcs. So the probability that any of the nodes in $P$ selects a secondary edge is at most $\sqrt{\epsilon}$. Thus, by union bound, the probability that any of the $|P|\leq 2000 \cdot \ln(1/\epsilon)$ vertices in $P$ pick a secondary arc is at most $\sqrt{\epsilon}\cdot 2000 \cdot \ln(1/\epsilon)$. We now turn our attention to incoming secondary arcs. First, considering the secondary arcs that go into $u$, we have 
		\begin{align*}
		c\leq \Pr[F_u] \leq  \prod_{(w,u)\in B(v,u)} \left( 1- z(w,u)/2 \right) \leq \exp(-z(B(v,u))/2),
		\end{align*}
		because any arc $(w,u)\in B(v,u)$ appears as a primary arc in $G_\tau$ independently with probability at least $z(w,u)/2$ and the appearance of such an arc implies that $u$ has an incoming primary arc in $H_\tau$ and is therefore matched; i.e., the event $F_u$ is false in this case. 
		We thus have $z(B(v,u)) \leq 2\ln(1/c)$.  Further, since $(P,T) \not \in \mathcal{C}_2$, we have $z(B(P)) \leq z(B(P,T)) \leq 2\ln(1/\eps)$. Again using that the conditioning on $\EventIN{(P,T)}$ does not affect the distribution of secondary edges, we have that the probability of an incoming secondary arc to any vertex in $P$ is at most $\sqrt{\eps} \cdot \left( 2\ln(1/c) + 2\ln(1/\eps) \right).$
		Thus, by union bound, the probability that any vertex in $P$ has an incoming or outgoing secondary arc conditioned on $\EventIN{(P,T)}$ is at most 
		\begin{align*}
		\sqrt{\epsilon}\cdot 2000 \cdot \ln(1/\epsilon) +  \sqrt{\eps} \cdot \left( 2\ln(1/c) + 2\ln(1/\eps) \right) \leq \eps^{1/3}/7,
		\end{align*}
		for sufficiently small $\epsilon$, which implies~\eqref{eq:notequal} via~\eqref{eq:secondary}.
		
		We now show how  the above concludes the proof of the claim.  We have shown that each one of the two sets $\mathcal{C}_1, \mathcal{C}_2$  contributes at most $\eps^{1/3}/6$ to~\eqref{eq:large_variance_expression} (where we use that  $\sum_{w\in N_v(v)} z'_w = 1$). Hence,
		\begin{align*}
		\eqref{eq:large_variance_expression} &\leq 
		\sum_{(P,T)\in \mathcal{P}}\Pr[\EventIN{(P,T)} \mid F_u] \left(\sum_{w\in N_v(v)} z'_w \cdot \Pr[F_w \mid \EventIN{(P,T)}, F_u] - \sum_{w\in N_v(v)} z'_w \cdot \Pr[F_w] \right) + 2\eps^{1/3}/6.
		\end{align*}
		This intuitively concludes the proof of the claim as~\eqref{eq:notequal} says that $\Pr[\EventEQ{(P,T)}| \EventIN{(P,T)}]$ is almost $1$. The formal calculations are as follows.
		Since the event $\EventEQ{(P,T)}$ implies the event $\EventIN{(P,T)}$, we have that
		\begin{align*}
		\Pr[\EventEQ{(P,T)}] = \Pr[\EventEQ{(P,T)} \wedge \EventIN{(P,T)}]  = \Pr[\EventIN{(P,T)} ] - \Pr[\overline{\EventEQ{(P,T)}} \wedge \EventIN{(P,T)}], 
		\end{align*}
		which by~\eqref{eq:notequal} implies
		\begin{align}
		\Pr[\EventEQ{(P,T)} ] = \Pr[\EventIN{(P,T)} ]\left( 1- \Pr\left[\overline{\EventEQ{(P,T)}}\mid \EventIN{(P,T)} \right]\right) \geq \Pr[\EventIN{(P,T)} ] \left(1- \eps^{1/3}/7   \right).\label{eq:EQ-in-terms-of-IN}
		\end{align}
		We use this to rewrite $\Pr[\EventIN{(P,T)} \mid F_u] \left(\sum_{w\in N_v(v)} z'_w \cdot \Pr[F_w \mid \EventIN{(P,T)}, F_u] - \sum_{w\in N_v(v)} z'_w \cdot \Pr[F_w] \right)$. Specifically, by law of total probability, it can be rewritten as the sum of the expressions~\eqref{eq:first_express} and~\eqref{eq:second_express} below:
		\begin{align}
		 \Pr[\EventEQ{(P,T)} \wedge \EventIN{(P,T)} \mid F_u] & \left(\sum_{w\in N_v(v)} z'_w \cdot \Pr[F_w \mid \EventEQ{(P,T)}, \EventIN{(P,T)}, F_u] - \sum_{w\in N_v(v)} z'_w \cdot \Pr[F_w] \right) \nonumber\\
		 = \Pr[\EventEQ{(P,T)}  \mid F_u] & \left(\sum_{w\in N_v(v)} z'_w \cdot \Pr[F_w \mid \EventEQ{(P,T)}] - \sum_{w\in N_v(v)} z'_w \cdot \Pr[F_w] \right) 		\label{eq:first_express}
		\end{align}
		and
		\begin{align}\label{eq:second_express}
		\Pr[\overline{\EventEQ{(P,T)}} \wedge \EventIN{(P,T)} \mid F_u] \left(\sum_{w\in N_v(v)} z'_w \cdot \Pr[F_w \mid \overline{\EventEQ{(P,T)}}, \EventIN{(P,T)}, F_u] - \sum_{w\in N_v(v)} z'_w \cdot \Pr[F_w] \right), 
		\end{align}
		where \eqref{eq:second_express} can be upper bounded as follows: 
		\begin{align*}
		\eqref{eq:second_express} & \leq \Pr[\overline{\EventEQ{(P,T)}} \wedge \EventIN{(P,T)} \mid F_u] && \text{(by $\sum_{w\in N_v(v)} z'_w \leq 1$)}\\
		 & \leq c^{-1}\cdot \Pr[\overline{\EventEQ{(P,T)}} \wedge \EventIN{(P,T)} ] && \text{(by $c\leq \Pr[F_u]$)}\\
		 & = c^{-1}\cdot \Pr[\EventIN{(P,T)}  ] \cdot \Pr\left[\overline{\EventEQ{(P,T)}}\mid \EventIN{(P,T)}\right]\\
		 & \leq c^{-1}\cdot \frac{\Pr[\EventEQ{(P,T)} ]}{1-\eps^{1/3}/7}\cdot (\eps^{1/3}/7) && \text{(by \eqref{eq:notequal} and \eqref{eq:EQ-in-terms-of-IN})}\\
		 & \leq \Pr[\EventEQ{(P,T)}] \cdot \eps^{1/3}/6. && \text{(for $\epsilon$ small enough)}
		\end{align*}
		
		As at most one of the events $\{\EventEQ{(P,T)}\}_{(P,T) \in \mathcal{P}}$ is true in any realization of $G_\tau$, we have that $\sum_{(P,T) \in \mathcal{P}} \Pr[\overline{\EventEQ{(P,T)}} \wedge \EventIN{(P,T)} \mid F_u] \leq \sum_{(P,T) \in \mathcal{P}} \left( \Pr[\EventEQ{(P,T)}] \cdot \eps^{1/3}/6\right) \leq \eps^{1/3}/6$.
		Thus, again using that $\sum_{w\in N_v(v)} z_w \leq 1$, we have that
		\begin{align*}
		\eqref{eq:large_variance_expression} &\leq 
		\sum_{(P,T)\in \mathcal{P}}\Pr[\EventIN{(P,T)} \mid F_u] \left(\sum_{w\in N_v(v)} z'_w \cdot \Pr[F_w \mid \EventIN{(P,T)}, F_u] - \sum_{w\in N_v(v)} z'_w \cdot \Pr[F_w] \right) + 2\eps^{1/3}/6  \\
		& \leq \sum_{(P,T)\in \mathcal{P}}\Pr[\EventEQ{(P,T)} \mid F_u] \left(\sum_{w\in N_v(v)} z'_w \cdot \Pr[F_w \mid \EventEQ{(P,T)}] - \sum_{w\in N_v(v)} z'_w \cdot \Pr[F_w] \right) + 3\eps^{1/3}/6,
		\end{align*}
		as claimed.    
	\end{proof}

	The previous claim bounded the contribution of certified primary paths in $\mathcal{C}\setminus \mathcal{P}$ to \eqref{eq:large_variance_expression}. The following claim bounds the contribution of paths in $\mathcal{P}$.
	\begin{claim}\label{claim:bound-for-EQ}
		Let $\mathcal{P} \subseteq \mathcal{C}$ contain those certified primary paths $(P, T)$ of $\mathcal{C}$ that satisfy:  $P$ has length less than $2000 \cdot \ln(1/\eps)$ and  $z(B(P, T)) \leq 2\ln(1/\eps)$. Then, we have
		\begin{align*}
		\sum_{(P,T)\in \mathcal{P}}\Pr[\EventEQ{(P,T)}] \left(\sum_{w\in N_v(v)} z'_w \cdot \Pr[F_w \mid \EventEQ{(P,T)}] - \sum_{w\in N_v(v)} z'_w \cdot \Pr[F_w] \right)\leq \epsilon^{1/3}/2.
		\end{align*}
	\end{claim}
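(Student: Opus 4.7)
The plan is to bound, for each $(P,T) \in \mathcal{P}$, the quantity
\[ \Delta(P,T) := \sum_{w\in N_v(v)} z'_w \cdot \left( \Pr[F_w \mid \EventEQ{(P,T)}] - \Pr[F_w] \right) \]
by $O(\sqrt{\eps}\ln(1/\eps))$. Since the events $\{\EventEQ{(P,T)}\}_{(P,T)\in\mathcal P}$ are pairwise disjoint, $\sum_{(P,T)\in\mathcal P}\Pr[\EventEQ{(P,T)}] \leq 1$, so summing $\Pr[\EventEQ{(P,T)}]\cdot \Delta(P,T)$ yields $O(\sqrt{\eps}\ln(1/\eps)) \leq \eps^{1/3}/2$ for $\eps$ small enough, as desired.

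Fix $(P,T)\in \mathcal{P}$. By \Cref{lm:conditional-product}, the distribution of primary and secondary arc choices conditioned on $\EventEQ{(P,T)}$ remains a product distribution across vertices $w$, and for each $w$ its primary (resp.\ secondary) conditional marginal lies within TV distance at most $z(R(w))$ of the unconditional marginal. I would couple the two distributions independently across vertices using, per coordinate, the optimal TV coupling of the primary choice and of the secondary choice; under this coupling the expected number of vertices whose primary choice differs is at most $\sum_w z(R(w)) = z(R(P,T))$, and likewise for secondary choices, so the total expected number of vertices with any differing choice is at most $2z(R(P,T))$. Applying \Cref{lem:freevertices} iteratively (once per altered vertex), the expected number of vertices whose matched status $F_w$ differs in the coupled executions is at most $4z(R(P,T))$. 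Since $\Pr[F_w \mid \EventEQ{(P,T)}] - \Pr[F_w]$ is bounded by the probability that $F_w$ differs under the coupling, and since $z'_w \leq z_w \leq C\sqrt\eps$ for every $w \in N_v(v)$ by \Cref{lem:propertiesWW}(3), we obtain
\[ \Delta(P,T) \leq \mathbb{E}\left[\sum_{w \in N_v(v):\,F_w \text{ differs}} z'_w\right] \leq 4C\sqrt{\eps}\cdot z(R(P,T)). \]

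It remains to bound $z(R(P,T)) = O(\ln(1/\eps))$ using the inclusion~\eqref{eq:ruled-out-sets-ub}. The blocking-set term contributes $z(B(P,T)) \leq 2\ln(1/\eps)$ by the definition of $\mathcal{P}$. The contribution from all arcs leaving the vertices of $P$ (together with the at most one certificate vertex) is bounded via \Cref{lem:propertiesWW}(2) by $(1+C\eps)(|P|+1) = O(\ln(1/\eps))$, since $|P| < 2000\ln(1/\eps)$. Finally, for the arcs $\{(w, r)\}$ incoming to the root $r = u$, independence of primary choices across vertices gives $\Pr[F_r] \leq \prod_w (1 - z'(w, r))$, and since $\Pr[F_r] \geq c$ by \Cref{cor:ww}, we deduce $\sum_w z'(w, r) \leq -\ln c = O(1)$, whence $\sum_w z(w, r) = O(1)$ via \Cref{lem:propertiesWW}(2). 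Summing the three contributions yields $z(R(P,T)) = O(\ln(1/\eps))$, completing the bound on $\Delta(P,T)$. The main obstacle is justifying the iterated application of \Cref{lem:freevertices} to turn a bound on altered vertex choices into a bound on differing matched statuses: one must verify that the matched-status symmetric difference grows by at most two per altered vertex when the perturbations are processed one at a time, and that this bookkeeping is compatible with the single independent coordinate-wise coupling on which the expectation is taken.
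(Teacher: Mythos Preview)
Your proposal is correct and follows essentially the same approach as the paper. The paper makes the ``iterated application of \Cref{lem:freevertices}'' you flag as the main obstacle explicit by constructing a chain of hybrid distributions $H_\tau^{(0)},\dots,H_\tau^{(n)}$ (switching one vertex's marginal from unconditional to conditional at each step), coupling adjacent hybrids via the optimal TV coupling on that single vertex's choices, and applying \Cref{lem:freevertices} once per step together with the triangle inequality; this yields exactly your bound $\mathbb{E}\bigl[\sum_w |F_w^{(n)}-F_w^{(0)}|\bigr]\le 4z(R(P,T))$, after which the bound $z(R(P,T))=O(\ln(1/\eps))$ and the factor $z'_w\le C\sqrt{\eps}$ finish the argument identically.
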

	
\begin{proof}	
We prove the claim in two steps: first we construct a chain of distributions that interpolates between the unconditional distribution of $H_\tau$ and its conditional distribution, and then bound the expected number of vertices that change their matched status along that chain. For the remainder of the proof we fix the certified primary path $(P,T)$.

\paragraph{Constructing a chain of distributions.} Let $H_\tau^{(0)}$ denote the unconditional distribution of $H_\tau$ when $v$ arrives, and let $H_\tau^{(n)}$ denote the distribution of $H_\tau$ conditioned on $\EventEQ{(P,T)}$ when $v$ arrives. Here $n=|V|$ is the number of vertices in the input graph. For every $w\in V$ let $F^{(0)}_w$ denote the indicator of $w$ being free when $v$ arrives (unconditionally) and let $F^{(n)}_w$ denote the indicator variables of $w$ being free when $v$ arrives conditioned on $\EventEQ{(P,T)}$. Note that $F^{(0)}$ is determined by $H^{(0)}_\tau$ and $F^{(n)}$ is determined by $H^{(n)}_\tau$.  For $t=0,\ldots, n$, we define distributions $H^{(t)}_\tau$ that interpolate between $H_\tau^{(0)}$ and $H_\tau^{(n+1)}$ as follows. 

As in Lemma~\ref{lm:conditional-product}, for every $w\in V$ we denote the unconditional distribution of its primary choice by $p_w$, and the unconditional distribution of its secondary choice by $s_w$. Similarly, we denote the conditional distribution given $\EventEQ{(P,T)}$ of the primary choice by $\wt{p}_w$ and the conditional distribution of the secondary choice by $\wt{s}_w$. For every $t=0,\ldots, n$ the primary choice of vertices $w_j, j=1,\ldots, t$ are sampled independently from $\wt{p}_{w_j}$, and the primary choices of vertices $w_j, j=t+1,\ldots, n$ are sampled independently from the unconditional distribution $p_{w_t}$. Similarly, secondary choices of vertices $w_j, j=1,\ldots, t$ are sampled independently from $\wt{s}_{w_j}$ and secondary choices of vertices $w_j,j=t+1,\ldots, n$ are sampled independently from $s_{w_j}$. Note that $H^{(0)}_\tau$ is sampled from the unconditional distribution of $H_\tau$, and $H_\tau^{(n)}$ is sampled from the conditional distribution (conditioned on $\EventEQ{(P,T)}$), as required, due to the independence of the conditional probabilities $\tilde{p}_{w_j}$ and $\tilde{s}_{w_j}$, by \Cref{lm:conditional-product}. For $t=0,\ldots, n$ let $M_t$ denote the matching constructed by our algorithm on $H_\tau^{(t)}$, and let $F^{(t)}_w$ be the indicator variable for $w$ being free when $v$ arrives in the DAG sampled from $H^{(t)}_\tau$.

\paragraph{Coupling the distributions of $H_\tau^{(t)}$.} We now exhibit a coupling between the $H_\tau^{(t)},t=0,\ldots, n$. Specifically, we will show that for every such $t$ the following holds.
\begin{equation}\label{eq:9243hg9h9234g}
\expect\left[\sum_{q\in V} |F^{(t+1)}_q-F^{(t)}_q|\right]\leq 4z(R(w_{t+1})),
\end{equation}
where $R(w_{t+1})$ is as defined in Lemma~\ref{lm:conditional-product} with regard to the certified primary path $R(P,T)$.
Recall that $z(R(w_{t+1}))$ is the total probability assigned to arcs leaving $w_{t+1}$ which are ruled out from being primary arcs in $G_\tau$ by conditioning on $\EventEQ{(P, T)}$.

We construct the coupling by induction. The base case corresponds to $t=0$ and is trivial. We now give the  inductive step ($t\to t+1$). We write $w:=w_{t+1}$ to simplify notation. Let $Z^p\in N_w(w)$ denote the primary choice of $w$ in $H_\tau^{(t)}$, and let $Z^s\in N_w(w)$ denote the secondary choice of $w$ in $N_w(w)$ (they are sampled according to the unconditional distributions $p_w$ and $s_w$ respectively). Let $\wt{Z}^p\in N_w(w)$ and $\wt{Z}^s\in N_w(w)$ be sampled from the conditional distributions $\wt{p}_w$ and $\wt{s}_w$ respectively, such that that the joint distributions $(Z^p, \wt{Z}^p)$ and $(Z^s, \wt{Z}^s)$ satisfy
\begin{equation}\label{eq:923y9th23t32}
\Pr[Z^p\neq \wt{Z}^p]=\TV(p_w, \wt{p}_w)\text{~~and~~}\Pr[Z^s\neq \wt{Z}^s]=\TV(s_w, \wt{s}_w).
\end{equation}

First, we note that if $Z^p=\wt{Z}^p$ and $Z^s = \wt{Z}^s$, then $w=w_{t+1}$ is matched to the same neighbor under $H_\tau^{(t)}$ and $H_\tau^{(t+1)}$, and so $M_t=M_{t+1}$, due to the greedy nature of the matching constructed. Otherwise, by Lemma~\ref{lem:freevertices}, at most two vertices have different matched status in $M_t$ and $M_{t+1}$ in the latter case (in the former case every vertex has the same matched status). To summarize, we have, for $R(w)$ determined by $(P,T)$ as in \Cref{lm:conditional-product}, that 
\begin{equation}\label{eq:239hg92hg932g}
\begin{split}
\expect\left[\sum_{q\in V} |F^{(t+1)}_q-F^{(t)}_q|\right]&\leq 2\cdot \Pr[Z^p\neq \wt{Z}^p\text{~or~}Z^s\neq \wt{Z}^s]\\
&\leq 2(\TV(p_w, \wt{p}_w)+\TV(s_w, \wt{s}_w))\text{~~~~(by~\eqref{eq:923y9th23t32} and union bound)}\\
&\leq 4z(R(w)).\text{~~~~~~~~~~~~~~~~~~~~~~~~~~~~(by~Lemma~\ref{lm:conditional-product})}
\end{split}
\end{equation}
This concludes the proof of the inductive step, and establishes~\eqref{eq:9243hg9h9234g}. In particular, we get
\begin{equation}\label{eq:239hg23g}
\begin{split}
\expect\left[\sum_{q\in V} |F^{(n)}_q-F^{(0)}_q|\right]&\leq \sum_{t=0}^{n-1} \expect\left[\sum_{q\in V} |F^{(t+1)}_q-F^{(t)}_q|\right]\\
&\leq \sum_{t=0}^{n-1} 4z(R(w_{t+1})) \text{~~~~~~~(by~\eqref{eq:239hg92hg932g})}\\
&= 4z(R(P,T)),
\end{split}
\end{equation}
by the definition of $R(P,T) = \bigcup_w R(w)$ in \Cref{lm:conditional-product}.

We now finish the claim. First note that for any $(P,T)$ such that $P$ has length at most $2000 \cdot \ln(1/\eps)$ and  $z(B(P,T)) \leq 2\ln(1/\eps)$ one has $\sum_w z(R(w)) =  z(R(P,T))=O(\ln(1/\eps))$. Indeed, by Lemma~\ref{lm:conditional-product} and linearity of $z$, recalling that $u$ is the root of $P$ and that no vertex appears after $v$ (and thus $B(v,u)=\{(w,u) \mid w \mbox{ arrives between $u$ and $v$}\}$), we have
\begin{align}\label{eq:ub-zRPT}
z(R(P, T)) & \leq z(B(P,T)) + z(B(v,u)) +  \sum_{w\in P\cup\{w:\, T=(w,w')\}} z\left( \{w\}\times N_w(w)\right).
\end{align}
We now bound the contribution to the above upper bound on $\sum_w z(R(w)) = z(R(P,T))$ in \eqref{eq:ub-zRPT}. First,
we have that $z(B(P, T))\leq 2\ln(1/\eps)$ by assumption of the lemma. 
To bound the contribution of  $z(B(v,u))$, we note that by Property~\ref{en:d3}, we have
    \begin{align*}
      c\leq \Pr[F_u] = \prod_{e\in B(v,u)} \left( 1- z_e \right) \leq \exp\left(-\sum_{(w,u)\in B(v,u)} z(w,u)/2\right) \leq \exp(-z(B(v,u))/2),
    \end{align*}
    because any arc $e = (w,u)$ appears as a primary arc in $G_\tau$ with probability $z'(w,u)\geq z(w,u)/2$, independently of other such arcs, and the appearance of any such an edge implies that $u$ has an incoming primary edge in $H_\tau$ when $v$ arrives and is therefore matched; i.e., the event $F_u$ is false in this case. We thus have $z(B(v,u)) \leq 2\ln(1/c)$. Finally, it remains to note that for every one of the at most $2000 \cdot \ln(1/\epsilon)+1$ vertices $w\in P\cup\{w:\, T=(w,w')\}$ the contribution of $z(\{w\}\times N_w(w))$ to 
    the right hand side of \eqref{eq:ub-zRPT} is at most $1+C\e\leq 2$, by \Cref{lem:propertiesWW}, \textbf{(2)}. Putting these bounds together, we get that for sufficiently small $\e$,
	\begin{equation}\label{eq:923hg92hg}
		z(R(P, T))\leq 2\ln(1/\eps)+2\ln(1/c)+2 \cdot 2000 \cdot \ln(1/\e)+2=O(\ln(1/\e)).
	\end{equation}

	The term we wish to upper bound is at most 
	\begin{align*}
		& \sum_{w\in N_v(v)} z'_w \cdot \Pr[F_w \mid \EventEQ{(P,T)}] - \sum_{w\in N_v(v)} z'_w \cdot \Pr[F_w] && \\
		\leq & \left(\max_{w\in N_v(v)} z'_w\right)\cdot \sum_{w\in N_v(v)} \left|\Pr[F_w \mid \EventEQ{(P,T)}] - \Pr[F_w]\right| && \\
		\leq & C\sqrt{\e}\cdot \sum_{w\in N_v(v)} \left|\Pr[F_w \mid \EventEQ{(P,T)}] - \Pr[F_w]\right| && \text{(by Lemma~\ref{lem:propertiesWW}, {\bf (3)})}\\
		= & C\sqrt{\e}\cdot \expect\left[ \sum_{w\in N_v(v)} |F^{(n)}_w-F^{(0)}_w|\right] && \text{(by definition of $F^{(0)}$ and $F^{(n)}$)}
	\end{align*}
	then, using \eqref{eq:239hg23g} and \eqref{eq:923hg92hg}, we find that the term we wish to upper bound is at most
	\begin{align*}
		\leq & C\sqrt{\e}\cdot \expect\left[ \sum_{w\in V}|F^{(n)}_w-F^{(0)}_w|\right] && \\
		\leq & C\sqrt{\e}\cdot z(R(P,T)) && \text{(by ~\eqref{eq:239hg23g})}\\
		= & O(\sqrt{\e}\cdot \log(1/\e)) && \text{(by ~\eqref{eq:923hg92hg})} \\
		\leq  & \epsilon^{1/3}/2,
	\end{align*}
	completing the proof.
\end{proof}
	Finally, we obtain \Cref{lem:variance} by combining \Cref{claim:focus-on-EQ} and \Cref{claim:bound-for-EQ}, to find that, as claimed
		\begin{align*}
		\eqref{eq:large_variance_expression} & \leq 
		\sum_{(P,T)\in \mathcal{P}}\Pr[\EventEQ{(P,T)} \mid F_u] \left(\sum_{w\in N_v(v)} z'_w \cdot \Pr[F_w \mid \EventEQ{(P,T)}] - \sum_{w\in N_v(v)} z'_w \cdot \Pr[F_w] \right) + \eps^{1/3}/2 \\
		& \leq \epsilon^{1/3}/2 + \epsilon^{1/3}/2 = \epsilon^{1/3}. \qedhere
	\end{align*}
\end{proof}

\subsubsection{Bounding the Impact of Bad Vertices}
\label{sec:bad-is-ok}

In this section, we show that we can completely ignore the bad vertices without losing too much. 
From the definition of good vertices, for a bad vertex $v$, we have that 
$$\Pr_{\tau}[\mbox{$H_\tau$ has a primary path rooted at $v$ of length at least 
$2000 \cdot \ln(1/\varepsilon)$}] \geq \varepsilon^6.$$ 
As the main result of this section, we prove the following theorem:
\begin{thm}
  \label{lem:bad-is-ok}
  The number of bad vertices is at most $ \varepsilon^{3} \cdot \sum_{e \in E} x_{e}$.
\end{thm}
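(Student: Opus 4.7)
The plan is to use a Markov-style argument. By definition of a bad vertex, each bad $v$ satisfies $\Pr_\tau[L_v \geq K] \geq \varepsilon^6$, where $L_v$ denotes the length of the longest primary path in $H_\tau$ rooted at $v$ and $K = 2000 \ln(1/\varepsilon)$. Summing over all bad vertices gives
\[
B \cdot \varepsilon^6 \;\leq\; \sum_{v \text{ bad}} \Pr_\tau[L_v \geq K] \;\leq\; \sum_{v \in V} \Pr_\tau[L_v \geq K] \;=\; \mathbb{E}_\tau\!\left[\#\{v : L_v \geq K\}\right].
\]
So it suffices to show $\mathbb{E}_\tau[\#\{v : L_v \geq K\}] \leq \varepsilon^9 \sum_e x_e$, from which the theorem follows by dividing by $\varepsilon^6$.

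To establish this aggregate bound, I would exploit the recurrence
\[
\Pr_\tau[L_v \geq k] \;=\; \sum_{u \in N_v(v)} z'(v,u) \cdot \Pr_\tau[L_u \geq k - 1],
\]
which follows from the independence of the primary choice of $v$ and the primary choices of vertices arriving after $v$. This is combined with the structural bound $z_u \leq 1/2 + O(\sqrt{\varepsilon})$ from \Cref{lem:propertiesWW}(5), implying $z'(v,u) \leq 1/2 + O(\sqrt{\varepsilon})$ for every arc. Intuitively, since each primary arc contributes a factor of at most $1/2 + O(\sqrt{\varepsilon})$ along any single trajectory, the probability that a specific length-$K$ trajectory is realized is at most $(1/2 + O(\sqrt{\varepsilon}))^K \ll \varepsilon^{100}$. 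To pass from a single trajectory to the aggregate probability $\Pr_\tau[L_v \geq K]$, one invokes \Cref{lem:fat}, which rules out primary paths with large blocking-set $z$-mass, thereby restricting attention to ``thin'' long paths; summing the contributions of these residual paths, and translating back to $\sum_e x_e$ via $z'(v,u) \cdot \Pr[F_u] \leq x_{uv}$ along with the fractional matching constraints, produces the required bound.

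The main obstacle is the second step: while the per-arc bound $z' \leq 1/2 + O(\sqrt{\varepsilon})$ strongly controls the probability along any fixed trajectory, it does not by itself bound the aggregate $\Pr_\tau[L_v \geq K]$, because the continuation probability $\sum_{u} z'(v,u)$ can be as large as $1$ (so many potential trajectories coexist). Carefully coupling the per-arc spread from \Cref{lem:propertiesWW} with the blocking-set exponential decay from \Cref{lem:fat}, and charging the resulting terms to the $x$-mass via $z'(v,u) \cdot \Pr[F_u] \leq x_{uv}$, should yield the tight bound $\mathbb{E}_\tau[\#\{v : L_v \geq K\}] \leq \varepsilon^9 \sum_e x_e$. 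Once that is in place, combining with the Markov inequality of the first step immediately gives $B \leq \varepsilon^3 \sum_e x_e$ as claimed.
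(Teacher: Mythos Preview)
Your opening Markov reduction is sound: bounding $\sum_v \Pr_\tau[L_v \geq K]$ would suffice. But the argument stalls exactly where you flag it. The forward recurrence
\[
\Pr_\tau[L_v \geq k] \;\leq\; \sum_{u \in N_v(v)} z'(v,u)\,\Pr_\tau[L_u \geq k-1]
\]
is correct as an inequality, but since $\sum_u z'(v,u)$ can equal $1$, iterating it yields no decay whatsoever. The per-arc bound $z' \leq 1/2+O(\sqrt{\varepsilon})$ controls a \emph{single fixed trajectory}, but you are summing over exponentially many trajectories whose total weight at each step is $1$. Invoking \Cref{lem:fat} removes fat paths, but it does nothing to make the forward recurrence contract, and the sentence about ``carefully coupling'' these two ingredients is not a proof step---it is a hope. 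As written, the proposal does not supply the mechanism that actually produces the $\varepsilon^9$ factor.

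The missing idea, which the paper supplies, is to reverse direction. The paper proves a \emph{backward} lemma (\Cref{lem:thin-long-path}): for any fixed vertex $w$, the probability that $w$ is the \emph{endpoint} of a long predecessor primary path with small blocking set is at most $\varepsilon^{10}$. The proof reveals predecessors of $w$ one at a time; at each step, with constant probability either no predecessor picks the current vertex (path terminates) or the blocking set grows by a constant---this is precisely where $z_u \leq 1/2+O(\sqrt{\varepsilon})<3/5$ does real work, and a Chernoff bound finishes it. With this in hand, the paper runs a charging argument: each bad vertex $u$ distributes a unit of charge to the last two vertices $w^P_{\ell-1},w^P_\ell$ of each of its long thin paths $P$, weighted by $\Pr[P\subseteq H_\tau]$ and by the dual values $y_{w^P_j}$ (using $y_{w^P_{\ell-1}}+y_{w^P_\ell}\geq 1$ from dual feasibility to ensure the charge totals at least $1$). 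The backward lemma then bounds the charge received by any $w$ by $O(\varepsilon^4)\cdot y_w$, and summing $\sum_w y_w \leq \beta\sum_e x_e$ gives the claim. Your proposal has neither the backward analysis nor the dual-value weighting; without them the argument does not close.
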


To prove this, we first describe a charging mechanism in which, for each bad vertex, a charge of one is distributed 
among a subset of other vertices. 
Then, using the following supplementary lemma, we show that the total distributed charge over all vertices in the graph is at most $\varepsilon^{3} \cdot \sum_{(u,v) \in E} x_{uv}$.

\begin{lem}
  \label{lem:thin-long-path}
  We call a primary path $P$ a \emph{primary predecessor path} of $v$ if it ends at $v$. That is, $P = v_\ell \to v_{\ell-1} \to \dots \to v_{1} = v$. 
  We have
  \begin{align*}
	\Pr_\tau [\mbox{$v$ has any primary predecessor path $P$ with $z(B(P)) \leq 20 \cdot \ln(1/\e)$ and $|P| \geq 1000 \cdot \ln(1/\e)$}] \leq \epsilon^{10}.
  \end{align*}
\end{lem}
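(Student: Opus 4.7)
The plan is to bound the expected number of length-exactly-$L$ primary predecessor paths of $v$ satisfying $z(B(P)) \leq k$, where $L := 1000\ln(1/\eps)$ and $k := 20\ln(1/\eps)$, and then invoke Markov's inequality. First I would note that any predecessor path of length $\geq L$ with $z(B(P)) \leq k$ contains a length-$L$ prefix ending at $v$ whose blocking set is a subset of $B(P)$, so restricting to length exactly $L$ is without loss of generality. Since primary choices of distinct vertices are mutually independent, the probability that a fixed length-$L$ predecessor path $P = v_L \to v_{L-1} \to \cdots \to v_1 = v$ has all of its arcs chosen as primary in $G_\tau$ equals $\prod_{i=2}^{L} z'(v_i, v_{i-1})$. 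Using $z' \leq z$ and a union bound,
\[
    \Pr[\text{$v$ has such a path}] \;\leq\; \sum_P \prod_{i=2}^L z(v_i, v_{i-1})\, \mathbb{1}[z(B(P)) \leq k].
\]

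The central step is exponential tilting: $\mathbb{1}[z(B(P)) \leq k] \leq e^{\lambda(k - z(B(P)))}$ for any $\lambda > 0$. Because the vertices of $P$ are pairwise distinct, the blocking sets $B(v_i, v_{i-1})$ are pairwise disjoint, so $z(B(P)) = \sum_{i=2}^L z(B(v_i, v_{i-1}))$. The tilted sum then factorises as a nested sum whose per-level weight at $v_{i-1}$ is
\[
    G(v_{i-1}, \lambda) \;:=\; \sum_{u \sim v_{i-1},\; u > v_{i-1}} z(u, v_{i-1})\, e^{-\lambda z(B(u, v_{i-1}))},
\]
and after collapsing the nested product one arrives at the bound $e^{\lambda k}\prod_{i=2}^L \max_{v_{i-1}} G(v_{i-1}, \lambda)$.

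The main technical estimate I would prove is $G(v_{i-1}, \lambda) \leq \zeta_{\max} + 1/\lambda$, where $\zeta_{\max} := 1/2 + O(\sqrt{\eps})$ is the pointwise bound on $z(u, v_{i-1})$ from Property~(5) of \Cref{lem:propertiesWW}. To see this, enumerate the later-arriving neighbors of $v_{i-1}$ in arrival order as $w_1, w_2, \ldots$ and write $\zeta_j := z(w_j, v_{i-1})$ and $\sigma_j := \sum_{m<j} \zeta_m = z(B(w_j, v_{i-1}))$, so that $G(v_{i-1}, \lambda) = \sum_j \zeta_j e^{-\lambda \sigma_j}$. An induction on the number of summands, using the recursion
\[
    \textstyle\sum_j \zeta_j e^{-\lambda \sigma_j} \;=\; \zeta_1 + e^{-\lambda \zeta_1} \sum_{j \geq 2} \zeta_j e^{-\lambda(\sigma_j - \zeta_1)},
\]
combined with the elementary inequality $1 + \ln y \leq y$ applied at $y = 1 + \lambda\zeta_{\max} \geq 1$, shows that $\zeta_1 + e^{-\lambda \zeta_1}(\zeta_{\max} + 1/\lambda) \leq \zeta_{\max} + 1/\lambda$ for every $\zeta_1 \in [0, \zeta_{\max}]$, closing the induction.

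Putting the pieces together yields $\Pr[\text{$v$ has such a path}] \leq e^{\lambda k}(\zeta_{\max} + 1/\lambda)^{L-1}$. Choosing $\lambda := 10$, so $\zeta_{\max} + 1/\lambda \leq 0.6 + O(\sqrt{\eps})$, one has $e^{\lambda k} = \eps^{-200}$ and $(0.6)^{1000\ln(1/\eps)} \leq \eps^{510}$, whose product is bounded by $\eps^{310}$, comfortably below $\eps^{10}$ for all sufficiently small $\eps$. The main obstacle in this plan is the tight control of $G(v_{i-1}, \lambda)$: a naive Riemann-sum estimate yields only $G \leq e^{\lambda \zeta_{\max}}/\lambda$, which blows up with $\lambda$ and is useless here; the refined inductive bound above, which isolates the boundary contribution of the first term (where $\sigma_1 = 0$), is what makes the argument go through.
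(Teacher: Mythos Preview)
Your argument is correct and takes a genuinely different route from the paper's. The paper analyzes the predecessor path as a sequential random process: starting at $v$ and revealing, one at a time in arrival order, which later neighbor (if any) of the current vertex selects it as a primary arc. They show that at each step, with probability at least $1/5$, the process either terminates or adds at least $1/5$ to $z(B(P))$; a Chernoff bound on the sum of these Bernoulli-dominated indicators then gives the result. Your approach instead takes a direct union bound over length-$L$ paths and handles the blocking-set constraint by exponential tilting, reducing the problem to bounding the single-vertex generating function $G(w,\lambda)=\sum_j \zeta_j e^{-\lambda \sigma_j}$. The inductive bound $G(w,\lambda)\le \zeta_{\max}+1/\lambda$ (resting on $e^x\ge 1+x$) is the crux, and is exactly the refinement needed over the crude Riemann estimate $e^{\lambda\zeta_{\max}}/\lambda$.

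Both proofs ultimately rely on the pointwise bound $z(u,w)\le 1/2+O(\sqrt{\eps})$ from \Cref{lem:propertiesWW}(5). The paper's proof is more transparently tied to the algorithm's dynamics and avoids any explicit enumeration of paths; yours is a clean large-deviations computation that factors through the arrival-ordered structure and yields a stronger final exponent ($\eps^{\Theta(1)\cdot 100}$ rather than just $\eps^{10}$). Two minor points of presentation: your ``$\prod_{i=2}^{L}\max_{v_{i-1}} G(v_{i-1},\lambda)$'' should be read as $(\sup_w G(w,\lambda))^{L-1}$, obtained by peeling the nested sum from the innermost layer outward; and the final numerics should use $0.6+O(\sqrt{\eps})$ rather than $0.6$, but this only changes the exponent from $510$ to, say, $490$, leaving the conclusion intact.
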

\begin{proof}
  We use the principle of deferred decisions and traverse the path backwards. Let $b$ be the current vertex, which is initially set to $v$. 
  Consider all incoming arcs to $b$, say $(a_1, b), \ldots, (a_k, b)$ where we index $a$'s by time of arrival; i.e., $a_i$ arrives before $a_j$ if $i<j$ (and $b$ arrived before any $a_i$).
  
  First consider the random choice of $a_1$ and  see if it selected the arc $(a_1, b)$. 
  \begin{itemize}
    \item If it does, then the path including $b$ in $H_\tau$ will use the arc $(a_1, b)$. 
    \item Otherwise, if $a_1$ does not select the arc $(a_1, b)$, then go on to consider $a_2$ and so on.
  \end{itemize}
  If no $a_1, \ldots, a_k$ selects $b$, then the process stops; i.e., the primary path starts at this vertex since $b$ has no incoming primary arc. 
  Otherwise let $i$ be the first index so that $(a_i, b)$ was selected. Then $(a_i, b)$ is in the primary path ending at $v$ in $H_\tau$. 
  Now, observe that \emph{no} $a_1, \ldots, a_{i-1}$ may be in the path in this case, because these vertices arrived before $a_i$ and after $b$. 
  Moreover, we have not revealed any randomness regarding $a_{i+1}, \ldots, a_{k}$ that may appear later in the path. 
  We can therefore repeat the above process with $b$ now set to $a_i$ and ``fresh'' randomness for all vertices we consider, as the random choices of arcs of all vertices are independent. 
  We now show that this process, with good probability, does not result in a long predecessor path $P$ of low $z(B(P))$ value.

  Recall from \Cref{lem:propertiesWW}, {\bf (5)}, that $z(u,v) \leq 3/5$ for all $(u,v) \in V\times V$. 
  Suppose that $\sum_{i = 1}^k z(a_i, b) \geq 4/5$.
  Let $j$ be the first index such that $\sum_{i = 1}^j z(a_i, b) \geq 1/5$. 
  Thus $\sum_{i=1}^{j} z(a_i, b) \leq 4/5$, and hence the probability that none of the first $j$ vertices select $b$ is at least $\prod_{i=1}^j(1 - z(a_i, b)) \geq  1 - \sum_{i=1}^j z(a_i, b) \geq 1/5$. 
  Consequently, with probability at least $1/5$, vertex $b$ either has no predecessor or the increase to $z(B(P))$ is at least $1/5$.

  In the other case, we have $\sum_{i=1}^k z(a_i, b)\leq 4/5$. Then the probability that $b$ has \emph{no} predecessor is 
  $\prod_{i=1}^k (1 - z(a_i, b)) \geq  1 - \sum_{i=1}^k z(a_i, b) \geq 1/5$.

  Therefore, at any step in the above random process, with probability at least $1/5$, we either stop or increase $z(B(P))$ by $1/5$.  
  Let $Z_i$ be an indicator variable for the random process either stopping or increasing $z(B(P))$ by at least $1/5$ at step $i$, and notice that according 
  to the above random process, each $Z_i$ is lower bounded by an independent Bernoulli variable with probability $1/5$.
  Thus if we define $Z = \sum_{i \in [1000 \cdot \ln (1/\e)]} Z_i$, we have $\expect[Z] \geq 200 \cdot \ln (1/\e)$, and thus by 
  standard coupling arguments and Chernoff bounds, we have that
    $$\Pr[Z \leq 100 \cdot \ln (1/\e)] \leq \Pr \left[Z \leq (1 - 1/2) \cdot \expect[Z] \right] \leq e^{-(1/2) \cdot (1/2)^2 \cdot 200 \cdot \ln (1/\e)} \leq \e^{10}.$$ 
 But if the path does not terminate within $1000 \cdot \ln (1/\e)$ steps and $Z \geq 100 \cdot \ln(1/\varepsilon)$, 
 then $z(B(P)) \geq 20 \cdot \ln(1/\e)$. 
\end{proof}

We now prove \Cref{lem:bad-is-ok}.

\begin{proof}[Proof of \Cref{lem:bad-is-ok}]
By \Cref{lem:fat}, the probability that $H_\tau$ has a primary path $P$ with $z(B(P)) \geq 20 \cdot \ln (1/\varepsilon)$ 
starting at $v$ is at most $\varepsilon^{10}$.
Thus, for a bad vertex $u$, the probability that $H_\tau$ has some primary  path $P$ rooted at $u$ with  $|P| \geq 2000 \cdot \ln (1 / \varepsilon)$ and $z(B(P)) \leq 20 \cdot \ln(1/\epsilon)$ is at least $\varepsilon^6 - \varepsilon^{10} \geq \varepsilon^6/2$.

Let $k = 20 \cdot \ln(1/\varepsilon)$ and $\ell = 2000 \cdot \ln(1/\varepsilon)$.
Let $\mathcal{P}_u$ be the set of all primary paths $P$ rooted at $u$ such that $z(B(P)) \leq k$ and $|P| = \ell$ starting at $u$.
Since all such primary paths with length more than $\ell$ are extensions of those with length exactly $\ell$, we have
$\sum_{P \in \mathcal{P}_u} \Pr[P \text{ is in } H_\tau]  \geq \e^6/2$. 
For each such path $P \in \mathcal{P}_u$, consider the two vertices $w^P_\ell$ and $w^P_{\ell - 1}$ at distances $\ell$ and $\ell - 1$ respectively from $u$. 
For each such vertex $w^P_j$ ($j \in \{\ell-1,\ell\}$), charge $(2/\varepsilon^6) \cdot \Pr[P \text{ is in } H_\tau] \cdot y_{w^P_j}$.
Then the sum of these charges is 
$$\sum_{P \in \mathcal{P}_u}  (2/\varepsilon^6) \cdot \Pr[P \text{ is in } H_\tau] \cdot \underbrace{(y_{w^P_\ell} + y_{w^P_{\ell-1}})}_{\geq 1} \geq (2/\e^6) \cdot \sum_{P \in \mathcal{P}_u} \Pr[P \text{ is in } H_\tau]  \geq 1.$$
Notice that the fact $(y_{w^P_\ell} + y_{w^P_{\ell-1}}) \geq 1$ follows because $y_w$'s form a feasible dual solution (to the vertex cover problem).

On the other hand, consider how many times each vertex is charged. 
For this, for every vertex $w$, let $\mathcal{Q}_w$ be the set of primary predecessor paths $Q$ of $u$ such that $|Q| = \ell - 1$ and $z(B(P)) \leq k$. 
As $|Q| = \ell - 1 \geq 1000 \cdot \ln(1/\e)$ for all $Q \in \mathcal{Q}_w$, by~\Cref{lem:thin-long-path}, $\sum_{Q \in \mathcal{Q}_w} \Pr[Q \text{ is in } H_\tau] \leq \e^{10}$ . 
For a primary predecessor path $Q \in \mathcal{Q}_w$ (or one of its extensions), the vertex $w$ can be charged at most twice according to the above charging mechanism. Since any predecessor path of $w$ with length more than $\ell - 1$ must be an extension of one with length exactly $\ell - 1$, we have that the amount $w$ is charged is at most
$$\sum_{Q \in \mathcal{Q}_w} 2 \cdot 2 \cdot \Pr[Q \text{ is in } H_\tau] \cdot y_w/ \e^6 \leq 4 \cdot (\e^{10}/\e^6) \cdot y_w \leq 4 \cdot \e^4 \cdot y_w.$$
Summing over all $w \in V$ and using \Cref{refined-ww-xu-bound}, the total charge is at most
\begin{equation*}\sum_{w \in v} 4 \cdot \e^4 \cdot y_w \leq 4 \cdot \e^4 \cdot \beta \cdot \sum_{e \in E} x_{e} \leq \e^3 \sum_{e \in E} x_{e}. \qedhere
\end{equation*}
\end{proof}

\subsubsection{Calculating the Competitive Ratio of \Cref{algo:wwrounding}}\label{sec:competitive}

We now show that the competitive ratio of~\Cref{algo:wwrounding} is indeed $(1/2 + \alpha)$ competitive for some sufficiently small absolute constant
$\alpha > 0$, thus proving~\Cref{thm:vertex-intro}. This essentially combines the facts that for good vertices, the matching probability is very close to the fractional values of incident edges, and that the number of bad vertices is very small compared to the total value of the fractional algorithm (over the entire graph).

\begin{proof}[Proof of~\Cref{thm:vertex-intro}]

Let $\operatorname{OPT}$ denote the size of the maximum cardinality matching in the input graph $G$. Then, by~\Cref{ww-approx} and our choice of $f=f_{1+2\epsilon}$ and $\beta=2-\eps \geq \beta^*(f_{1+2\epsilon})$, we have that
$\sum_{e} x_{e} \geq (1/\beta) \cdot \operatorname{OPT} \geq (1/2 + \e/4) \cdot \operatorname{OPT}$, where the $x_{e}$'s are the fractional values we compute in~\Cref{algo:wwrounding}.

Now let $M$ be the matching output by~\Cref{algo:wwrounding}. We have
  \begin{align*}
	\expect[{|M|}] &= \sum_{e \in E} \Pr[e \text{  is matched}] \\
	&\geq \sum_{\text{good } v \in V} (1 - \e^2) \cdot \sum_{u \in N_v(v)} x_{uv} & \text{(By~\Cref{thm:good_vertices})} \\
	&\geq  (1 - \e^2) \cdot \left(\sum_{e \in E} x_e - \sum_{\text{bad } v \in V} \sum_{u \in N_v(v)} x_{uv} \right)
	\\
	&\geq  (1 - \e^2) \cdot \left(\sum_{e \in E} x_e - \sum_{\text{ bad } v \in V} 1 \right) 	& \text{($\sum_{u \in N_v(v)} x_{uv} \leq 1$)} \\	
	&\geq  (1 - \e^2) \cdot \left(\sum_{e \in E} x_e - \e^3 \sum_{e \in E} x_{e} \right) 	& \text{(By~\Cref{lem:bad-is-ok})} \\
	&\geq  (1 - 2\e^2) \cdot \sum_{e \in E} x_{e} \\
	&\geq  (1 - 2\e^2) \cdot (1/2 + \e/4) \cdot \operatorname{OPT} \\
	&\geq  (1/2 + \e/5) \cdot \operatorname{OPT}, 
  \end{align*}
  where the last line holds for a sufficiently small constant $\epsilon>0$.
\end{proof}

\appendix
\section*{Appendix}
\section{Deferred Proofs of \Cref{sec:imrpoved-algorithm}}\label{app:imrpoved-algorithm}
Here we prove that a change of the realized arc choices of any vertex does not change the matched status of more than two vertices (at any point in time). This is \Cref{lem:freevertices}, restated below.

\freevertices*
\begin{proof}
	We consider the evolution, following each vertex arrival, of the matchings $M_\tau$ and $M_{\tau'}$ computed in $H_\tau$ and $H_{\tau'}$, respectively, as well as the set of vertices with different matched status in these matchings, denoted by $D:=(M_{\tau} \setminus M_{\tau'})\cup (M_{\tau'} \setminus M_{\tau})$. 	
	The set $D$ is empty before the first arrival and remains empty until the arrival of $v$, as all earlier vertices than $v$ have the same primary and secondary arcs and have the same set of free neighbors in $H_\tau$ and $H_{\tau'}$ (as $D=\emptyset$, by induction). Now, if immediately after $v$ arrives it remains free in both $M_\tau$ and $M_{\tau'}$, or it is matched to the same neighbor in both matchings, then clearly $D$ remains empty. Otherwise, either $v$ is matched to different neighbors in $M_\tau$ and $M_{\tau'}$, or $v$ is matched in one of these matchings but not in the other. Both these cases result in $|D|=2$. We now show by induction that the cardinality of $D$ does not increase following subsequent arrivals, implying the lemma.
	
	Let $u$ be some vertex which arrives after $v$. 
	If when $u$ arrives $u$ is matched to the same neighbor $w$ in $M_\tau$ and $M_{\tau'}$ or if $u$ remains free in both matchings, then $D$ is unchanged.
	If $u$ is matched to some $w$ on arrival in $M_\tau$, but not in $M_{\tau'}$, then since the arcs of $u$ are the same in $G_\tau$ and $G_{\tau'}$, this implies that $w$ must have been free in $M_\tau$ but not in $M_{\tau'}$, and so $D\ni w$. Therefore, after $u$ arrives, we have $D \leftarrow (D\setminus \{w\}) \cup \{u\}$, and so $D$'s cardinality is unchanged.
	Finally, if $u$ is matched to two distinct neighbors, denoted by $w$ and $w'$, respectively, then one of $(u,w)$ and $(u,w')$ must be the primary arc of $u$ in both $G_\tau$ and $G_{\tau'}$. Without loss of generality, say $(u,w)$ is this primary arc. Since $u$ is matched to $w$ in $M_\tau$ but not in $M_{\tau'}$, then $w$ must be free in $M_\tau$ when $u$ arrives, but not in $M_{\tau'}$, and so $D\ni w$. Consequently, we have that after $u$ arrives we have $D \leftarrow S$ for some set $S \subseteq (D\setminus \{w\}) \cup \{w'\}$, and so $D$'s cardinality does not increase.
\end{proof}

\section{Deferred Proofs of \texorpdfstring{\Cref{sec:fractional-algo}}{Algorithm 1}}\label{app:fractional-algo}
Here we prove the bound on the fractional degree $x_u$ in terms of its dual value, restated below.
\refinedwwbound*
\begin{proof}
	Let $y_0$ be $u$'s potential after $u$'s arrival. For the lower bound, note that it suffices to prove that every increase in the fractional degree is bounded below by the increase in the potential divided by $\beta$. When vertex $u$ first arrived, we consider two cases.
	
	\begin{enumerate}
		
		\item  $y_0  > 0$ (thus $y_0 = 1 - \theta > 0$, and so $\theta< 1$), then  the increase in $u$'s fractional degree was:
		\begin{align*}
		\sum_{v \in N_u(u)} \frac{(\theta - y_v)^{+}}{\beta} \left(1 + \frac{1-\theta}{f(\theta)}\right) =  \frac{f(\theta) + 1 - \theta}{\beta} = \frac{f(1-y_0) + y_0}{\beta} \geq \frac{y_0}{\beta}.
		\end{align*}
		
		\item  $y_0 = 0$ (thus $\theta = 1$), then the increase in $u$'s fractional degree was:
		\begin{align*}	
		\sum_{v \in N_u(u)} \frac{(\theta - y_v)^{+}}{\beta} \left(1 + \frac{1-\theta}{f(\theta)}\right) =  \sum_{v \in N_u(u)} \frac{(\theta - y_v)^{+}}{\beta}  \geq  0 = \frac{y_0}{\beta}.
		\end{align*}
		
	\end{enumerate}
	For every subsequent increase of the fractional degree due to a newly-arrived vertex we have that:
	\begin{align*}
	\frac{(\theta - y_u^{old})^+}{\beta} \left(1 + \frac{1-\theta}{f(\theta)}\right) & \geq \frac{(\theta - y_u^{old})^+}{\beta},
	\end{align*}
	Which concludes the proof for the lower bound.
	
	For the upper bound, by 
	\cite[Invariant 1]{wang2015two}, we have that 
	\begin{align}\label{eq:WW-bound-on-xu}
	\beta \cdot x_u & \leq y_{c} + f(1 - y_{0})  + \int_{y_{0}}^{y_{c}} \frac{1-x}{f(x)}\, dx.
	\end{align}
	This upper bound can be simplified by using \Cref{WW-tight}, as follows.
	Taking \eqref{eq:WW-bound-on-xu}, adding and subtracting $1 + f(1-y_u)$ and writing the integral $\int_{y_0}^{y_u} \frac{1-x}{f(x)}\, dx$ as the difference of two integrals  $\int_{y_0}^{1} \frac{1-x}{f(x)}\, dx$ - $\int_{y_u}^{1} \frac{1-x}{f(x)}\, dx$, and relying on \Cref{WW-tight}, we find that
	\begin{align*}
	\beta \cdot x_u & \leq y_{c} + f(1 - y_{0})  + \int_{y_{0}}^{y_{c}} \frac{1-x}{f(x)}\, dx \\
	&=  
	\left( 1 + f(1 - y_{0})  + \int_{y_{0}}^{1} \frac{1-x}{f(x)}\, dx \right)  -1  + y_{c} +
	\int_{1}^{y_{c}} \frac{1-x}{f(x)}\, dx \\
	&= \beta^*(f) + y_{c} - \left(  1 + f(1 - y_{c})  + \int_{y_{c}}^{1} \frac{1-x}{f(x)}\, dx  \right) + f(1-y_{c}) \\
	&= \beta^*{(f)} + y_{c} - \beta^*(f) + f(1-y_{c})\\
	& = y_{c} + f(1-y_{c}),
	\end{align*}
	from which the lemma follows.
\end{proof}

\section{Deferred Proofs of \Cref{sec:propertiesWW}}\label{app:WW-functions}

In this section we present the proofs deferred from \Cref{sec:propertiesWW}. We start by presenting a more manageable form for the function $f=f_{1+2\epsilon}$ which we use. 

A function in the WW family is determined by a parameter $k\geq 1$ and takes the following form
\begin{align*}
f_\kappa(\theta) = \left(\frac{1+\kappa}{2} - \theta\right)^{\frac{1+\kappa}{2\kappa}}\left(\theta+\frac{\kappa-1}{2}\right)^{\frac{\kappa-1}{2\kappa}}.
\end{align*}
Letting $\kappa= 1+2\varepsilon$, we get that $f:=f_\kappa$ is of the form
\begin{align*}
f(\theta) & = (1+\varepsilon - \theta)^{\frac{1+\varepsilon}{1+2\varepsilon}}\cdot \left(\theta + \varepsilon\right)^{\frac{\varepsilon}{1+2\varepsilon}} \\
&= \left(1+ \varepsilon - \theta\right) \cdot \left(
\frac{\theta+\varepsilon}{1+\varepsilon-\theta}
\right)^{\frac{\varepsilon}{1+2\varepsilon}}.
\end{align*}
Clearly this is water filling when $\varepsilon = 0$ and otherwise we have that the first term is like water filling and then the second term is less than $1$ for $z\leq 1/2$ and greater than $1$ if $z > 1/2$.

By Taylor expansion, we obtain the following more manageable form for $f$.
\begin{lem}
	\label{lm:taylor}
	There exists $\e_0\in (0,1)$ such that for every $\e\in (0,\e_0)$ and every $\theta\in [0,1]$, we have
	\begin{align*}
	f(\theta) \leq (1 - \theta) \left(1+\varepsilon
	\ln\left(\frac{\theta+\varepsilon}{1+\varepsilon-\theta} \right)\right) + 1.01\e.
	\end{align*}
\end{lem}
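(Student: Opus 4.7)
The plan is to reduce the claim to a Taylor expansion of $f$. Setting $a := \frac{\theta+\varepsilon}{1+\varepsilon-\theta}$ and $\gamma := \frac{\varepsilon}{1+2\varepsilon}$, the closed form $f(\theta) = (1+\varepsilon-\theta)\,a^\gamma = (1+\varepsilon-\theta)\,e^{\gamma\ln a}$ recorded just before the lemma recasts the desired inequality as a comparison between $(1+\varepsilon-\theta)\,e^{\gamma\ln a}$ and $(1-\theta)(1+\varepsilon\ln a) + 1.01\,\varepsilon$. The key analytic tool will be the second-order bound $e^x \le 1 + x + x^2$, valid for $|x|\le 1$.

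First I would check that $|\gamma\ln a| \le 1$ for small enough $\varepsilon$: as $\theta$ ranges over $[0,1]$, $a$ ranges over $[\varepsilon/(1+\varepsilon),\,(1+\varepsilon)/\varepsilon]$, so $|\ln a| \le \ln(1/\varepsilon) + O(1)$, and since $\gamma\le\varepsilon$ we get $|\gamma\ln a| \le O(\varepsilon\ln(1/\varepsilon))$, which is at most $1$ once $\varepsilon$ is a small enough absolute constant.

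Applying the exponential bound to $e^{\gamma\ln a}$ and subtracting the target, a direct manipulation yields the estimate
\[
f(\theta) - \bigl[(1-\theta)(1+\varepsilon\ln a) + 1.01\,\varepsilon\bigr] \;\le\; -0.01\,\varepsilon + \bigl[(1+\varepsilon-\theta)\gamma - (1-\theta)\varepsilon\bigr]\ln a + (1+\varepsilon-\theta)(\gamma\ln a)^2.
\]
A short calculation using $\gamma = \varepsilon - \tfrac{2\varepsilon^2}{1+2\varepsilon}$ simplifies the bracket to $\tfrac{\varepsilon^2(2\theta-1)}{1+2\varepsilon}$, so the middle term has magnitude at most $\varepsilon^2|\ln a| = O(\varepsilon^2\ln(1/\varepsilon))$, while the last term is at most $2(\gamma\ln a)^2 = O(\varepsilon^2\ln^2(1/\varepsilon))$ since $1+\varepsilon-\theta \le 2$. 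Both corrections are $o(\varepsilon)$, so the full right-hand side is bounded by $-0.01\,\varepsilon + O(\varepsilon^2\ln^2(1/\varepsilon)) \le 0$ once $\varepsilon \le \varepsilon_0$ for an appropriate absolute constant $\varepsilon_0$, which is exactly the claim.

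The only delicate point is book-keeping: the $0.01\,\varepsilon$ slack in the statement is precisely what must absorb both the $O(\varepsilon^2\ln(1/\varepsilon))$ discrepancy coming from $\gamma\neq\varepsilon$ and the $O(\varepsilon^2\ln^2(1/\varepsilon))$ quadratic tail of the Taylor expansion. Since $\varepsilon\ln^2(1/\varepsilon)\to 0$ as $\varepsilon\to 0^+$, this is a routine verification and no essential conceptual obstacle arises.
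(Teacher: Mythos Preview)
Your proof is correct and follows essentially the same approach as the paper: both expand $f(\theta)=(1+\varepsilon-\theta)e^{\gamma\ln a}$ via Taylor's theorem and verify that all correction terms beyond $(1-\theta)(1+\varepsilon\ln a)+\varepsilon$ are $O(\varepsilon^2\ln^2(1/\varepsilon))=o(\varepsilon)$, hence absorbed by the $0.01\,\varepsilon$ slack. Your version is slightly tidier---you use the quadratic bound $e^x\le 1+x+x^2$ directly and collapse the two linear-order discrepancies into the single bracket $\frac{\varepsilon^2(2\theta-1)}{1+2\varepsilon}$, whereas the paper bounds each dropped term separately---but the underlying idea is identical.
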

\begin{proof} 
	Taking the Taylor expansion of $e^x$, we find that
	\begin{equation*}
	\begin{split}
	f(\theta) & = \left(1+ \varepsilon - \theta\right) \cdot \left(
	\frac{\theta+\varepsilon}{1+\varepsilon-\theta}
	\right)^{\frac{\varepsilon}{1+2\varepsilon}}\\
	& = 
	\left(1+ \varepsilon - \theta \right) \cdot \sum_{i=0}^\infty 
	\frac{
		\left( \ln \left(\frac{\theta + \varepsilon}{1+ \varepsilon - \theta}\right) 
		\cdot \frac{\varepsilon}{1+2\varepsilon}
		\right)^i
	}{i!} \\
	& = (1+\varepsilon - \theta) \left(1+ \ln \left(\frac{\theta+\varepsilon}{1+ \varepsilon - \theta} \right) \cdot\frac{\varepsilon}{1+2\varepsilon}\right)  +  o(\varepsilon) \\
	& = (1+\varepsilon - \theta ) + (1-\theta) \ln \left(\frac{\theta+\varepsilon}{1+ \varepsilon - \theta} \right) \cdot\frac{\varepsilon}{1+2\varepsilon} + o(\varepsilon) \\
	& = (1+\varepsilon - \theta ) + (1- \theta) \varepsilon
	\ln\left(\frac{\theta+\varepsilon}{1+\varepsilon-\theta} \right) + o(\varepsilon)\\
	& = (1 - \theta )\left(1 + \varepsilon
	\ln\left(\frac{\theta+\varepsilon}{1+\varepsilon-\theta} \right)\right) +\e+ o(\varepsilon).	
	\end{split}
	\end{equation*}
	To be precise, for $\theta\in [0,1]$ and $0 < \epsilon\leq \epsilon_0\leq 1$ (implying for example $\frac{\theta+\epsilon}{1+\epsilon-\theta} \leq \frac{2}{\epsilon}$), we will show that terms dropped in the third, fourth and fifth lines are all
	at most some $O((\ln(\frac{1}{\epsilon})\cdot \epsilon)^2) = o(\epsilon)$, from which the lemma follows as the sum of these terms is at most $0.01\epsilon$ for $\e\leq \e_0$ and $\epsilon_0$ sufficiently small.
	
	Indeed, in the third line, we dropped 
	\begin{align*}
	\left(1+ \varepsilon - \theta\right) \cdot \sum_{i=2}^\infty 
	\frac{
		\left( \ln \left(\frac{\theta+\varepsilon}{1+ \varepsilon - \theta}\right) 
		\cdot \frac{\varepsilon}{1+2\varepsilon}
		\right)^i
	}{i!} \leq 2\cdot \sum_{i=2}^\infty \frac{(\ln(\frac{2}{\epsilon})\cdot \eps)^i}{i!} \leq \cdot \sum_{i=2}^\infty \frac{\left(\ln\left(\frac{2}{\epsilon}\right)\cdot \eps\right)^i}{i^2} = O((\ln(1/\e)\cdot \eps)^2),
	\end{align*} 
	where the last step used that $\ln(1/\eps)\cdot \epsilon\leq 1$ holds for all $\epsilon\geq 0$.
	In the fourth line, we dropped 
	\begin{align*}\epsilon\cdot \ln\left(\frac{\theta+\epsilon}{1+\epsilon-\theta}\right)\cdot \frac{\epsilon}{1+2\epsilon} \leq \epsilon^2\cdot \ln\left(2/\e\right) = O((\ln(1/\e)\cdot \eps)^2).
	\end{align*}
	Finally, in the fifth line, we dropped 
	\begin{align*}
	(1-z)\cdot \left(\epsilon-\frac{\epsilon}{1+2\epsilon}\right)\cdot \ln\left(\frac{\theta+\epsilon}{1+\epsilon-\theta}\right) & \leq 1\cdot (\epsilon^2/(1+2\eps))\cdot \ln\left(2/\e\right) = O((\ln(1/\e)\cdot \eps)^2).\qedhere
	\end{align*} 
\end{proof}

Given this more manageable form for $f$, we can now turn to prove \Cref{lem:propertiesWW}, restated below.

\WWproperties*

\begin{proof}
	We begin by getting a generic upper bound for $z_u$. We note that each edge $e$ is matched by \Cref{algo:wwrounding} with probability at most $x_e$ by \Cref{line:second-pick-drop}. Therefore, $u$ is matched before $v$ arrives with probability at most $x_u := \sum_{w\in N_v(u)\setminus \{v\}} x_{wu}$, the fractional degree of $u$ before $v$ arrives.
	Therefore, by \Cref{refined-ww-xu-bound}, the probability that $u$ is free is at least
\begin{equation}
\Pr[u \mbox{ free when $v$ arrives}] \geq 1-x_u \geq 1-\frac{y_u + f(1-y_u)}{\beta},
\end{equation}
from which, together with the definition of $x_{uv} = \frac{1}{\beta}(\theta - y_u)^+ \left(1 + \frac{1-\theta}{f(\theta)}\right)$, we obtain the following upper bound on $z_u$:
\begin{equation}\label{eq:z-upper-bound}
z_u = \frac{x_{uv}}{\Pr[\text{$u$ is free when $v$ arrive}]} \leq  \frac{\frac{1}{\beta}(\theta - y_u)^+ \left(1 + \frac{1-\theta}{f(\theta)}\right)}{1 - \frac{y_u+f(1-y_u)}{\beta}} = \frac{(\theta - y_u)\left(1 + \frac{1-\theta}{f(\theta)}\right)}{\beta - (y_u + f(1-y_u))}.
\end{equation}
		
We start by upper bounding $\sum_{u\in N_v(v)} z_u$, giving a bound which will prove useful in the proofs of both {\bf (1)} and {\bf(2)}. Recall that $\theta$ is defined as the largest $\theta\leq 1$ such that 
\begin{equation}\label{eq:def-theta}
	\sum_{u\in N_v(v)} (\theta-y_u)^+ \leq f(\theta).
\end{equation}

Summing \eqref{eq:z-upper-bound} over all $u\in N_v(v)$, we find that
	\begin{align*}
	\sum_{u\in N_v(v)} z_u& 
	\leq \sum_{u\in N_v(v)}\frac{(\theta-y_u)^+\cdot (1+\frac{1-\theta}{f(\theta)})}{\beta-(\theta+f(1-\theta))}&& \text{($f(\cdot)$ is non-increasing, by~\Cref{f-non-increasing})}\\
	&\leq \frac{f(\theta)+1-\theta}{\beta-(\theta+f(1-\theta))} && \text{(by \eqref{eq:def-theta} and $\beta\geq \beta^*(f) = 1+f(0) \geq \theta+f(1-\theta)$)}
	\end{align*}
We therefore wish to upper bound $\frac{f(\theta) + 1-\theta}{\beta - \theta - f(1-\theta)}$. To this end let $\gamma(\theta, \e) :=\e \ln \left(\frac{\theta+\e}{1+\e-\theta}\right)$. Before proceeding to the proof, it would be useful to summarize some properties of the function $\gamma(\theta, \e)$. 
\begin{enumerate}
\item $ \gamma(\theta, \e) = - \gamma(1-\theta, \e) $ for all $\theta \in [0,1]$ \label{gamma:property4}.
\item For $c$, $\e_0$ sufficiently small we have
for all $\theta \in [0,c)$ that $\gamma(\theta,\eps) \leq \e \ln \left(\frac{c+\e}{1+\e-c}\right) \leq -20\cdot \eps$, and
 for all $\theta \in (1-c,1]$ that $ \gamma(\theta, \e) \geq \e \ln \left(\frac{1-c+\e}{1+\e-(1-c)}\right) \geq 20\cdot \eps$. \label{gamma:property1}
\item $\gamma (\theta, \e)\cdot (1 - 2 \theta) \leq 0$ for $\theta \in [0,1]$, since $\gamma(\theta,\epsilon)\leq 0$ for $\theta\leq 1/2$ and $\gamma(\theta,\epsilon)\geq 0$ for $\theta \geq 1/2$.
 \label{gamma:property2}
\item $\theta \cdot  \gamma (\theta, \e)  \geq -\e$ for all $\theta \in [0,1]$. \label{gamma:property3}

The last property follows from $ \ln \left(\frac{1+\e-\theta}{\theta+\e}  \right) \leq \ln \left(\frac{1+\e+\theta}{\theta+\e}  \right) \leq \ln \left(1 + \frac{1}{\theta+\e}  \right) \leq \frac{1}{\theta + \e}  \leq \frac{1}{\theta}$, which implies in particular that $\theta \cdot \gamma(\theta, \e) =  \theta \cdot \e \cdot \left(- \ln \left( \frac{1+\e-\theta}{\theta + \e}\right) \right)  \geq -\e$.
\end{enumerate}

We will use $\gamma$ as shorthand for  $\gamma(\theta, \e)$.  Recalling that $\beta=2-\e$ and using~\Cref{lm:taylor},  we have:
	\begin{equation}\label{eq:2isdighg}
	\begin{split}
	\frac{f(\theta)+1-\theta}{\beta-(\theta+f(1-\theta))}&\leq \frac{(1-\theta)\left(1+\e \ln \left(\frac{\theta+\e}{1+\e-\theta}\right)\right)-\theta+1+1.01\e}{2-\e - \theta-\theta\left(1+\e \ln \left(\frac{1-\theta+\e}{\theta+\e}\right)\right)-1.01\e}\\	
	& \leq \frac{(1-\theta)(2+\gamma)+2\e}{2-2\theta+\theta \gamma-3\e}\\
	& =	1+\frac{\gamma(1-2\theta)+5\e}{2-2\theta+\theta\gamma-3\e}.
\end{split}
\end{equation} 

We will continue by proving that the second term is negative. First we prove that the denominator is positive. To this end, first consider the case when $\theta \in [0,c)$. In this case for $\e_0, c$ sufficiently small one has that:  $2-2\theta+\theta\gamma-2\e > 2 - 2 \theta -\e -2 \e > 0$ from \cref{gamma:property3}. Moreover, when  $\theta \in (1-c, 1]$ one has that $\theta > \frac{1}{2}$ (since $c$ is small) and $\gamma \geq 20 \e$ from \cref{gamma:property1}. Thus $2-2\theta+\theta\gamma-2\e \geq \theta \gamma - 2 \e \geq \frac{1}{2} \cdot 20 \e -3 \e = 7 \e > 0 $. Now, it remains to prove that the numerator is always negative. When $\theta \in [0,c)$ we have that $1 - 2 \theta \geq 3 /4$(since $c$ is small) and $\gamma \leq -20 \e$ from \cref{gamma:property1}, therefore $\gamma (1- 2 \theta) + 5\e \leq \gamma \cdot \frac{3}{4}+5 \cdot (-\frac{\gamma}{20}) = \frac{\gamma}{2} < 0 $. In the case where $\theta \in (1-c, 1]$, we have that $1 - 2 \theta <- 3/4$, and $\theta > 1/2 $ (since $c$ is small), and $\gamma \geq 20 \e$ from \cref{gamma:property1}, thus $\gamma (1 - 2 \theta) + 5 \e \leq -\frac{3}{4} \cdot 20 \e  + 5\e = -10 \e < 0 $.

	We now turn to {\bf (2)}.  We assume that $\theta\in (c, 1-c)$, since otherwise the claim is trivial, by {\bf (1)}. We have by~\eqref{eq:2isdighg} that  $\frac{f(\theta)+1-\theta}{\beta-(\theta-f(1-\theta))} \leq 	1+\frac{\gamma(1-2\theta)+5\e}{2-2\theta+\theta\gamma-3\e}$. We have that  $\gamma (1 - 2 \theta ) + 5 \e \leq 5 \e$ from \cref{gamma:property2}. Furthermore, using \cref{gamma:property3} we have that $2 - 2 \theta + \theta \gamma - 3 \e  \geq 2c + -4 \e > c$ for a sufficiently small $\e_0$. Overall, the second term is bounded above by $\frac{5}{c} \cdot \e < C \cdot \e$, for $C > \frac{5}{c} > \frac{1}{c}$ as required.

	We now prove {\bf (3)}. Note that by \textbf{(1)}, $\sum_{u\in N_v(v)}z_u > 1$ implies that $\theta\in (c, 1-c)$. Now, for every $u\in N_v(v)$, let $\alpha_u:=\frac{(\theta-y_u)^+}{f(\theta)}$, so that $y_u=\theta-f(\theta)\cdot \alpha_u$ if $y_u \leq \theta$.
	We also note that by definition of $\alpha_u$ and our choice of $\theta$, we have $\sum_{u\in N_v(v)} \alpha_u = \sum_{u\in N_v(v)} \frac{(\theta-y_u)^+}{f(\theta)} \leq 1$. 
	In the proof of {\bf(3)} and {\bf(4)} we will assume for notational simplicity that all $u\in N_v(v)$ have $y_u \leq \theta$, implying $z_u\geq 0$. Summing up \eqref{eq:z-upper-bound} over all $u\in N_v(v)$ and substituting in $\alpha_u$, we thus find that
	\begin{align*}
	\sum_{u\in N_v(v)} z_u&\leq \sum_{u\in N_v(v)} \frac{(\theta-y_u)^+(1+\frac{1-\theta}{f(\theta)})}{\beta-(y_u+f(1-y_u))}\\
	&=\sum_{u\in N_v(v)} \alpha_u \cdot  \frac{f(\theta)+1-\theta}{\beta-(y_u+f(1-y_u))}\\
	&\leq \sum_{u\in N_v(v)} \alpha_u \cdot  \frac{f(\theta)+1-\theta}{2-y_u-f(1-y_u)-2.01\e} && \text{(by~\Cref{lm:taylor} and $\beta=2-\eps$)}\\
	& \leq \sum_{u\in N_v(v)} \alpha_u \cdot \frac{f(\theta)+1-\theta}{2- 4\e -2y_u},
	\end{align*}
	In the last transition we used again (as in \cref{gamma:property3}) that $ y_u\cdot \e \ln \left(\frac{1-y_u+\e}{y_u+\e}\right) \leq \e$, which implies $f(\theta) \leq 1-\theta + \e$ for all $\theta\in [0,1]$. Substituting $y_u = \theta-f(\theta)\cdot \alpha_u$ into the above upper bound on $\sum_{u\in N_v(v)} z_u$, we get
	\begin{align}
	\sum_{u\in N_v(v)} z_u&\leq \sum_{u\in N_v(v)} \alpha_u \cdot \frac{f(\theta)+1-\theta}{2-4\eps-2\theta+2 f(\theta) \cdot \alpha_u} \nonumber \\
	&=\sum_{u\in N_v(v)} \alpha_u \cdot \frac{f(\theta)+1-\theta}{2-4\e-2\theta} - \sum_{u\in N_v(v)} \frac{(f(\theta)+1-\theta)\cdot 2f(\theta)\cdot \alpha_u^2}{(2-4\e-2\theta)\cdot (2-4\e-2\theta+2f(\theta)\cdot \alpha_u)}, \label{eq:23gewgewg}
	\end{align}
	using the elementary identity $\frac{1}{a+b} = \frac{1}{a} - \frac{b}{a(a+b)}$ for appropriate $a$ and $b$.
	Now, both terms in the last line of \eqref{eq:23gewgewg} can be significantly simplified, as follows. For the former term, again using that $f(\theta)\leq 1-\theta+\epsilon$, together with $\sum_{u\in N_v(v)} \alpha_u \leq 1$ noted above, we find that 
	\begin{align} \label{first-term-23gewgewg}
	\sum_{u\in N_v(v)}\alpha_u\cdot \frac{f(\theta) + 1-\theta}{2-4\epsilon-2\theta} & \leq \sum_{u\in N_v(v)}\alpha_u\cdot  \frac{{2+\epsilon-2\theta}}{{2-4\epsilon-2\theta}} = \sum_{u\in N_v(v)}\alpha_u\cdot \left( 1+\frac{5\eps}{2-4\epsilon-2\theta}\right) \leq 1+O(\eps),
	\end{align}
	where in the last step we used that $\theta\leq 1-c$ and $c$ is some fixed constant.
	For the second term in the last line of \eqref{eq:23gewgewg}, we note that 
	\begin{align}\label{second-term-23gewgewg}
		\sum_{u\in N_v(v)} \frac{(f(\theta)+1-\theta)\cdot 2f(\theta)\cdot \alpha_u^2}{(2-4\e-2\theta)\cdot (2-4\e-2\theta+2f(\theta)\cdot \alpha_u)} & = \Omega(1)\cdot \left(\sum_{u\in N_v(v)} \alpha_u^2 \right).
	\end{align}
	To see this, first note that for $\theta\in (c,1-c)$, the numerator of each summand of the LHS is at least $2f(c)^2 \cdot \alpha_u^2 \geq \Omega(\alpha_u^2)$, since $f$ is decreasing by \Cref{f-non-increasing} and $f(c)\geq \frac{1}{2}\cdot (1+\eps-c) \geq \Omega(1)$ for $c$ and $\eps$ sufficiently small.
	To verify the first inequality of this lower bound for $f(c)$, recall that 
	$f(c) = \left(1+ \varepsilon - c\right) \cdot \left(
	\frac{c+\varepsilon}{1+\varepsilon-c}
	\right)^{\frac{\varepsilon}{1+2\varepsilon}}$. Now, for $\epsilon$ tending to zero and $c<1/2$, the term $\left(
	\frac{\theta+\varepsilon}{1+\varepsilon-\theta}
	\right)^{\frac{\varepsilon}{1+2\varepsilon}}$ tends to one as $\epsilon$ tends to zero. Therefore for $\epsilon$ sufficiently small we have $f(c)\geq \frac{1}{2}\cdot (1+\epsilon-c)$ for all $c < 1/2$.
	We now turn to upper bounding the denominator of 
	each summand in the LHS of \Cref{second-term-23gewgewg}.
	Indeed, substituting $y_u = \theta-f(\theta)\cdot \alpha_u$, we find that each such denominator is at most
	$(2-4\e-2\theta)\cdot (2-4\e-2\theta+2f(\theta)\cdot \alpha_u) \leq (1/2)\cdot (2-4\eps - 2y_u) \leq (1/2)\cdot(2-4\eps - 2c) \leq O(1)$ for $c$ and $\epsilon$ sufficiently small. Note that both numerator and denominator are positive for sufficiently small $c$ and $\epsilon_0$.
	Substituting the bounds of \eqref{first-term-23gewgewg} and \eqref{second-term-23gewgewg} into \eqref{eq:23gewgewg}, we obtain
	\begin{align}\label{z-u-alpha-u-bound}
	\sum_{u\in N_v(v)} z_u
	& \leq 1 + O(\e) -\Omega(1)\cdot \left(\sum_{u\in N_v(v)} \alpha_u^2\right).
	\end{align}

From \cref{z-u-alpha-u-bound} and $\sum_{u\in N_v(v)} z_u > 1$ by assumption of {\bf(3)}, we get that

\begin{equation}\label{eq:032utg23g}
\sum_{u\in N_v(v)} \alpha_u^2\leq C^{'} \e
\end{equation} 
for an absolute constant $C^{'}>1$, since otherwise $ \sum_{u \in N_v(v)} z_u\leq 1$. Finally, it remains to note that 
	\begin{align*}
	\sum_{u\in N_v(v)} z_u^2&=\sum_{u\in N_v(v)} \left(\frac{\alpha_u  \cdot (f(\theta)+1-\theta)}{\beta-(y_u+f(1-y_u))}\right)^2\\
	&\leq \left(\sum_{u\in N_v(v)} \alpha_u^2\right)\cdot \left(\frac{f(\theta)+1-\theta}{\beta-(\theta+f(1-\theta))}\right)^2 && \text{(by \Cref{f-non-increasing} and $y_u \leq \theta$)}\\
	&\leq \left(\sum_{u\in N_v(v)} \alpha_u^2\right)\cdot \left(\frac{f(\theta)+1-\theta}{\beta-(1-c+f(c))}\right)^2 && \text{(by \Cref{f-non-increasing} and $\theta\leq 1-c$)}\\	
	&\leq \left(\sum_{u\in N_v(v)} \alpha_u^2\right)\cdot \left(\frac{1-\theta+\eps+1-\theta}{\beta-(1-c+1-c+\eps)}\right)^2 && \text{($f(c)\leq 1-c+\eps$)}\\		
	& \leq \left(\sum_{u\in N_v(v)} \alpha_u^2\right)\cdot \frac{2}{2c-2\e}\\
	& \leq C \e,
	\end{align*}
	for some constant $C\geq \frac{2}{2c-2\epsilon}$. Thus $z_u^2 \leq \sum_{u\in N_v(v)} z_u \leq C\e$ and so $z_u \leq \sqrt{C \cdot \e } \leq C \sqrt{\e}$, as claimed. 
	
	We now prove {\bf (4)}. Since $\sum_{u \in N_v(v)} z_u > 1$ implies $\theta\in (c, 1-c)$ by {\bf (1)}, using the definition of $\alpha_u$'s from the proof of {\bf (3)} together with the fact that $\alpha_u\leq C^{'} \sqrt{\e}$ for every $u\in N_v(v)$ by~\eqref{eq:032utg23g} and the fact that $f(\theta)\leq 2$ for all $\theta\in [0,1]$ (by \Cref{lm:taylor}), we get that
	\begin{equation*}
	\begin{split}
	y_u=\theta-f(\theta)\cdot \alpha_u\in [c-O(\sqrt{\e}), 1-c]\subseteq [c/2, 1-c/2],
	\end{split}
	\end{equation*}
	for sufficiently small $\e_0>0$, as required.

    As for {\bf (5)}, simplifying \eqref{eq:z-upper-bound} and using the fact that $\theta - y_u \leq f(\theta)$, we get
\begin{align*}
  z_u &\leq \frac{\theta - y_u + 1 - \theta}{\beta - y_u - f(1 - y_u)} = \frac{1 - y_u}{\beta - y_u - f(1 - y_u)}.
\end{align*}

Recall from~\Cref{lm:taylor} that for all $\theta\in [0,1]$, we have $f(\theta) \leq (1 - \theta) \left(1+\varepsilon
\ln\left(\frac{\theta+\varepsilon}{1+\varepsilon-\theta} \right)\right) + 1.01\e$, 
which implies the following:
	\begin{enumerate}
	    \item \label{point1} For all $\theta \in [0, 1]$, we have $f(\theta) \leq 1 - \theta + \sqrt{\e}$, and
	    \item \label{point2} For $\theta < e^{-10}$, we have $f(\theta) \leq (1-\theta)(1 + \e(\ln( (e^{-10} + \e)/(1 - e^{-10} + \e)) + 1.01 \e \leq 1 - \theta - 2\e$.
	\end{enumerate}
Suppose that $y_u \leq 1 - e^{-10}$. Then using \Cref{point1}, we have
\begin{align*}
  z_u &\leq \frac{1-y_u}{\beta - y_u - f(1-y_u)} \leq \frac{1 - y_u}{2 - \e - y_u - y_u - \sqrt{\e}} \\
  &\leq \frac{ 1 - y_u} {2(1 - y_u) - 2 \sqrt{\e}} \leq 1/2+\frac{2\sqrt{\epsilon}}{2e^{-10}-2\sqrt{\epsilon}} \leq 1/2 + O(\sqrt{\e}).
\end{align*}
Now suppose that $y_u > 1 - e^{-10}$. Then $1 - y_u < e^{-10}$, and so by~\Cref{point2}, $f(1 - y_u) \leq 1 - y_u - 2\e$.
Thus we have
\begin{align*}
  z_u &\leq \frac{1-y}{\beta - y_u - f(1 -y_u)} \leq \frac{1 -y_u}{2 - \e - y_u - (y_u - 2\e)} = \frac{1-y_u}{2(1 - y_u) + \e} \leq 1/2,
\end{align*}
completing the proof.
\end{proof}

Finally, we rely on \Cref{lm:taylor} to prove that the fractional solution maintained by \Cref{line:fractional-to-round} is $1/\beta$ competitive, as implied by \Cref{ww-approx} and the following restated fact.

\betafine*
\begin{proof}
	Let us denote as before $f=f_{1+2\epsilon}$.
	Recall that $\beta^*(f) = 1+f(0)$. By \Cref{lm:taylor}, this is at most $1+f(0) \leq 1 + \left(1+\varepsilon
	\ln\left(\frac{\varepsilon}{1+\varepsilon} \right)\right) + 1.01\e$. 
	But for small enough $\epsilon$, we have that $\ln\left(\frac{\varepsilon}{1+\varepsilon} \right) \leq  -2.01$, implying that $1+f(0) \leq 2-\epsilon$, as claimed.
\end{proof}

\bibliographystyle{acmsmall}
\bibliography{refs.bib}

\end{document}